\providecommand{\tabularnewline}{\\}
\numberwithin{equation}{section}
\numberwithin{figure}{section}
\theoremstyle{plain}
\newtheorem{thm}{\protect\theoremname}
  \theoremstyle{definition}
  \newtheorem{example}[thm]{\protect\examplename}
  \theoremstyle{definition}
  \newtheorem{defn}[thm]{\protect\definitionname}
  \theoremstyle{remark}
  \newtheorem{rem}[thm]{\protect\remarkname}
  \theoremstyle{plain}
  \newtheorem{prop}[thm]{\protect\propositionname}
   \providecommand{\fg}{\ifdim\lastskip>\z@\unskip\fi~\frqq}%
  \providecommand{\definitionname}{Definition}
  \providecommand{\examplename}{Example}
  \providecommand{\propositionname}{Proposition}
  \providecommand{\remarkname}{Remark}
\providecommand{\theoremname}{Theorem}
\begin{document}

\title{\textsf{Gonosomal algebra}}

\maketitle
\begin{center}
\textsf{\textbf{Richard Varro}} \medskip{}

\par\end{center}

\begin{center}
{\footnotesize{}Institut de Math\'ematiques et de Mod\`elisation
de Montpellier, Universit\'e de Montpellier, }
\par\end{center}{\footnotesize \par}

\begin{center}
{\footnotesize{}35095 Montpellier Cedex 5, France.}
\par\end{center}{\footnotesize \par}

\begin{center}
{\footnotesize{}Département de Math\'ematiques et Informatique Appliqu\'ees,
Universit\'e de Montpellier III, }
\par\end{center}{\footnotesize \par}

\begin{center}
{\footnotesize{}34199 Montpellier Cedex 5, France}
\par\end{center}{\footnotesize \par}

\begin{center}
\textsf{\Large{}\bigskip{}
} 
\par\end{center}

\textbf{\footnotesize{}Abstract. }{\footnotesize{}We introduce the
gonosomal algebra. Gonosomal algebra extend the evolution algebra
of the bisexual population (EABP) defined by Ladra and Rozikov. We
show that gonosomal algebras can represent algebraically a wide variety
of sex determination systems observed in bisexual populations. We
illustrate this by about twenty genetic examples, most of these examples
cannot be represented by an EABP. We give seven algebraic constructions
of gonosomal algebras, each is illustrated by genetic examples. We
show that unlike the EABP gonosomal algebras are not dibaric. We approach
the existence of dibaric function and idempotent in gonosomal algebras.}\textbf{\footnotesize{}
}{\footnotesize{}\medskip{}
 }{\footnotesize \par}

{\footnotesize{}2000 Mathematical Subject Classification. Primary
: 17D92.}{\footnotesize \par}

{\footnotesize{}Key words. Baric algebra, dibaric algebra, commutative
duplication, non commutative duplication, bisexual population, gonosomal
gene, sex determining systems.}{\footnotesize \par}

\section{Introduction}

To construct an algebraic model of sex-linked inheritance (i.e. controlled
by sex), it is necessary to solve two problems: the asymmetry of the
transmission of genes and the wide variety of sex-determination systems.\medskip{}

In populations with separate sexes, bisexual and diploid, the sexual
differentiation is controlled in the majority of cases by a pair of
chromosomes: the sex chromosomes also called gonosomes (or heterochromosomes,
or heterosomes). The gonosomes often have different shapes (dimorphism)
which induces an asymmetry in the transmission of sex-linked traits.

In the animal kingdom, sex is determined by different systems, we
distinguish five systems: $XY$, $WZ$, $X0$, $Z0$ and $WXY$, which
can be added their multiple variants.

In the $XY$ sex-determination system observed in most mammals, homogametic
$XX$ individuals are females and heterogametic $XY$ are males. The
$XY$-system has multiple variants as $X_{1}X_{2}Y$ and $XY_{1}Y_{2}$.
In the  $X_{1}X_{2}Y$-system, females are $X_{1}X_{1}X_{2}X_{2}$
and males $X_{1}X_{2}Y$ while in the system $XY_{1}Y_{2}$ females
are $XX$ and males $XY_{1}Y_{2}$. Systems with more gonosomes as
$X_{1}X_{2}X_{3}X_{4}X_{5}Y$ or $X_{1}X_{2}X_{3}X_{4}X_{5}Y_{1}Y_{2}Y_{3}Y_{4}Y_{5}$
have also been observed. 

The $WZ$-system is mainly found in birds, females are heterogametic
$WZ$ while males are homogametic $ZZ$. This system also accepts
multiple variants such that $W_{1}W_{2}Z$, $WZ_{1}Z_{2}$, $W_{1}W_{2}Z_{1}Z_{2}$.

The $X0$-system is mainly observed in Hymenoptera (bees, wasps, ...).
In this system sex is controlled by a single $X$ chromosome, females
have two $X$ chromosomes, their genotype is $XX$, while males have
one, they are $X0$ (zero indicates the absence of the second $X$
chromosome). It has been observed in several species of spiders the
following systems $X_{1}X_{2}0$, $X_{1}X_{2}X_{3}0$, $X_{1}X_{2}X_{3}0$,
$X_{1}X_{2}X_{3}X_{4}0$.

As for $Z0$ sex determination system mainly observed among Lepidoptera,
females are $Z0$ and males $ZZ$. 

Finally, the $WXY$-system observed in several species of tropical
fish is more complex: an individual having a gonosome $Y$ is male
unless it is coupled with a $W$ chromosome, so in this system individuals
with genotypes $XY$ or $YY$ are males and those of genotypes $XX$,
$WX$ and $WY$ are female. 

However we shall see later in this work through numerous examples
that there are many other sex determination systems.

\medskip{}

A gene is said sex-linked or gonosomal if it is located on the sex
chromosomes. Because to the dimorphism of gonosomes there are two
kinds of gonosomal gene. Indeed, in the $XY$ and $WZ$ systems two
parts on gonosomes are observed: one part homologous (or pseudo-autosomal)
where genes are common to both gonosome types and a differential part
where a locus located on a gonosome has no counterpart on the other.
There are therefore two types of gonosomal genes: a gene is pseudo-autosomal
(or partially sex-linked) if the locus is located on the homologous
parts; it is gonosomal (or completely sex-linked) when the locus is
on the differential part of the sex chromosomes (cf. \cite{Heuch-75}).\medskip{}

In this work, after recalling the algebraic models of sex-linked inheritance,
we give a genetic example that can not be represented by these algebras,
this leads us to extend these algebras and define the gonosomal algebras.
Then we give seven algebraic constructions of gonosomal algebras and
illustrating them with examples we see that these algebras can represent
algebraically a wide variety of genetic phenomena related to sex as{\footnotesize{}:}\emph{\footnotesize{}
}\emph{i)} temperature-dependent sex determination; \emph{ii}) sequential
hermaphrodism; \emph{iii}) androgenesis; \emph{iv}) parthenogenesis;
\emph{v}) gynogenesis; \emph{vi}) bacterial conjugation \emph{vii})
cytoplasmic inheritance; \emph{viii}) sex-linked lethal genes; \emph{ix})
multiple sex-chromosome systems; \emph{x}) heredity in the WXY-system;
\emph{xi}) heredity in the WZ-system with male feminization; \emph{xii})
XY-system with fertile XY-females; \emph{xiii}) X-linked sex-ratio
distorter; \emph{xiv}) kleptogenesis;\emph{ xv}) genetic processes
(mutation, recombination, transposition) influenced by sex; \emph{xvi})
heredity in ciliates; \emph{xvii}) genomic imprinting; \emph{xviii})
$X$-inactivation; \emph{xix}) sex determination by gonosome elimination;
\emph{xx}) sexual reproduction in triploid; \emph{xxi}) polygenic
sex determination; \emph{xxii}) cytoplasmic heredity. We show that
gonosomal algebras are not in general dibaric and we give conditions
for the existence of a dibaric function and an idempotent in these
algebras.

\section{Preliminaries}

There are three algebraic models for a gonosomal gene inheritance.\medskip{}

The first method was proposed by Etherington \cite{Ether-41} in the
case of a diallelic sex-linked gene in the $XY$-system. It has been
used by Gonshor in the case of a diallelic sex-linked gene with mutation
\cite{Gonsh-60} and for a multiallelic sex-linked gene \cite{Gonsh-65}.
This model has been also described and studied by W\"orz-Busekros
\cite{WB-74}, \cite{WB-75} in the general case of a multiallelic
gene linked to sex in the XY-system. Starting from a basis $\left(a_{i}\right)_{1\leq i\leq n}\cup\left\{ Y\right\} $
of an algebra $A$ equipped with the multiplication:
\[
a_{i}a_{j}=\sum_{r=1}^{n}\gamma_{ijr}a_{r},\quad a_{i}Y=\sum_{r=1}^{n}\gamma_{ir}a_{r}+\frac{1}{2}Y,\quad YY=0,
\]
with $\sum_{r=1}^{n}\gamma_{ijr}=1$, $\sum_{r=1}^{n}\gamma_{ir}=\frac{1}{2}$
for all $1\leq i,j\leq n$; where the structure constant $\gamma_{ijr}$
(resp. $\gamma_{ir}$) represents the frequency of gametic type $e_{r}$
produced by a female (resp. male) of genotype $a_{i}a_{j}$ (resp.
$a_{i}Y$). Let $\mathcal{Z}$ be a vector space, $\left(a_{i}\otimes a_{j},a_{i}\otimes Y\right)_{1\leq i\leq j\leq n}$
a basis of $\mathcal{Z}$ where $a_{i}\otimes a_{j}$, $a_{i}\otimes Y$
represent respectively female and male genotypes, and we define on
$\mathcal{Z}$ the commutative structure algebra:\foreignlanguage{french}{
\begin{eqnarray}
\left(a_{i}\otimes a_{j}\right)\left(a_{k}\otimes Y\right) & = & \sum_{0\leq r\leq s\leq n}\left(\gamma_{ijr}\gamma_{ks}+\gamma_{ijs}\gamma_{kr}\right)a_{r}\otimes a_{s}+\frac{1}{2}\sum_{1\leq r\leq n}\gamma_{ijr}a_{r}\otimes Y,\nonumber \\
\left(a_{i}\otimes a_{j}\right)\left(a_{p}\otimes a_{q}\right) & = & 0,\label{eq:loiWB}\\
\left(a_{i}\otimes Y\right)\left(a_{i}\otimes Y\right) & = & 0.\nonumber 
\end{eqnarray}
}

The algebra $\mathcal{Z}$ is the zygotic algebra for sex-linked inheritance.

\medskip{}

The second definition is due to Gonshor \cite{Gonsh-60}, it does
not depend on a given base, it is built from the gametic state by
the sex duplication method. Starting from a baric algebra $\left(A,\omega\right)$,
we define on the space $A\otimes A\oplus A$ the structure algebra:
\begin{equation}
\left(x\otimes y\oplus z\right)\left(x'\otimes y'\oplus z'\right)=\frac{1}{2}\left(xy\otimes z'+x'y'\otimes z\right)\oplus\frac{1}{2}\left(\omega\left(z'\right)xy+\omega\left(z\right)x'y'\right).\label{eq:Gonshor}
\end{equation}

The resulting algebra is called the sex-linked duplicate.

In \cite{WB-75}, W\"orz-Busekros showed that the Etherington and
Gonshor definitions are equivalent.\medskip{}

Recently Ladra and Rozikov \cite{LR-10} introduced a more general
definition. Starting from the canonical basis $\left(e_{1},\text{\dots},e_{n+\nu}\right)$
of $\mathbb{R}^{n+\nu}$, $e_{i}^{\left(f\right)}=e_{i}$, $i=1,\ldots,n$
and $e_{i}^{\left(m\right)}=e_{n+i}$, $i=1,\ldots,\nu$, they provide
$\mathbb{R}^{n+\nu}$ with the structure algebra:
\begin{eqnarray*}
e_{i}^{\left(f\right)}e_{p}^{\left(m\right)}=e_{p}^{\left(m\right)}e_{i}^{\left(f\right)} & = & \frac{1}{2}\left(\sum_{k=1}^{n}P_{ip,k}^{\left(f\right)}e_{k}^{\left(f\right)}+\sum_{l=1}^{\nu}P_{ip,l}^{\left(m\right)}e_{l}^{\left(m\right)}\right),\\
e_{i}^{\left(f\right)}e_{j}^{\left(f\right)} & = & 0,\quad\qquad1\leq i,j\leq n,\\
e_{p}^{\left(m\right)}e_{q}^{\left(m\right)} & = & 0,\quad\qquad1\leq p,q\leq\nu,\\
\sum_{k=1}^{n}P_{ip,k}^{\left(f\right)}=\sum_{l=1}^{\nu}P_{ip,l}^{\left(m\right)} & = & 1,\quad\left(1\leq i\leq n,1\leq p\leq\nu\right).
\end{eqnarray*}

This algebra is named evolution algebra of the bisexual population
by their authors.

\section{Gonosomal algebra, definition and examples}

In this section we give a definition of a sex-linked algebra that
extends the one given in \cite{LR-10}. This extension of the definition
of \cite{LR-10} finds its source in the following example.

\subsection{Introductory example.}
\begin{example}
\label{exa:hemophilia} \emph{Heredity of hemophilia}.

Hemophilia is a genetic disorder linked to the $X$ chromosome, it
is due to mutations in two genes located at the end of the long arm
of gonosome $X$. This is a lethal recessive genetic disease that
is lethal in the homozygous state, it follows that if $X^{h}$ denotes
the $X$ chromosome carrying hemophilia, there are only two female
genotypes: $XX$ and $XX^{h}$ (genotype $X^{h}X^{h}$ is lethal)
and two male genotypes: $XY$ and $X^{h}Y$. The results of the four
kinds of crosses are:
\begin{alignat*}{2}
XX\times XY & \rightarrowtail\tfrac{1}{2}XX,\tfrac{1}{2}XY; & \hspace{1cm}XX\times X^{h}Y & \rightarrowtail\tfrac{1}{2}XX^{h},\tfrac{1}{2}XY;\\
XX^{h}\times XY & \rightarrowtail\tfrac{1}{4}XX,\tfrac{1}{4}XX^{h},\tfrac{1}{4}XY,\tfrac{1}{4}X^{h}Y; & XX^{h}\times X^{h}Y & \rightarrowtail\tfrac{1}{3}XX^{h},\tfrac{1}{3}XY,\tfrac{1}{3}X^{h}Y.
\end{alignat*}

Algebraically we represent this by the following commutative algebra
defined on the basis $\left(e_{1},e_{2},\widetilde{e}_{1},\widetilde{e}_{2}\right)$
of a $\mathbb{R}$-vector space:
\begin{alignat*}{2}
e_{1}\widetilde{e}_{1} & =\tfrac{1}{2}e_{1}+\tfrac{1}{2}\widetilde{e}_{1}, & \hspace{1cm}e_{1}\widetilde{e}_{2} & =\tfrac{1}{2}e_{2}+\tfrac{1}{2}\widetilde{e}_{1},\\
e_{2}\widetilde{e}_{1} & =\tfrac{1}{4}e_{1}+\tfrac{1}{4}e_{2}+\tfrac{1}{4}\widetilde{e}_{1}+\tfrac{1}{4}\widetilde{e}_{2}, & e_{2}\widetilde{e}_{2} & =\tfrac{1}{3}e_{2}+\tfrac{1}{3}\widetilde{e}_{1}+\tfrac{1}{3}\widetilde{e}_{2},\\
e_{i}e_{j} & =0 & \widetilde{e}_{i}\widetilde{e}_{j} & =0,\hspace{1cm}\left(i,j=1,2\right).
\end{alignat*}

It is clear that this algebra is not an evolution algebra of the bisexual
population as defined in \cite{LR-10}.
\end{example}

\subsection{Definition of a gonosomal algebra.}

\textcompwordmark{}\medskip{}

The above example leads us to put the following definition.
\begin{defn}
\label{def:Alggonosom}A $K$-algebra $A$ is gonosomale if there
is a basis $\mathcal{B}=\left\{ e_{i};1\leq i\leq n\right\} \cup\left\{ \widetilde{e}_{p};1\text{\ensuremath{\leq}}p\text{\ensuremath{\leq}}m\right\} $
of $A$ verifying for all $1\text{\ensuremath{\leq}}i,j\text{\ensuremath{\leq}}n$
and $1\text{\ensuremath{\leq}}p,q\leq m$ : 
\begin{eqnarray*}
e_{i}\widetilde{e}_{p}=\widetilde{e}_{p}e_{i} & = & \sum_{k=1}^{n}\gamma_{ipk}e_{k}+\sum_{r=1}^{m}\widetilde{\gamma}_{ipr}\widetilde{e}_{r},\\
e_{i}e_{j} & = & 0,\\
\widetilde{e}_{p}\widetilde{e}_{q} & = & 0\\
\sum_{k=1}^{n}\gamma_{ipk}+\sum_{r=1}^{m}\widetilde{\gamma}_{ipr} & = & 1.
\end{eqnarray*}
The basis $\mathcal{B}$ is called gonosomal basis of $A$.
\end{defn}
In this definition the vectors of $\left(e_{i}\right)_{1\text{\ensuremath{\le}}i\text{\ensuremath{\le}}n}$
(resp. $\left(\widetilde{e}_{p}\right)_{1\text{\ensuremath{\le}}p\text{\ensuremath{\le}}m}$)
are interpreted as genetic types observed in females (resp. in males),
the structure constant $\gamma_{ipk}$ (resp. $\widetilde{\gamma}_{ipr}$)
represents the female (resp. male) proportion of type $e_{k}$ (resp.
$\widetilde{e}_{r}$) in the progeny of a female type $e_{i}$ with
a male type $\widetilde{e}_{p}$.
\begin{rem}
If in the definition \ref{def:Alggonosom} we have $\sum_{k=1}^{n}\gamma_{ipk}=\sum_{r=1}^{m}\widetilde{\gamma}_{ipr}$
for all $1\text{\ensuremath{\leq}}i\text{\ensuremath{\leq}}n$ and
$1\text{\ensuremath{\leq}}p\text{\ensuremath{\leq}}m$, then noting
$P_{ip,k}^{\left(f\right)}=2\gamma_{ipk}$ and $P_{ip,r}^{\left(m\right)}=2\widetilde{\gamma}_{ipr}$
we have $e_{i}\widetilde{e}_{p}=\sum_{k=1}^{n}P_{ip,k}^{\left(f\right)}e_{k}+\sum_{r=1}^{m}P_{ip,r}^{\left(m\right)}\widetilde{e}_{r}$
with $\sum_{k=1}^{n}P_{ip,k}^{\left(f\right)}e_{k}=\sum_{r=1}^{m}P_{ip,r}^{\left(m\right)}=1$,
we find the definition of the evolution algebra of the bisexual population
given in \cite{LR-10}.
\end{rem}
\smallskip{}

\begin{rem}
A gonosomal algebra is commutative by definition. It is not in general
associative, so for the algebra $A$ of example \ref{exa:hemophilia}
we have: $e_{1}\left(\widetilde{e}_{1}e_{2}\right)-\left(e_{1}\widetilde{e}_{1}\right)e_{2}=\frac{1}{4}\left(\widetilde{e}_{1}+\widetilde{e}_{2}\right)\neq0$. 

The gonosomal algebras form a new class of non-associative algebra,
indeed the algebra $A$ given in example\ref{exa:hemophilia}, is
not Lie because $e_{1}\left(\widetilde{e}_{1}e_{2}\right)+\widetilde{e}_{1}\left(e_{2}e_{1}\right)+e_{2}\left(e_{1}\widetilde{e}_{1}\right)\neq0$
and taking in $A$, $x=\frac{2}{3}e_{1}+\frac{1}{3}\widetilde{e}_{1}$
and $y=e_{1}$ we obtain $x^{2}\left(yx\right)-\left(x^{2}y\right)x=$$\frac{1}{54}\left(e_{1}+\widetilde{e}_{1}\right)$
therefore $A$ is not either Jordan, nor power associative because
$x^{2}x^{2}-x^{4}=-\frac{1}{162}\left(e_{1}+\widetilde{e}_{1}\right)$,
nor alternative because we have $x^{2}y-x\left(xy\right)=$$\frac{1}{36}\left(e_{1}+\widetilde{e}_{1}\right)$. 
\end{rem}

\subsection{Examples of gonosomal algebras.}

\textcompwordmark{}

\medskip{}

We will show that the gonosomal algebras allows to represent algebraically
a wide variety of genetic phenomena linked to sex. In these examples,
unless otherwise specified, the term genetic type represents as well
alleles, genotypes or collection of genes whose loci are on gonosomes.
We start with some unusual examples.
\begin{example}
\emph{Temperature-dependent sex determination}. 

In reptiles (snakes, crocodiles, turtles, lizards) we find two types
of sex determination, either a genotypic determination controlled
according to the species by $XY$- or $ZW$-system, either a determination
depending on the incubation temperature of eggs (or TSD: temperature-dependent
sex determination). TSD is observed in all species of crocodilians
and most turtles. TSD is controlled by three temperature ranges, the
eggs subject to feminizing temperatures (TF) (resp. masculinising
(TM)) give rise to 100\% or a majority of females (resp. males) and
those subject to transition temperatures (TRT) provide 50\% females
and 50\% males.

Algebraically, we consider the space $A$ with basis $\left(e_{i},\widetilde{e}_{i}\right)_{1\leq i\leq n}$
where $e_{i}$ (resp. $ $$\widetilde{e}_{i}$) are female (resp.
male) genetic types present in a population and subject to TSD. We
note $\tau_{1}$, $\tau_{2}$ and $\tau_{3}$ the probability that
eggs are incubated at TF, TM and TRT respectively, thus we have $\tau_{1},\tau_{2},\tau_{3}\geq0$
and $\tau_{1}+\tau_{2}+\tau_{3}=1$. For each $r=1,2$ ($r=1$ for
TF, $r=2$ for TM) we note $\mu_{r}$ and $\widetilde{\mu}_{r}$ respectively
the proportions of females and males arising from eggs placed in the
environment $r$, thus we have $\mu_{r}+\widetilde{\mu}_{r}=1$ with
$\mu_{1}>\widetilde{\mu}_{1}\geq0$ and $0\leq\mu_{2}<\widetilde{\mu}_{2}$.
Finally, if for all $1\leq i,p\leq n$ we denote $\theta_{ipk}$ the
egg proportion of $e_{k}$ type in the laying of a female $e_{i}$
crossed with a male $\widetilde{e}_{p}$, thus $\sum_{k=1}^{n}\theta_{ipk}=1$.
Then the space $A$ equipped with the following algebra structure:
\begin{eqnarray*}
e_{i}e_{j} & = & \widetilde{e}_{p}\widetilde{e}_{q}=0,\\
e_{i}\widetilde{e}_{p} & = & \left(\mu_{1}\tau_{1}+\mu_{2}\tau_{2}+\frac{1}{2}\tau_{3}\right)\sum_{k=1}^{n}\theta_{ipk}e_{k}+\left(\widetilde{\mu}_{1}\tau_{1}+\widetilde{\mu}_{2}\tau_{2}+\frac{1}{2}\tau_{3}\right)\sum_{k=1}^{n}\theta_{ipk}\widetilde{e}_{k}
\end{eqnarray*}
is a gonosomal algebra and the product $ $$e_{i}\widetilde{e}_{p}$
gives the genetic distribution of progeny of a female $e_{i}$ with
a male $\widetilde{e}_{p}$.
\end{example}
\smallskip{}

\begin{example}
\emph{Sequential hermaphroditism.}

It is observed in many species of fish that the sex of an individual
changes during his life. We distinguish two cases of sequential hermaphroditism:
protogyny and protoandry. In protogyny (9\% of fish families) individuals
are first female then become male, the opposite occurs in the protoandry
(1\% of fish families) when a male change sex to female. This sex
change occurs as a result of the disappearance of the dominant male
or female.

This situation can be represented by a gonosomal algebra. For this
we consider in a protogynous hermaphrodite population, a gene whose
types are noted $e_{1},\ldots,e_{n}$ when they are observed in females
and $ $$\widetilde{e}_{1},\ldots,\widetilde{e}_{n}$ in males. The
progeny of a female type $e_{i}$ with a male $\widetilde{e}_{j}$
consists of individuals having a female phenotype. Therefore if $\lambda_{ijk}$
is the frequency of type $e_{k}$ in the progeny of the cross between
$e_{i}$ with $\widetilde{e}_{j}$ before sexual inversion, then the
distribution of types in offspring is $\sum_{k=1}^{n}\lambda_{ijk}e_{k}$.
And if $0<\theta_{k}<1$ denotes the rate of sexual invertion by generation
of a female $e_{k}$, the distribution of types becomes $e_{i}\widetilde{e}_{j}=\widetilde{e}_{j}e_{i}=\sum_{k=1}^{n}\left(1-\theta_{k}\right)\lambda_{ijk}e_{k}+\sum_{k=1}^{n}\theta_{k}\lambda_{ijk}\widetilde{e}_{k}$
. 
\end{example}
\smallskip{}

\begin{example}
\emph{Androgenesis.}

Androgenesis is a rare process observed in fish, molluscs and insects.
Androgenesis is the production of an offspring containing exclusively
the nuclear genome of the fathering male via the maternal eggs. During
fertilization spermatozoon penetrates the ovum which causes the expulsion
of the female genome. In some species the eggs give haploid individuals,
in others chromosomes are immediately duplicated after fertilization
to give diploid individuals and in other species, there is a polyspermy
that is to say, the ovum is simultaneously fertilized by several spermatozoa,
in this case we obtain diploid or polyploid individuals.

For example, in bivalves of genus \emph{Corbiculla,} androgenesis
is observed in four species which are simultaneous hermaphrodite,
i.e. male and female reproductive organs are mature at the same time.
In these species ova are fertilized by unreduced spermatozoa, i.e.
whose genetic composition is identical to that of the somatic cells
of the male transmitter. This mode of reproduction allows the coexistence
and interbreeding of individuals di-, tri- and tetraploid within the
same species of \emph{Corbiculla}.

Algebraically, let be $e_{1},\ldots,e_{n}$ and $\widetilde{e}_{1},\ldots,\widetilde{e}_{n}$
respectively the ova and spermatozoa genetic types carried by \emph{Corbiculla}.
Let $\theta_{i}$ the proportion of eggs among the gametes of an individual
of type $i$, we define on the space $A$ with basis $\left(e_{i},\widetilde{e}_{i}\right)_{1\leq i\leq n}$
the products $e_{i}\widetilde{e}_{j}=\widetilde{e}_{j}e_{i}=\theta_{j}e_{j}+\left(1-\theta_{j}\right)\widetilde{e}_{j}$,
other products are zero, then the algebra $A$ is gonosomal. 
\end{example}
\smallskip{}

\begin{example}
\emph{Sex determination by thelytokous parthenogenesis or by gynogenesis.}

In some species a female gamete develops an embryo without fertilization
by a male gamete, this form of asexual reproduction is called parthenogenesis.
There are several forms of parthenogenesis, one is the thelytoky.
In thelytokous parthenogenesis a diploid female gives birth only to
diploid females and therefore the population consists solely of females. 

Another phenomenon related to the thelytokous parthenogenesis is gynogenesis
(or pseudogamy or merospermy) which is observed in fish, amphibians
and insects. Gynogenesis requires the fertilization of the ovum by
a sperm cell of a close species what activates its development in
zygote, the male genome degenerates or is eliminated, thus it is not
expressed in the offspring which consists only of females.

To represent algebraically these two situations, we consider the gonosomal
$\mathbb{R}$-algebra defined on a basis $\left(e_{i}\right)_{1\leq i\leq n}\cup\left(\widetilde{e}\right)$
by $e_{i}\widetilde{e}=\sum_{i=1}^{n}\mu_{ik}e_{k}$ where $e_{1},\ldots,e_{n}$
symbolize female genotypes and $\mu_{ik}$ the mutation rate from
type $e_{i}$ to type $e_{k}$.
\end{example}
\smallskip{}

\begin{example}
\emph{Bacterial conjugation}.
\end{example}
Bacterial conjugation is a transfer of genetic material between bacterial
cells, a donor (called male) to a recipient (called female), that
is why bacterial conjugation is often regarded as the bacterial equivalent
of sexual reproduction. It is controlled by a conjugative factor carried
by a plasmid. Plasmids are double-stranded, circular DNA molecules,
present in the cytoplasm of bacteria that replicate autonomously.
Plasmids are not essential to a normal activity of bacteria but they
carry genes that provide a selective advantage to the holder (antibiotic
resistance, increased pathogenicity, bacteriocins synthesis inhibiting
the growth of other bacteria, acquisition of new metabolic properties).
In addition to these genes, some plasmids carry a conjugative factor
composed of several genes that control their transfer to another bacterial
cell. The most studied conjugative factor is the factor $F$, it contains
genes that encode the synthesis of pili allowing a bacterium $F^{+}$
to dock with a bacterium $F^{-}$, surface exclusion genes that prevent
two bacteria $F^{+}$ to moor and genes that allow the synthesis and
transfer of one copy of the plasmid into a bacterium $F^{-}$. At
the end of the conjugation, the factor $F$ persists in the donor
bacterium which stays $F^{+}$ and a copy of the plasmid carrying
this factor is acquired by the recipient bacterium which becomes $F^{+}$.

Algebraically, let $a_{1},\ldots,a_{n}$ be the different types of
bacterial chromosomes and $b_{1},\ldots,b_{m}$ the plasmid types
carrying a conjugative factor observed in a colony of bacteria. If
we put $e_{i}=a_{i}$ and $\widetilde{e}_{p,q}=a_{p}\otimes b_{q}$,
then the space $A$ with the basis $\left\{ e_{i}\right\} \cup\left\{ \widetilde{e}_{p,q}\right\} $
and the commutative product $e_{i}e_{j}=0$, $\widetilde{e}_{p,q}\widetilde{e}_{r,s}=0$,
$e_{i}\widetilde{e}_{p,q}=\frac{1}{2}\widetilde{e}_{p,q}+\frac{1}{2}\widetilde{e}_{i,q}$
is a gonosomal algebra which models the bacterial conjugation. If
$0\leq\tau\leq1$ denote conjugation rate, starting from two populations
of bacteria $x=\sum_{i}\alpha_{i}e_{i}+\sum_{p,q}\widetilde{\alpha}_{p,q}\widetilde{e}_{p,q}$
and $y=\sum_{i}\beta_{i}e_{i}+\sum_{p,q}\widetilde{\beta}_{p,q}\widetilde{e}_{p,q}$
where $\sum_{i}\alpha_{i}+\sum_{p,q}\widetilde{\alpha}_{p,q}=1$ and
$\sum_{i}\beta_{i}+\sum_{p,q}\widetilde{\beta}_{p,q}=1$, then the
population obtained after conjugation of $x$ and $y$ is given by
the product $x\star y=\tau xy+\frac{1-\tau}{2}\left(x+y\right)$.

\section{Gonosomal algebras constructions}

The definition of gonosomal algebra is very general, it allows to
represent a wide variety of sex-linked inheritance situations, but
the counterpart is that we can not give an intrinsic definition similar
to that of Gonshor for zygotic algebra of sex-linked inheritance.

In the following we give seven methods to construct gonosomal algebras.
These constructions are obtained:

-- by reduction of gonosomal algebra;

-- from the duplicate of a baric algebra;

-- from two baric algebras;

-- from two algebra laws;

-- from three linear forms and three linear maps;

-- from a baric algebra and two linear maps;

-- from a baric algebra and a gonosomal algebra.

\medskip{}

We can see each construction as a particular gonosomal algebra class
that can be studied for its own interest. Each construction is illustrated
by genetic examples.

\medskip{}

In the following constructions we often use the the notion of baric
algebra. A $K$-algebra $A$ is baric if it admits a non trivial algebra
morphism $\omega:A\rightarrow K$, called weight morphism of $A$,
we note $\left(A,\omega\right)$ to indicate that $A$ is weighted
by $\omega$ and for $x\in A$, the scalar $\omega\left(x\right)$
is called the weight of $x$. We also use the following result (cf.
\cite{WB-80}, lemma 1.10) : a finite-dimensional $K$-algebra $A$
is baric if and only if $A$ has a basis $\left(e_{1},\ldots,e_{n}\right)$
such that $e_{i}e_{j}=\sum_{k=1}^{n}\gamma_{ijk}e_{k}$ with $\gamma_{ijk}\in K$
and $\sum_{k=1}^{n}\gamma_{ijk}=1$ $\left(i,j=1,\ldots,n\right)$. 

\smallskip{}

\subsection{Construction by reduction of gonosomal algebra.}

\textcompwordmark{}

Starting from a gonosomal algebra we can construct others by reducing
the gonosomal basis. In the result below, for any integer $k\geq1$
we note $\left[\!\left[1,k\right]\!\right]=\left\{ 1,\ldots,k\right\} $. 
\begin{prop}
\label{pro:Reductionbase}Let $A$ be a gonosomal $K$-algebra, $\left(e_{i}\right)_{i\in\left[\left[1,n\right]\right]}\cup\left(\widetilde{e}_{p}\right)_{p\in\left[\!\left[1,m\right]\!\right]}$
a gonosomal basis of $A$ with $e_{i}\widetilde{e}_{p}=\sum_{k=1}^{n}\gamma_{ipk}e_{k}+\sum_{r=1}^{m}\widetilde{\gamma}_{ipr}\widetilde{e}_{r}$.
If there is $I\subsetneqq\left[\!\left[1,n\right]\!\right]$ and $J\subsetneqq\left[\!\left[1,m\right]\!\right]$
such that for all $i\in\left[\!\left[1,n\right]\!\right]\setminus I$
and $p\in\left[\!\left[1,m\right]\!\right]\setminus J$ we have $\sigma_{ip}=1-\sum_{k\in I}\gamma_{ipk}+\sum_{r\in J}\widetilde{\gamma}_{ipr}\neq0$
, then the subspace spanned by $\left(e_{i}\right)_{i\in\left[\!\left[1,n\right]\!\right]\setminus I}\cup\left(\widetilde{e}_{p}\right)_{p\in\left[\!\left[1,m\right]\!\right]\setminus J}$
with multiplication 
\[
e_{i}*\widetilde{e}_{p}=\sigma_{ip}^{-1}\left(\sum_{k\in\left[\!\left[1,n\right]\!\right]\setminus I}\gamma_{ipk}e_{k}+\sum_{r\in\left[\!\left[1,m\right]\!\right]\setminus J}\widetilde{\gamma}_{ipr}\widetilde{e}_{r}\right)
\]
 and $e_{i}*e_{j}=\widetilde{e}_{p}*\widetilde{e}_{q}=0$ for all
$\left(i,j\in\left[\!\left[1,n\right]\!\right]\setminus I;p,q\in\left[\!\left[1,m\right]\!\right]\setminus J\right)$,
is a gonosomal algebra.\end{prop}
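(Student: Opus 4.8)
The plan is to verify directly that the retained family, equipped with the product $*$, satisfies the three defining conditions of a gonosomal algebra given in Definition~\ref{def:Alggonosom}. Writing $B$ for the subspace spanned by $\left(e_{i}\right)_{i\in\left[\!\left[1,n\right]\!\right]\setminus I}\cup\left(\widetilde{e}_{p}\right)_{p\in\left[\!\left[1,m\right]\!\right]\setminus J}$, I first note that $B$ is closed under $*$ by construction: the formula for $e_{i}*\widetilde{e}_{p}$ involves only the basis vectors indexed by $\left[\!\left[1,n\right]\!\right]\setminus I$ and $\left[\!\left[1,m\right]\!\right]\setminus J$, while $e_{i}*e_{j}$ and $\widetilde{e}_{p}*\widetilde{e}_{q}$ are set to zero. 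So $*$ genuinely defines a commutative algebra structure on $B$, and it remains only to check that the retained family is a gonosomal basis for it. Observe that $B$ is \emph{not} a subalgebra of $A$ — the original product $e_{i}\widetilde{e}_{p}$ has components along the discarded vectors — so this really is a new algebra obtained by truncating and rescaling, which is the content of the word \emph{reduction}.

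Two of the three conditions are immediate: $e_{i}*e_{j}=0$ and $\widetilde{e}_{p}*\widetilde{e}_{q}=0$ hold by the very definition of $*$. The whole weight of the argument therefore lies in the normalization condition, namely that the structure constants of each mixed product $e_{i}*\widetilde{e}_{p}$ sum to $1$. The key observation — and the only point requiring care — is that the scalar $\sigma_{ip}$ is exactly the total of the \emph{retained} coefficients. Indeed, starting from the original normalization $\sum_{k=1}^{n}\gamma_{ipk}+\sum_{r=1}^{m}\widetilde{\gamma}_{ipr}=1$ and removing the coefficients indexed by $I$ and $J$, one gets
\[
\sum_{k\in\left[\!\left[1,n\right]\!\right]\setminus I}\gamma_{ipk}+\sum_{r\in\left[\!\left[1,m\right]\!\right]\setminus J}\widetilde{\gamma}_{ipr}=1-\sum_{k\in I}\gamma_{ipk}-\sum_{r\in J}\widetilde{\gamma}_{ipr}=\sigma_{ip},
\]
so $\sigma_{ip}$ measures precisely the mass of the progeny that survives once the discarded types have been deleted.

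With this identity the verification is one line. For $i\in\left[\!\left[1,n\right]\!\right]\setminus I$ and $p\in\left[\!\left[1,m\right]\!\right]\setminus J$ the new structure constants are $\sigma_{ip}^{-1}\gamma_{ipk}$ and $\sigma_{ip}^{-1}\widetilde{\gamma}_{ipr}$, and their total is $\sigma_{ip}^{-1}\bigl(\sum_{k\notin I}\gamma_{ipk}+\sum_{r\notin J}\widetilde{\gamma}_{ipr}\bigr)=\sigma_{ip}^{-1}\sigma_{ip}=1$, where the inverse exists precisely because the hypothesis guarantees $\sigma_{ip}\neq0$ for every admissible pair $\left(i,p\right)$. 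This is exactly the normalization of Definition~\ref{def:Alggonosom}, so $B$ with $*$ is gonosomal. I expect no genuine obstacle here: the argument is pure bookkeeping, and the only things to watch are the reindexing over the complements $\left[\!\left[1,n\right]\!\right]\setminus I$ and $\left[\!\left[1,m\right]\!\right]\setminus J$ and the sign convention in $\sigma_{ip}$, which must be read as a difference of the discarded masses for the identity above — and hence the normalization — to hold.
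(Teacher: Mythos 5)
Your proof is correct and follows exactly the paper's argument: the paper's entire proof is the single observation that $\sum_{k\notin I}\gamma_{ipk}+\sum_{r\notin J}\widetilde{\gamma}_{ipr}=\sigma_{ip}$, which you state and then use to verify the normalization condition of Definition~\ref{def:Alggonosom}. You also rightly flag that the formula for $\sigma_{ip}$ in the statement must be read as $1-\bigl(\sum_{k\in I}\gamma_{ipk}+\sum_{r\in J}\widetilde{\gamma}_{ipr}\bigr)$ (the printed sign is a typo), since only with that reading does the identity, and hence the proposition, hold.
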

\begin{proof}
This follows immediately from $\sum_{k\in\left[\!\left[1,n\right]\!\right]\setminus I}\gamma_{ipk}+\sum_{r\in\left[\!\left[1,m\right]\!\right]\setminus J}\widetilde{\gamma}_{ipr}=\sigma_{ip}$.
\end{proof}
This result is very useful to show that the transmision of a recessive
lethal character can be represented by a gonosomal algebra.
\begin{example}
\emph{Recessive lethal gonosomal allele in $XY$-system}.

A gene is lethal if one of its alleles causes the death of organisms
that carry them. We study in $XY$-system, a gonosomal gene having
two alleles $a$ and $b$. We are going to consider two cases. \smallskip{}

\textbf{Case 1}. Allele $b$ is recessive lethal to females.

In this case there is no genotype female $bb$. Therefore we have
only four crosses:

\hspace{2cm}%
\begin{tabular}{ll}
$aa\times aY\rightarrowtail\frac{1}{2}aa,\frac{1}{2}aY;$ & $aa\times bY\rightarrowtail\frac{1}{2}ab,\frac{1}{2}aY;$$\medskip$\tabularnewline
$ab\times aY\rightarrowtail\frac{1}{4}aa,\frac{1}{4}ab,\frac{1}{4}aY,\frac{1}{4}bY;$ & $ab\times bY\rightarrowtail\frac{1}{3}aa,\frac{1}{3}aY,\frac{1}{3}bY.$\tabularnewline
\end{tabular}

\smallskip{}

\textbf{Case} \textbf{2}. Allele $b$ is lethal to males.

In this case there is no genotype male $bY$. There are therefore
only three crosses:

\hspace{5mm}%
\begin{tabular}{ccc}
$aa\times aY\rightarrowtail\frac{1}{2}aa,\frac{1}{2}aY;\quad$ & $ab\times aY\rightarrowtail\frac{1}{3}aa,\frac{1}{3}ab,\frac{1}{3}aY;\quad$ & $bb\times aY\rightarrowtail ab.$\tabularnewline
\end{tabular}

\smallskip{}

Algebraically, we consider the basic algebra$A$ with basis $\left(e_{1},e_{2},e_{3},\widetilde{e}_{1},\widetilde{e}_{2}\right)$
defined for $i,j=1,2$, $i\neq j$ by
\begin{eqnarray*}
e_{i}\widetilde{e}_{i}=\widetilde{e}_{i}e_{i} & = & \tfrac{1}{2}e_{i}+\tfrac{1}{2}\widetilde{e}_{i},\\
e_{i}\widetilde{e}_{j}=\widetilde{e}_{j}e_{i} & = & \tfrac{1}{2}e_{3}+\tfrac{1}{2}\widetilde{e}_{i},\\
e_{3}\widetilde{e}_{i}=\widetilde{e}_{i}e_{3} & = & \tfrac{1}{4}e_{i}+\tfrac{1}{4}e_{3}+\tfrac{1}{4}\widetilde{e}_{1}+\tfrac{1}{4}\widetilde{e}_{2},
\end{eqnarray*}
other products are zero. It is clear that $A$ is a gonosomal algebra.
Using one hand, the correspondences $e_{1}\leftrightarrow aa$, $e_{2}\leftrightarrow bb$,
$e_{3}\leftrightarrow ab$, $\widetilde{e}_{1}\leftrightarrow aY$,
$\widetilde{e}_{2}\leftrightarrow bY$ and taking on the other hand
in the proposition \ref{pro:Reductionbase}, $I=\left\{ 2\right\} $,
$J=\textrm{Ø}$ we recognize the case 1, with $I=\textrm{Ø}$, $J=\left\{ 2\right\} $
we obtain the case 2.
\end{example}
\smallskip{}

\subsection{Construction from the duplicate of a baric algebra.}

$\qquad$\textcompwordmark{}

We recall that if $\left(A,\omega\right)$ is a commutative $K$-algebra,
the non commutative duplicate of $A$ is the space $A\otimes A$ and
the commutative duplicate of $A$ is the quotient space of $A\otimes A$
by the ideal spanned by $\left\{ x\otimes y-y\otimes x;x,y\in A\right\} $.
They are both noted $D\left(A\right)$ and equipped with the algebra
law: $\left(x\otimes y\right)\left(x'\otimes y'\right)=\left(xy\right)\otimes\left(x'y'\right)$.
The surjective morphism $\mu:D\left(A\right)\rightarrow A^{2}$, $x\otimes y\mapsto xy$
is called the Etherington morphism and the map $\omega_{D}=\omega\circ\mu$
is a weight of $D\left(A\right)$.
\begin{prop}
\label{pro:AGetDupliquee}Let $\left(A,\omega\right)$ be a finite-dimensional
baric commutative $K$-algebra and $A_{1}$, $A_{2}$ two vector subspaces
of $D\left(A\right)$ such that $A_{1},A_{2}\neq\left\{ 0\right\} $,
$A_{1},A_{2}\nsubseteq\ker\left(\omega_{D}\right)$, $A_{1}\cap A_{2}=\left\{ 0\right\} $
and $\mu\left(A_{1}\right)\otimes\mu\left(A_{2}\right)\subset A_{1}\oplus A_{2}$,
then the space $A_{1}\oplus A_{2}$ with multiplication 
\[
\left(x_{1}\oplus x_{2}\right)\left(y_{1}\oplus y_{2}\right)=\mu\left(x_{1}\right)\otimes\mu\left(y_{2}\right)+\mu\left(y_{1}\right)\otimes\mu\left(x_{2}\right)
\]
is a gonosomal algebra.\end{prop}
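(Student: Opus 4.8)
The plan is to produce an explicit gonosomal basis of $B=A_{1}\oplus A_{2}$ and then to read off the four relations of Definition~\ref{def:Alggonosom} directly from the multiplication formula. Two properties come for free. Commutativity is visible by inspection: interchanging $x_{1}\oplus x_{2}$ and $y_{1}\oplus y_{2}$ only permutes the two summands $\mu(x_{1})\otimes\mu(y_{2})$ and $\mu(y_{1})\otimes\mu(x_{2})$. Closure of the product, i.e.\ that $BB\subseteq B$, is precisely the hypothesis $\mu(A_{1})\otimes\mu(A_{2})\subset A_{1}\oplus A_{2}$, since each of those two summands lies in $\mu(A_{1})\otimes\mu(A_{2})$.

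Next I would dispose of the relations $e_{i}e_{j}=0$ and $\widetilde{e}_{p}\widetilde{e}_{q}=0$. Putting the $A_{2}$-components equal to $0$ gives $(x_{1}\oplus 0)(y_{1}\oplus 0)=\mu(x_{1})\otimes\mu(0)+\mu(y_{1})\otimes\mu(0)=0$, so the product of two elements of $A_{1}$ vanishes; symmetrically the product of two elements of $A_{2}$ vanishes. Hence once a basis of $B$ adapted to the decomposition $A_{1}\oplus A_{2}$ is fixed, these two families of relations hold automatically.

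The substantive choice is the basis. Since $A_{1}\nsubseteq\ker(\omega_{D})$, the restriction $\omega_{D}|_{A_{1}}$ is a nonzero linear form; picking $u\in A_{1}$ with $\omega_{D}(u)=1$ and completing it by a basis $v_{1},\ldots,v_{n-1}$ of the hyperplane $\ker(\omega_{D}|_{A_{1}})$, the vectors $e_{1}=u,\ e_{i+1}=u+v_{i}$ form a basis of $A_{1}$ on which $\omega_{D}$ takes the constant value $1$. Applying the same normalization to $A_{2}$ (using $A_{2}\neq\{0\}$ and $A_{2}\nsubseteq\ker(\omega_{D})$) produces a basis $(\widetilde{e}_{p})$ of $A_{2}$ with $\omega_{D}(\widetilde{e}_{p})=1$. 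Because $A_{1}\cap A_{2}=\{0\}$, the union $(e_{i})\cup(\widetilde{e}_{p})$ is a basis of $B$, and each cross product, which lies in $B$ by closure, has a unique expansion $e_{i}\widetilde{e}_{p}=\sum_{k}\gamma_{ipk}e_{k}+\sum_{r}\widetilde{\gamma}_{ipr}\widetilde{e}_{r}$ into its $A_{1}$- and $A_{2}$-components.

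The one step that requires genuine care is the normalization $\sum_{k}\gamma_{ipk}+\sum_{r}\widetilde{\gamma}_{ipr}=1$, because $B$ is in general not baric and so no weight morphism of $B$ is available to invoke. The device is to apply the linear form $\omega_{D}$ to the expansion above. On the right-hand side, linearity together with $\omega_{D}(e_{k})=\omega_{D}(\widetilde{e}_{r})=1$ yields exactly $\sum_{k}\gamma_{ipk}+\sum_{r}\widetilde{\gamma}_{ipr}$. On the left-hand side, I use $e_{i}\widetilde{e}_{p}=\mu(e_{i})\otimes\mu(\widetilde{e}_{p})$ together with $\omega_{D}=\omega\circ\mu$ and the fact that $\omega$ is an algebra morphism of $A$: then $\omega_{D}\bigl(\mu(e_{i})\otimes\mu(\widetilde{e}_{p})\bigr)=\omega\bigl(\mu(e_{i})\mu(\widetilde{e}_{p})\bigr)=\omega(\mu(e_{i}))\,\omega(\mu(\widetilde{e}_{p}))=\omega_{D}(e_{i})\,\omega_{D}(\widetilde{e}_{p})=1$. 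Equating the two evaluations gives the required relation, so $(e_{i})\cup(\widetilde{e}_{p})$ is a gonosomal basis and $B$ is a gonosomal algebra.
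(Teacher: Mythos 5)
Your proposal is correct and follows essentially the same route as the paper: normalize bases of $A_{1}$ and $A_{2}$ so that $\omega_{D}$ equals $1$ on every basis vector, note that products within $A_{1}$ and within $A_{2}$ vanish, and obtain the condition $\sum_{k}\gamma_{ipk}+\sum_{r}\widetilde{\gamma}_{ipr}=1$ by evaluating $\omega_{D}$ on $\mu(e_{i})\otimes\mu(\widetilde{e}_{p})$ via the multiplicativity of $\omega$. The only (immaterial) difference is how the normalized basis is produced — you take a basis of the hyperplane $\ker(\omega_{D}|_{A_{1}})$ translated by $u$, while the paper completes an arbitrary basis and rescales or shifts its members.
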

\begin{proof}
From $A_{1},A_{2}\neq\left\{ 0\right\} $ and $A_{1},A_{2}\nsubseteq\ker\left(\omega_{D}\right)$
we deduce there are $e_{1}\in A_{1}$ and $\widetilde{e}_{1}\in A_{2}$
such that $\omega_{D}\left(e_{1}\right)=\omega_{D}\left(\widetilde{e}_{1}\right)=1$.
Let us complete $\left\{ e_{1}\right\} $ into a basis $B=\left(e_{1},\ldots,e_{n}\right)$
of $A_{1}$, replacing in $B$ each element $e_{i}$ such that $\omega_{D}\left(e_{i}\right)\neq0$
by $\omega_{D}\left(e_{i}\right)^{-1}e_{i}$ and each term $e_{i}$
such that $\omega_{D}\left(e_{i}\right)=0$ by $e_{i}+e_{1}$, we
can suppose that we have $\omega_{D}\left(e_{i}\right)=1$ for all
$1\leq i\leq n$. Analogously we complete $\left\{ \widetilde{e}_{1}\right\} $
into a basis $\widetilde{B}=\left(\widetilde{e}_{1},\ldots,\widetilde{e}_{m}\right)$
of $A_{2}$ verifying $\omega_{D}\left(\widetilde{e}_{j}\right)=1$
for all $1\leq j\leq m$. By $A_{1}\cap A_{2}=\left\{ 0\right\} $
it follows that $B\cup\widetilde{B}$ is a basis of $A_{1}\oplus A_{2}$.
Let us show that $A_{1}\oplus A_{2}$ equipped with the product given
in the statement is gonosomal for this basis. From the multiplication
definition it occurs immediately that $e_{i}e_{j}=\widetilde{e}_{i}\widetilde{e}_{j}=0$.
Then for all $e_{i}\in B$ and $\widetilde{e}_{j}\in\widetilde{B}$
we have $e_{i}\widetilde{e}_{j}=\mu\left(e_{i}\right)\otimes\mu\left(\widetilde{e}_{j}\right)$,
but it follows from $\mu\left(e_{i}\right)\otimes\mu\left(\widetilde{e}_{j}\right)\in A_{1}\oplus A_{2}$
that $\mu\left(e_{i}\right)\otimes\mu\left(\widetilde{e}_{j}\right)=\sum_{k=1}^{n}\alpha_{ijk}e_{k}+\sum_{k=1}^{m}\beta_{ijk}\widetilde{e}_{k}$
and with $\omega_{D}\left(\mu\left(e_{i}\right)\otimes\mu\left(\widetilde{e}_{j}\right)\right)=\omega\left(\mu\left(e_{i}\right)\mu\left(\widetilde{e}_{j}\right)\right)=\omega_{D}\left(e_{i}\right)\omega_{D}\left(\widetilde{e}_{j}\right)=1$,
we obtain $\sum_{k=1}^{n}\alpha_{ijk}+\sum_{k=1}^{m}\beta_{ijk}=1$.\end{proof}
\begin{rem}
If in the proposition \ref{pro:AGetDupliquee}, $A$ is a $K$-algebra
with basis $\left(a_{i}\right)_{1\leq i\leq n}\cup\left(Y\right)$
weighted by $\omega\left(a_{i}\right)=\omega\left(Y\right)=1$ and
$A_{1}$, $A_{2}$ are subspaces of $D\left(A\right)$ with respective
basis $\left(a_{i}\otimes a_{j}\right)_{1\leq i\leq j\leq n}$ and
$\left(a_{i}\otimes Y\right)_{1\leq i\leq n}$. Then the Etherington
morphism $\mu$ gives the gametogenesis results for females $\mu\left(a_{i}\otimes a_{j}\right)=\sum_{k=1}^{n}\gamma_{ijk}a_{k}$
and for males $\mu\left(a_{i}\otimes Y\right)=\sum_{k=1}^{n}\gamma_{ik}a_{k}+\frac{1}{2}Y$,
with $\sum_{k=1}^{n}\gamma_{ijk}=\sum_{k=1}^{n}\gamma_{ik}+\frac{1}{2}=1$.
With this, the algebra law of $A_{1}\oplus A_{2}$ is 
\begin{eqnarray*}
\left(a_{i}\otimes a_{j}\right)\left(a_{k}\otimes Y\right) & = & \sum_{r,s=1}^{n}\gamma_{ijr}\gamma_{ks}a_{r}\otimes a_{s}+\frac{1}{2}\sum_{r=1}^{n}\gamma_{ijr}a_{r}\otimes Y\\
 & = & \sum_{0\leq r\leq s\leq n}\left(\gamma_{ijr}\gamma_{ks}+\gamma_{ijs}\gamma_{kr}\right)a_{r}\otimes a_{s}+\frac{1}{2}\sum_{1\leq r\leq n}\gamma_{ijr}a_{r}\otimes Y,
\end{eqnarray*}
we find the multiplication (\ref{eq:loiWB}).
\end{rem}
Sex determination in the great majority of species is controlled by
two genotypes (e.g. $XX$ or $XY$). However there are cases where
sex is encoded by more than two genotypes, we are going to give several
examples and see that these cases also obey the definition \ref{def:Alggonosom}.
\begin{example}
\emph{Heredity in the WXY-system.}

We consider a population whose sex is determined by the $WXY$-system,
in this system male genotypes are $XY$ or $YY$ and female $WX$,
$WY$ or $XX$. This situation is more complex than other systems
because here a gene can be completely gonosomal for a pair of sex
chromosomes and partially gonosomal (i.e. pseudo-autosomal) for another
pair. To take this into account in an algebraic model of this system
we introduce the following formalism.

Gonosomes $W$, $X$, $Y$ are denoted respectively by $\Gamma_{1}$,
$\Gamma_{2}$, $\Gamma_{3}$ . Considering a gonosomal gene, its alleles
are noted $a_{1},\ldots,a_{N}$ and $a_{0}$ is used to indicate that
the gene is not present on a gonosome.

For each $1\leq r\leq3$ we put $I_{r}=\left\{ 1,\ldots,N\right\} $
if gonosome $\Gamma_{r}$ carries the gene, otherwise we put $I_{r}=\left\{ 0\right\} $.
For each $i\in I_{r}$, the notation $a_{i}^{\left(r\right)}$ represents
a gamete containing gonosome $\Gamma_{r}$ and allele $a_{i}$, then
for all $i\in I_{r}$, $j\in I_{s}$ such that $r\leq s$, the genotype
of an individual of sex $\Gamma_{r}\Gamma_{s}$ with alleles $a_{i}$
on $\Gamma_{r}$ and $a_{j}$ on $\Gamma_{s}$ is denoted by $a_{i}^{\left(r\right)}a_{j}^{\left(s\right)}$.
Next we note $\alpha_{ijk}^{\left(r,s\right)}$ (resp. $\beta_{ijk}^{\left(r,s\right)}$)
the probability that an individual of genotype $a_{i}^{\left(r\right)}a_{j}^{\left(s\right)}$
produces a gamete carrying allele $a_{k}^{\left(r\right)}$ (resp.
$a_{k}^{\left(s\right)}$), thus we have $\sum_{k\in I_{r}}\alpha_{ijk}^{\left(r,s\right)}=\sum_{k\in I_{s}}\beta_{ijk}^{\left(r,s\right)}=1$
and we agree that $\alpha_{ij0}^{\left(r,s\right)}=0$ if $I_{r}\neq\left\{ 0\right\} $
and $\alpha_{0jk}^{\left(r,s\right)}=\alpha_{i0k}^{\left(r,s\right)}=\alpha_{00k}^{\left(r,s\right)}=0$
for all $k\neq0$ (similar conventions are used to $\beta_{ijk}^{\left(r,s\right)}$),
in other words if the gene locus is present on a gonosome it can not
disappear and if absent it can not appear. It follows from this that
the progeny of two individuals, one of genotype $a_{i}^{\left(r\right)}a_{j}^{\left(s\right)}$
and the other of genotype $a_{p}^{\left(u\right)}a_{q}^{\left(v\right)}$
is:
\[
\tfrac{1}{4}\alpha_{ijk}^{\left(r,s\right)}\alpha_{pql}^{\left(u,v\right)}a_{k}^{\left(r\right)}a_{l}^{\left(u\right)},\tfrac{1}{4}\alpha_{ijk}^{\left(r,s\right)}\beta_{pql}^{\left(u,v\right)}a_{k}^{\left(r\right)}a_{l}^{\left(v\right)},\tfrac{1}{4}\beta_{ijk}^{\left(r,s\right)}\alpha_{pql}^{\left(u,v\right)}a_{k}^{\left(s\right)}a_{l}^{\left(u\right)},\tfrac{1}{4}\beta_{ijk}^{\left(r,s\right)}\beta_{pql}^{\left(u,v\right)}a_{k}^{\left(s\right)}a_{l}^{\left(v\right)}.
\]

We can associate with this situation a gonosomal algebra by setting
in proposition \ref{pro:AGetDupliquee}, the vector space $A$ with
basis $\left\{ a_{i}^{\left(r\right)};1\leq r\leq3,i\in I_{r}\right\} $
and commutative multiplication $a_{i}^{\left(r\right)}a_{j}^{\left(s\right)}=\frac{1}{2}\sum_{k\in I_{r}}\alpha_{ijk}^{\left(r,s\right)}a_{k}^{\left(r\right)}+\frac{1}{2}\sum_{k\in I_{s}}\beta_{ijk}^{\left(r,s\right)}a_{k}^{\left(s\right)},$
the algebra $A$ is weighted by $\omega\left(a_{i}^{\left(r\right)}\right)=1$.
Then we take the following subspaces of $D\left(A\right)$ :
\begin{eqnarray*}
A_{1} & = & span\left(\left\{ a_{i_{1}}^{\left(1\right)}\otimes a_{i_{2}}^{\left(2\right)},a_{i_{1}}^{\left(1\right)}\otimes a_{i_{3}}^{\left(3\right)}\right\} _{i_{1}\in I_{1,},i_{2}\in I_{2},i_{3}\in I_{3}}\cup\left\{ a_{i_{2}}^{\left(2\right)}\otimes a_{j_{2}}^{\left(2\right)};i_{2}\leq j_{2}\right\} _{i_{2},j_{2}\in I_{2}}\right),\\
A_{2} & = & span\left(\left\{ a_{i_{2}}^{\left(2\right)}\otimes a_{i_{3}}^{\left(3\right)},a_{i_{3}}^{\left(3\right)}\otimes a_{j_{3}}^{\left(3\right)}\right\} _{i_{2}\in I_{2},i_{3},j_{3}\in I_{3}}\right).
\end{eqnarray*}

\end{example}
\medskip{}

\begin{example}
\emph{Heredity in the WZ-system with male feminization.}

By 1940, the French biologist Albert Vandel noted that some female
woodlices of the species \emph{Armadillidium vulgare} generate 80\%
to 100\% of females instead of the expected 50\% and that this seems
hereditary. The explanation of this phenomenon was discovered in 1973:
these females with almost exclusively female descent are in reality
males infected by a bacterium which transforms them into females.
This bacterium was identified in 1992, it belongs to the genus \emph{Wolbachia},
it is an endosymbiotic bacterial cell widespread in insects which
mainly transmitted from mother to offspring and changes the reproduction
of its guests either by making sterile the matings, or by killing
embryos, or still by feminizing males.

Biologically, the sex of woodlices follows the $WZ$-system. When
a male $ZZ$ is infected by Wolbachia, what is denoted $ZZ+w$, it
becomes female and can cross with a male $ZZ$, and as transmission
of Wolbachia does not happen 100 \% there is a majority of $ZZ+w$
females and a minority of $ZZ$ males. Finally, in this population
three kinds of females is observed: $WZ$, $WZ+w$, $ZZ+w$. For the
crosses, if we note $\eta$ ($0.5<\eta<1$) the transmission rate
of Wolbachia in the offspring, we have: \medskip{}

\hspace{1cm}%
\begin{tabular}{rl}
$WZ\times ZZ$ & $\rightarrowtail\frac{1}{2}WZ,\frac{1}{2}ZZ;$\tabularnewline
$\left(WZ+w\right)\times ZZ$ & $\rightarrowtail\frac{\eta}{2}\left(WZ+w\right),\frac{\eta}{2}\left(ZZ+w\right),\frac{\left(1-\eta\right)}{2}WZ,\frac{\left(1-\eta\right)}{2}ZZ;$\tabularnewline
$\left(ZZ+w\right)\times ZZ$ & $\rightarrowtail\eta\left(ZZ+w\right),\left(1-\eta\right)ZZ.$\tabularnewline
\end{tabular}

Algebraically, we take a vector space $A$ with basis $\left(e_{1},\ldots,e_{4}\right)$
equipped with the multiplication: \smallskip{}

\hspace{2cm}%
\begin{tabular}{l}
$e_{i}^{2}=e_{i}$, ($i=1,\ldots,4$)\tabularnewline
$e_{1}e_{2}=\left(1-\eta\right)e_{1}+\eta e_{2}$,\tabularnewline
$e_{1}e_{3}=\frac{1}{2}e_{1}+\frac{1}{2}e_{3}$,\tabularnewline
$e_{1}e_{4}=e_{2}e_{3}=e_{2}e_{4}=\frac{\left(1-\eta\right)}{2}\left(e_{1}+e_{3}\right)+\frac{\eta}{2}\left(e_{2}+e_{4}\right)$,\tabularnewline
$e_{3}e_{4}=\frac{1}{2}e_{3}+\frac{1}{2}e_{4}$.\tabularnewline
\end{tabular}

\smallskip{}

The algebra $A$ is weighted by $\omega\left(e_{i}\right)=1$. We
take in the proposition \ref{pro:AGetDupliquee}, the subspaces $A_{1}=\mathbb{R}\left\langle e_{1}\otimes e_{1}\right\rangle $
and $A_{2}=\mathbb{R}\left\langle e_{1}\otimes e_{2},e_{1}\otimes e_{3},e_{1}\otimes e_{4}\right\rangle $,
if in the products obtained from the proposition \ref{pro:AGetDupliquee}
we make the correspondences $e_{1}\leftrightarrow Z$, $e_{2}\leftrightarrow Z+w$,
$e_{3}\leftrightarrow W$ and $e_{4}\leftrightarrow W+w$, we retrieve
the results of crosses given in the biological model.
\end{example}
\smallskip{}

\begin{example}
\emph{Heredity in the $XY$-system with fertile $XY$ females.}

In rodents, cases where the $XY$ system is atypical have been found.
For example, in \emph{Myopus schisticolor} (wood lemming) and \emph{Mus
minutoides} (african pygmy mouse), we described three female genotypes:
$XX$, $XX^{*}$ and $X^{*}Y$. In these genotypes, notation $X^{*}$
refers to a chromosome carrying a gene having two actions: it inactivates
the action of gonosome $Y$ and it causes the elimination of gonosome
$Y$ during gametogenesis, so $X^{*}Y$ females give only ova of $X^{*}$
type. Therefore the results of crosses are:

\smallskip{}

\hspace{3cm}%
\begin{tabular}{l}
$XX\times XY\rightarrowtail\frac{1}{2}XX,\frac{1}{2}XY;$\tabularnewline
$XX^{*}\times XY\rightarrowtail\frac{1}{4}XX,\frac{1}{4}XX^{*},\frac{1}{4}X^{*}Y,\frac{1}{4}XY;$\tabularnewline
\multicolumn{1}{l}{$X^{*}Y\times XY\rightarrowtail\frac{1}{2}XX^{*},\frac{1}{2}X^{*}Y.$}\tabularnewline
\end{tabular}

\smallskip{}
To interpret this in algebraic terms, we define the space $A$ with
basis $\left(e_{1},e_{2},e_{3}\right)$ and the multiplication:
\begin{eqnarray*}
e_{i}^{2}=e_{i}, & e_{1}e_{i}=\frac{1}{2}e_{1}+\frac{1}{2}e_{i}, & e_{2}e_{3}=e_{2},\quad\left(i=1,\ldots,3\right).
\end{eqnarray*}
Next we take in proposition \ref{pro:AGetDupliquee}, the subpaces
$A_{1}=\mathbb{R}\left\langle e_{1}\otimes e_{1},e_{1}\otimes e_{2},e_{2}\otimes e_{3}\right\rangle $
and $A_{2}=\mathbb{R}\left\langle e_{1}\otimes e_{3}\right\rangle $,
then the product defined in proposition \ref{pro:AGetDupliquee} and
the relations $e_{1}\leftrightarrow X$, $e_{2}\leftrightarrow X^{*}$,
$e_{3}\leftrightarrow Y$, allow to find the results of crosses.

\medskip{}

In \emph{Dicrostonyx torquatus} (Arctic lemming) three female genotypes
are also described : $XX$, $XX^{*}$ and $X^{*}Y$, but unlike the
previous case, the females $X^{*}Y$ give normal male $XY$. In this
case the results of crosses are:

\smallskip{}

\hspace{3cm}%
\begin{tabular}{l}
$XX\times XY\rightarrowtail\frac{1}{2}XX,\frac{1}{2}XY;$\tabularnewline
$XX^{*}\times XY\rightarrowtail\frac{1}{4}XX,\frac{1}{4}XX^{*},\frac{1}{4}X^{*}Y,\frac{1}{4}XY;$\tabularnewline
\multicolumn{1}{l}{$X^{*}Y\times XY\rightarrowtail\frac{1}{3}XX^{*},\frac{1}{3}X^{*}Y,\frac{1}{3}XY.$}\tabularnewline
\end{tabular}

\smallskip{}
And in this case the algebraic model is defined from the algebra $A=\mathbb{R}\left\langle e_{1},e_{2},e_{3}\right\rangle $
where 
\begin{eqnarray*}
e_{i}^{2}=e_{i}, & e_{1}e_{i}=\frac{1}{2}e_{1}+\frac{1}{2}e_{i}, & e_{2}e_{3}=e_{2},\quad\left(i=1,\ldots,3\right).
\end{eqnarray*}

Then we take in proposition \ref{pro:AGetDupliquee}, the subspaces
$A_{1}=\mathbb{R}\left\langle e_{1}\otimes e_{1},e_{1}\otimes e_{2},e_{2}\otimes e_{3}\right\rangle $
and $A_{2}=\mathbb{R}\left\langle e_{1}\otimes e_{3},e_{3}\otimes e_{3}\right\rangle $,
next we apply proposition \ref{pro:Reductionbase} to obtain a gonosomal
algebra structure on $A_{1}\oplus\mathbb{R}\left\langle e_{1}\otimes e_{3}\right\rangle $
from which relations $e_{1}\leftrightarrow X$, $e_{2}\leftrightarrow X^{*}$,
$e_{3}\leftrightarrow Y$ allow to find the results of crosses.
\end{example}
\smallskip{}

\subsection{Construction from two baric algebras.}
\begin{prop}
\label{pro:AGetAlgPond} Let $\left(A,\omega\right)$ and $(\widetilde{A},\widetilde{\omega})$
be (not necessarily commutative) finite-dimensional baric $K$-algebras,
the Etherington morphism $\mu:A\otimes A\rightarrow A^{2}$, $a\otimes a'\mapsto aa'$
and $\varphi:A\otimes\widetilde{A}\rightarrow A$, $\widetilde{\varphi}:A\otimes\widetilde{A}\rightarrow\widetilde{A}$
two linear maps such that $\omega\circ\varphi+\widetilde{\omega}\circ\widetilde{\varphi}=\omega\otimes\widetilde{\omega}$,
then the $K$-space $A\otimes A\times A\otimes\widetilde{A}$ equipped
with the algebra structure:
\[
\left(x,y\right)\left(x',y'\right)=\left(\mu\left(x\right)\otimes\varphi\left(y'\right)+\mu\left(x'\right)\otimes\varphi\left(y\right),\mu\left(x\right)\otimes\widetilde{\varphi}\left(y'\right)+\mu\left(x'\right)\otimes\widetilde{\varphi}\left(y\right)\right)
\]
 is a gonosomal algebra.\end{prop}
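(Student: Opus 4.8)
The plan is to exhibit an explicit gonosomal basis of $B:=A\otimes A\times A\otimes\widetilde{A}$ and to verify the four conditions of Definition \ref{def:Alggonosom} on it, taking $A\otimes A$ as the female part and $A\otimes\widetilde{A}$ as the male part. First I would use the baric hypothesis to normalize the bases: by the result on baric bases recalled above, $A$ admits a basis $\left(a_{i}\right)_{1\leq i\leq n}$ with $\omega\left(a_{i}\right)=1$ for all $i$, and $\widetilde{A}$ a basis $(\widetilde{a}_{p})_{1\leq p\leq m}$ with $\widetilde{\omega}(\widetilde{a}_{p})=1$ for all $p$; concretely one picks $a_{1}$ of weight $1$ and replaces each remaining $a_{i}$ by $\omega\left(a_{i}\right)^{-1}a_{i}$ or by $a_{i}+a_{1}$ according as $\omega\left(a_{i}\right)\neq0$ or $\omega\left(a_{i}\right)=0$. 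Then $\left(a_{i}\otimes a_{j}\right)_{1\leq i,j\leq n}$ is a basis of $A\otimes A$ and $\left(a_{i}\otimes\widetilde{a}_{p}\right)_{i,p}$ a basis of $A\otimes\widetilde{A}$, so the vectors $e_{ij}=\left(a_{i}\otimes a_{j},0\right)$ and $\widetilde{e}_{ip}=\left(0,a_{i}\otimes\widetilde{a}_{p}\right)$ together form a basis of $B$.

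The three product identities are almost immediate. Since $\mu$, $\varphi$, $\widetilde{\varphi}$ are linear they vanish at $0$, so both $\left(x,0\right)\left(x',0\right)$ and $\left(0,y\right)\left(0,y'\right)$ are zero; this yields $e_{ij}e_{kl}=0$ and $\widetilde{e}_{ip}\widetilde{e}_{kq}=0$. For a mixed product one gets $e_{ij}\widetilde{e}_{kp}=\left(a_{i}a_{j}\otimes\varphi\left(a_{k}\otimes\widetilde{a}_{p}\right),\,a_{i}a_{j}\otimes\widetilde{\varphi}\left(a_{k}\otimes\widetilde{a}_{p}\right)\right)$, whose first component lies in $A\otimes A$ and whose second lies in $A\otimes\widetilde{A}$; expanding them in the chosen bases shows $e_{ij}\widetilde{e}_{kp}$ is a combination of the $e_{rs}$ and $\widetilde{e}_{rs}$, which is the required shape. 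Commutativity, which Definition \ref{def:Alggonosom} demands, is read off directly from the symmetry of the defining formula under the exchange $\left(x,y\right)\leftrightarrow\left(x',y'\right)$.

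The substantive step is the normalization $\sum\gamma+\sum\widetilde{\gamma}=1$, and this is exactly where the compatibility hypothesis enters. I would introduce the linear form $\Omega$ on $B$ equal to $\omega\circ\mu$ on the first factor and to $\omega\otimes\widetilde{\omega}$ on the second. By the choice of bases, $\Omega$ takes the value $1$ on every $e_{rs}$ and every $\widetilde{e}_{rs}$, so applying $\Omega$ to a product returns precisely the sum of its structure constants. Using that $\omega$ is an algebra morphism together with $\omega\left(a_{i}\right)=\omega\left(a_{j}\right)=1$, the computation reduces to
\[
\Omega\left(e_{ij}\widetilde{e}_{kp}\right)=\left(\omega\circ\varphi\right)\left(a_{k}\otimes\widetilde{a}_{p}\right)+\left(\widetilde{\omega}\circ\widetilde{\varphi}\right)\left(a_{k}\otimes\widetilde{a}_{p}\right),
\]
and the hypothesis $\omega\circ\varphi+\widetilde{\omega}\circ\widetilde{\varphi}=\omega\otimes\widetilde{\omega}$ turns the right-hand side into $\omega\left(a_{k}\right)\widetilde{\omega}(\widetilde{a}_{p})=1$. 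Thus the sum of structure constants is $1$ and every condition of Definition \ref{def:Alggonosom} holds. The only point requiring care is bookkeeping which of the two tensor spaces each term belongs to; once the normalized bases are fixed, the identity $\omega\circ\varphi+\widetilde{\omega}\circ\widetilde{\varphi}=\omega\otimes\widetilde{\omega}$ does all the work and is manifestly the reason it was imposed.
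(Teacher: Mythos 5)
Your proposal is correct and follows essentially the same route as the paper: normalize bases of $A$ and $\widetilde{A}$ to weight $1$ via the baric hypothesis, take $e_{ij}=\left(a_{i}\otimes a_{j},0\right)$ and $\widetilde{e}_{ip}=\left(0,a_{i}\otimes\widetilde{a}_{p}\right)$ as the gonosomal basis, observe the zero products and commutativity from the form of the multiplication, and derive the normalization of the structure constants from $\omega\circ\varphi+\widetilde{\omega}\circ\widetilde{\varphi}=\omega\otimes\widetilde{\omega}$ together with $\omega\left(a_{i}a_{j}\right)=1$. Your packaging of the final step through the single linear form $\Omega$ is a mild streamlining of the paper's explicit expansion in structure constants, not a different argument.
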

\begin{proof}
For all $x\in A\otimes A$ and $y'\in A\otimes\widetilde{A}$ identifying
$\left(x,0\right)$ to $x$ and $\left(0,y'\right)$ to $y'$, the
multiplication given in the statement becomes: 
\[
xy'=\mu\left(x\right)\otimes\varphi\left(y'\right)\oplus\mu\left(x\right)\otimes\widetilde{\varphi}\left(y'\right).\quad\left(*\right)
\]

So for all $x,x'\in A\otimes A$ and $y,y'\in A\otimes\widetilde{A}$
we have $y'x=xy'$, $xx'=0$ and $yy'=0$. Algebras $A$ and $\widetilde{A}$
being finite-dimensional weighted there is a basis $\left(a_{i}\right)_{1\text{\ensuremath{\le}}i\text{\ensuremath{\le}}n}$
of $A$ and a basis $\left(\widetilde{a}_{p}\right)_{1\text{\ensuremath{\le}}p\text{\ensuremath{\le}}m}$
of $\widetilde{A}$ such that $\omega\left(a_{i}\right)=1$ and $\widetilde{\omega}\left(\widetilde{a}_{p}\right)=1$.
Let $\tau$ be a bijection from $\left\{ 1,\text{\dots},n\right\} ^{2}$
to $\left\{ 1,\text{\dots},n^{2}\right\} $ ordering the basis $\left(a_{i}\otimes a_{j}\right)_{i,j}$
of $A\otimes A$, we put $e_{\tau\left(i,j\right)}=\left(a_{i}\otimes a_{j},0\right)$,
$\left(1\leq i,j\leq n\right)$. Let $\sigma$ be a bijection from
$\left\{ 1,\text{\dots},n\right\} \times\left\{ 1,\text{\dots},m\right\} $
to $\left\{ 1,\text{\dots},nm\right\} $ ordering the basis $\left(a_{i}\otimes\widetilde{a}_{p}\right)_{i,p}$
of $A\otimes\widetilde{A}$, we put $\widetilde{e}_{\sigma\left(i,p\right)}=\left(0,a_{i}\otimes\widetilde{a}_{p}\right)$.
Next for all $1\text{\ensuremath{\le}}i,j\text{\ensuremath{\le}}n$
and $1\text{\ensuremath{\le}}p\text{\ensuremath{\le}}m$ let $\varphi\left(\widetilde{e}_{\sigma\left(i,p\right)}\right)=\sum_{k=1}^{n}\alpha_{k,\sigma\left(i,p\right)}a_{k}$,
$\widetilde{\varphi}\left(\widetilde{e}_{\sigma\left(i,p\right)}\right)=\sum_{s=1}^{m}\beta_{s,\sigma\left(i,p\right)}\widetilde{a}_{s}$
and $\mu\left(a_{i}\otimes a_{j}\right)=a_{i}a_{j}=\sum_{k=1}^{n}\lambda_{ijk}a_{k}$.
With this the identity $\left(*\right)$ is written:
\begin{eqnarray*}
e_{\tau\left(i,j\right)}\widetilde{e}_{\sigma\left(p,q\right)} & = & \left(a_{i}\otimes a_{j},0\right)\left(0,a_{p}\otimes\widetilde{a}_{q}\right)\\
 & = & \sum_{k,r=1}^{n}\lambda_{ijk}\alpha_{r,\sigma\left(p,q\right)}e_{\tau\left(k,r\right)}\oplus\sum_{k=1}^{n}\sum_{s=1}^{m}\lambda_{ijk}\beta_{s,\sigma\left(p,q\right)}\widetilde{e}_{\sigma\left(k,s\right)}.\quad\left(**\right)
\end{eqnarray*}
So noting:
\[
\gamma_{\tau\left(i,j\right),\sigma\left(p,q\right),\tau\left(k,r\right)}=\lambda_{ijk}\alpha_{r,\sigma\left(p,q\right)}\mbox{ and }\widetilde{\gamma}_{\tau\left(i,j\right),\sigma\left(p,q\right),\sigma\left(k,s\right)}=\lambda_{ijk}\beta_{s,\sigma\left(p,q\right)},
\]
the identity $\left(**\right)$ becomes:
\[
e_{\tau\left(i,j\right)}\widetilde{e}_{\sigma\left(p,q\right)}=\sum_{\tau\left(k,r\right)=1}^{n^{2}}\gamma_{\tau\left(i,j\right),\sigma\left(p,q\right),\tau\left(k,r\right)}e_{\tau\left(k,r\right)}+\sum_{\sigma\left(k,s\right)=1}^{nm}\widetilde{\gamma}_{\tau\left(i,j\right),\sigma\left(p,q\right),\sigma\left(k,s\right)}\widetilde{e}_{\sigma\left(k,s\right)}.
\]
Now, we have:
\begin{eqnarray*}
\sum_{k=1}^{n}\alpha_{k,\sigma\left(i,p\right)}+\sum_{s=1}^{m}\beta_{s,\sigma\left(i,p\right)} & = & \omega\left(\varphi\left(a_{i}\otimes\widetilde{a}_{p}\right)\right)+\widetilde{\omega}\left(\widetilde{\varphi}\left(a_{i}\otimes\widetilde{a}_{p}\right)\right)\\
 & = & \omega\otimes\widetilde{\omega}\left(a_{i}\otimes\widetilde{a}_{p}\right)=\omega\left(a_{i}\right)\widetilde{\omega}\left(\widetilde{a}_{p}\right)=1
\end{eqnarray*}
and $\sum_{k=1}^{n}\lambda_{ijk}=\omega\left(a_{i}a_{j}\right)=1$,
we deduce that 
\[
\sum_{\tau\left(k,r\right)}\gamma_{\tau\left(i,j\right),\sigma\left(p,q\right),\tau\left(k,r\right)}+\sum_{\sigma\left(k,s\right)}\widetilde{\gamma}_{\tau\left(i,j\right),\sigma\left(p,q\right),\sigma\left(k,s\right)}=1,\quad(1\text{\ensuremath{\le}}i,j\text{\ensuremath{\le}}n,1\text{\ensuremath{\le}}p,q\text{\ensuremath{\le}}m),
\]
what establishes that $A\otimes A\times A\otimes\widetilde{A}$ is
a gonosomal algebra.\end{proof}
\begin{rem}
If in the above proposition we take $\left(A,\omega\right)$ a baric
algebra, $\widetilde{A}$ the $K$-algebra spanned by an element $Y$
verifying $Y^{2}=Y$ weighted by $\widetilde{\omega}\left(Y\right)=1$,
the maps $\varphi:A\otimes\widetilde{A}\rightarrow A$, $\varphi\left(x\otimes Y\right)=\frac{1}{2}x$
and $\widetilde{\varphi}:A\otimes\widetilde{A}\rightarrow\widetilde{A}$,
$\widetilde{\varphi}\left(x\otimes Y\right)=\frac{1}{2}\omega\left(x\right)Y$.
Then we have:
\[
\left(x\otimes y\oplus z\otimes Y\right)\left(x'\otimes y'\oplus z'\otimes Y\right)=\frac{1}{2}\left(xy\otimes z'+x'y'\otimes z\right)+\frac{1}{2}\left(\omega\left(z'\right)xy\otimes Y+\omega\left(z\right)x'y'\otimes Y\right)
\]
and after identification of $A\otimes\widetilde{A}$ with $A$ we
find the law (\ref{eq:Gonshor}).\end{rem}
\begin{example}
\emph{X-linked sex ratio distorter.} 

In some species the $X$ chromosome carries alleles referred as distorters
which disrupt in male the production of gametes carrying $Y$ chromosome.
In this case the males offspring consists only of females. However
in some species suppressors of distortion are observed on the $Y$
chromosome, in this case depending on the suppressor type the percentage
of females in the male offspring is between 50\% ($Y$ totally suppressor)
and 100\% ($Y$ no suppressor).

If we denote $X^{d}$ the $X$ chromosome carrying an distorter allele
and $Y^{s}$ the $Y$ chromosome carrying a suppressor gene, there
are three female genotypes $XX$, $XX^{d}$ , $X^{d}X^{d}$ and four
male genotypes $XY$, $XY^{s}$, $X^{d}Y$, $X^{d}Y^{s}$ and gametogenesis
of these genotypes are $XY\rightarrowtail\frac{1}{2}X+\frac{1}{2}Y$,
$XY^{s}\rightarrowtail\frac{1}{2}X+\frac{1}{2}Y^{s}$, $X^{d}Y\rightarrowtail X^{d}$,
$X^{d}Y^{s}\rightarrowtail\frac{2-\theta}{2}X^{d}+\frac{\theta}{2}Y^{s}$
where $0\leq\theta\leq1$ is the suppression rate of distortion. For
example, the cross $XX\times X^{d}Y$ gives only $XX^{d}$ females
and $XX\times X^{d}Y^{s}\rightarrowtail\frac{2-\theta}{2}XX^{d},\frac{\theta}{2}XY^{s}$
.

To show that this situation is depicted by a gonosomal algebra we
use the correspondences $X\leftrightarrow e_{1}$, $X^{d}\leftrightarrow e_{2}$,
$Y\leftrightarrow\widetilde{e}_{1}$ and $Y^{s}\leftrightarrow\widetilde{e}_{2}$
and we apply proposition \ref{pro:AGetAlgPond} with $A=\mathbb{R}\left\langle e_{1},e_{2}\right\rangle $,
$\widetilde{A}=\mathbb{R}\left\langle \widetilde{e}_{1},\widetilde{e}_{2}\right\rangle $
equipped with the algebraic structures: $e_{i}e_{j}=\frac{1}{2}e_{i}+\frac{1}{2}e_{j}$
and $\widetilde{e}_{i}\widetilde{e}_{j}=\frac{1}{2}\widetilde{e}_{i}+\frac{1}{2}\widetilde{e}_{j}$,
then $A$ and $\widetilde{A}$ are weighted by $\omega\left(e_{i}\right)=1$
and $\widetilde{\omega}\left(\widetilde{e}_{i}\right)=1$. We define
$\varphi:A\otimes\widetilde{A}\rightarrow A$ and $\widetilde{\varphi}:A\otimes\widetilde{A}\rightarrow\widetilde{A}$
by 
\[
\varphi\left(e_{i}\otimes\widetilde{e}_{j}\right)=\begin{cases}
\tfrac{1}{2}e_{1} & \mbox{si \ensuremath{i=1}}\\
e_{2} & \mbox{si }\ensuremath{\left(i,j\right)=\left(2,1\right)}\\
\tfrac{2-\theta}{2}e_{2} & \mbox{si }\ensuremath{\left(i,j\right)=\left(2,2\right)}
\end{cases},\quad\widetilde{\varphi}\left(e_{i}\otimes\widetilde{e}_{j}\right)=\begin{cases}
\tfrac{1}{2}\widetilde{e}_{1} & \mbox{si \ensuremath{i=1}}\\
0 & \mbox{si }\ensuremath{\left(i,j\right)=\left(2,1\right)}\\
\tfrac{\theta}{2}\widetilde{e}_{2} & \mbox{si }\ensuremath{\left(i,j\right)=\left(2,2\right).}
\end{cases}
\]

\end{example}
\smallskip{}

\begin{example}
\emph{Kleptogenesis}.

Kleptogenesis is observed in fishes (\emph{Poeciliopsis}), frogs (\emph{Pelophylax})
and insects (\emph{Bacillus}). These bisexual species are present
in various forms that can hybridise, that is to say cross-fertilizations
is observed between two related species. During oogenesis hybrid females
produce eggs that contain only the genome of one of his parents transmitted
without recombination, the genome of the other parental species being
totally evacuated. This mode of reproduction is thus characterized
by a clonal transmission of the genome, whereas the other genome is
acquired by sexual reproduction.

The first observation of this phenomenon was made by L. Berger in
three species of European green frogs (genus \emph{Pelophylax}): \emph{P.
lessonae} (LL) small, \emph{P. ridibundus} (RR) large and \emph{P.
esculentus} (LR ) whose size is intermediate. If frogs (LL) and (RR)
are crossed, hybrids (LR) are obtained, but if we cross a female (LR)
with a male (LL) only (LR) are obtained while the Mendel law provides
50\% (LL) and 50\% (LR). This result is explained by the systematic
elimination during the (LR) frogs gametogenesis of the entire genome
inherited from parents (LL).

Algebraically, we apply the proposition \ref{pro:AGetAlgPond} by
taking $A=\widetilde{A}$ an algebra with basis $\left(a_{1},a_{2}\right)$
and multiplication $a_{i}a_{j}=\frac{1}{2}\left(a_{i}+a_{j}\right)$
weighted by $\omega\left(a_{i}\right)=1$. Let $0<\theta<1$ be the
female proportion in the population, we define $\varphi,\widetilde{\varphi}:A\otimes\widetilde{A}\rightarrow A$
by 
\begin{eqnarray*}
\varphi\left(a_{i}\otimes a_{j}\right) & = & \begin{cases}
\left(1-\theta\right)a_{i} & \mbox{si \ensuremath{i=j}}\\
\left(1-\theta\right)a_{1} & \mbox{si }\ensuremath{\left(i,j\right)=\left(1,2\right),\left(2,1\right),}
\end{cases}\\
\widetilde{\varphi}\left(a_{i}\otimes a_{j}\right) & = & \begin{cases}
\theta a_{i} & \mbox{si \ensuremath{i=j}}\\
\theta a_{1} & \mbox{si }\ensuremath{\left(i,j\right)=\left(1,2\right),\left(2,1\right).}
\end{cases}
\end{eqnarray*}

Noting $e_{ij}=\left(0,a_{i}\otimes a_{j}\right)$ female genotypes
and $\widetilde{e}_{ij}=\left(a_{i}\otimes a_{j},0\right)$ male genotypes,
the product defined in the proposition \ref{pro:AGetAlgPond} coupled
to the relations $a_{1}\leftrightarrow R$, $a_{2}\leftrightarrow L$
give the results of crosses observed in species \emph{Pelophylax.}
\end{example}
\smallskip{}

The following three examples show that this construction is also useful
to give algebraic models of genetic processes influenced by sex.
\begin{example}
\emph{Recombination between two pseudo-autosomal genes}.

This case was investigated for zygotic algebra in \cite{DA-LS-PV-89}.
We consider two pseudo-autosomal genes $a$ and $b$ in the $XY$
system which allelic forms are $a_{1},\text{\dots},a_{n}$ and $b_{1},\text{\dots},b_{m}$,
it is also assumed that the locus of the gene $a$ is closer to the
centromere and does not occur recombination between the locus and
the centromere, it follows that the alleles of $a$ does not change
chromosome during recombination.

We note $\left(a_{i}b_{j},a_{k}b_{l}\right)$ the genotype of an individual
where the haplotype $a_{i}b_{j}$ is transmitted by the mother and
$a_{k}b_{l}$ by the father. Let $\theta$ and $\widetilde{\theta}$
respectively the recombination rates between these two loci in females
and males. During meiosis a female $\left(a_{i}b_{j},a_{k}b_{l}\right)$
produces ova in the following proportions: $\left(\frac{1-\theta}{2}a_{i}b_{j},\frac{1-\theta}{2}a_{k}b_{l},\frac{\theta}{2}a_{i}b_{l},\frac{\theta}{2}a_{k}b_{j}\right)$.
In a male $\left(a_{p}b_{q},a_{r}b_{s}\right)$, the distribution
of spermatozoa is $\left(\frac{1-\widetilde{\theta}}{2}a_{p}b_{q},\frac{1-\widetilde{\theta}}{2}a_{r}b_{s},\frac{\widetilde{\theta}}{2}a_{p}b_{s},\frac{\widetilde{\theta}}{2}a_{r}b_{q}\right)$
where, taking into consideration the notation, genotypes $a_{p}b_{q}$
and $a_{p}b_{s}$ are carried by a gonosome $X$ and $a_{r}b_{s}$,
$a_{r}b_{q}$ are on a gonosome $Y$. It results for example, that
the frequency distribution of the eight genotypes of girls born from
the cross between a female $\left(a_{i}b_{j},a_{k}b_{l}\right)$ and
a male $\left(a_{p}b_{q},a_{r}b_{s}\right)$ is:

$\frac{\left(1-\theta\right)\widetilde{\theta}}{4}\left(a_{i}b_{j},a_{p}b_{s}\right)$,
$\frac{\left(1-\theta\right)\widetilde{\theta}}{4}\left(a_{i}b_{j},a_{p}b_{s}\right)$,
$\frac{\left(1-\theta\right)\left(1-\widetilde{\theta}\right)}{4}\left(a_{k}b_{l},a_{p}b_{q}\right)$,
$\frac{\left(1-\theta\right)\widetilde{\theta}}{4}\left(a_{k}b_{l},a_{p}b_{s}\right)$, 

$\frac{\theta\left(1-\widetilde{\theta}\right)}{4}\left(a_{i}b_{l},a_{p}b_{q}\right)$,
$\frac{\theta\widetilde{\theta}}{4}\left(a_{i}b_{l},a_{p}b_{s}\right)$,
$\frac{\theta\left(1-\widetilde{\theta}\right)}{4}\left(a_{k}b_{j},a_{p}b_{q}\right)$,
$\frac{\theta\widetilde{\theta}}{4}\left(a_{k}b_{j},a_{p}b_{s}\right)$.

With proposition \ref{pro:AGetAlgPond} we show that this situation
is modeled by a gonosomal algebra. Indeed, let $A$, $\widetilde{A}$
be the vector spaces spanned by $\left(a_{i}\otimes b_{j}\right)$
respectively equipped with algebraic structures: 
\begin{eqnarray*}
\left(a_{i}\otimes b_{j}\right)\left(a_{k}\otimes b_{l}\right) & = & \tfrac{1-\theta}{2}\left(a_{i}\otimes b_{j}+a_{k}\otimes b_{l}\right)+\tfrac{\theta}{2}\left(a_{i}\otimes b_{l}+a_{k}\otimes b_{j}\right),\\
\left(a_{i}\otimes b_{j}\right)*\left(a_{k}\otimes b_{l}\right) & = & \tfrac{1-\widetilde{\theta}}{2}\left(a_{i}\otimes b_{j}+a_{k}\otimes b_{l}\right)+\tfrac{\widetilde{\theta}}{2}\left(a_{i}\otimes b_{l}+a_{k}\otimes b_{j}\right).
\end{eqnarray*}
Then $A$ is weighted by $\omega\left(a_{i}\otimes a_{j}\right)=1$
and $\widetilde{A}$ by $\widetilde{\omega}\left(a_{i}\otimes a_{j}\right)=1$.
We define the maps $\varphi:A\otimes\widetilde{A}\rightarrow A$ by
$\varphi\left(\left(a_{i}\otimes b_{j}\right)\otimes\left(a_{k}\otimes b_{l}\right)\right)=\frac{\left(1-\widetilde{\theta}\right)}{2}a_{i}\otimes b_{j}+\frac{\widetilde{\theta}}{2}a_{i}\otimes b_{l}$
and $\widetilde{\varphi}:A\otimes\widetilde{A}\rightarrow\widetilde{A}$
by $\widetilde{\varphi}\left(\left(a_{i}\otimes b_{j}\right)\otimes\left(a_{k}\otimes b_{l}\right)\right)=\frac{\left(1-\widetilde{\theta}\right)}{2}a_{k}\otimes b_{l}+\frac{\widetilde{\theta}}{2}a_{k}\otimes b_{j}$,
we have $\omega\circ\varphi+\widetilde{\omega}\circ\widetilde{\varphi}=\omega\otimes\widetilde{\omega}$.
\end{example}
\smallskip{}

\begin{example}
\emph{Genetic mutation with sex-dependent mutation rate in a multiple
$XY$-system.} 

The inheritance of a gonosomal gene was studied for the $XY$ system
in the absence of mutation in \cite{Ether-41,Gonsh-73,Holg-70} and
with mutation in \cite{Gonsh-60}. Nevertheless it is observed in
all sex determination systems, in the following it is described in
the multiple $XY$-system for pseudo-autosomal or gonosomal genes.

There are many diploid species (fish, insects, spiders) who have sex
determination systems with multiple sex chromosomes. In these systems
sex is determined by the $X_{n}Y_{m}$ system with $n\geq1$ and $m\geq0$,
that is to say a female type is $X_{1}\ldots X_{n}/X_{1}\ldots X_{n}$
and a male $X_{1}\ldots X_{n}/Y_{1}\ldots Y_{m}$ if $m\geq1$, or
$X_{1}\ldots X_{n}/0$ if $m=0$ ($W_{m}Z_{n}$ system is symmetric).

We consider in this population, a gonosomal gene whose allelic forms
are $a_{1},\ldots,a_{N}$ and we denote by $a_{0}$ the case where
the locus of this gene is not observed on a gonosome. We use the following
notations, for every $1\leq k\leq n$ (resp. $1\leq r\leq m$) we
put $I_{k}=\left\{ 1,\ldots,N\right\} $ (resp. $J_{r}=\left\{ 1,\ldots,N\right\} $)
if the chromosome $X_{k}$ (resp. $Y_{r}$) carries the gene and $I_{k}=\left\{ 0\right\} $
(resp. $J_{r}=\left\{ 0\right\} $) otherwise. If we denote by $\binom{i_{1,1},\ldots,i_{1,n}}{i_{2,1},\ldots,i_{2,n}}$
where $\left(i_{1,1},\ldots,i_{1,n}\right)$, $\left(i_{2,1},\ldots,i_{2,n}\right)\in\prod_{k=1}^{n}I_{k}$
a female genotype, and by $\binom{p_{1},\ldots,p_{n}}{q_{1},\ldots,q_{m}}$
with $\left(p_{1},\ldots,p_{n}\right)\in\prod_{k=1}^{n}I_{k}$, $\left(q_{1},\ldots,q_{m}\right)\in\prod_{r=1}^{m}J_{r}$
a male genotype, then in the absence of mutation and genetic recombination
and taking into account that the segregation of the homologous $X$
chromosomes is independent, the progeny of a female $\binom{i_{1,1},\ldots,i_{1,n}}{i_{2,1},\ldots,i_{2,n}}$
with a male $\binom{p_{1},\ldots,p_{n}}{q_{1},\ldots,q_{m}}$ consists
of $\left(\frac{1}{2}\right)^{n+1}$females of genotype $\binom{i_{s_{1},1},\ldots,i_{s_{n},n}}{p_{1},\ldots,p_{n}}$
and $\left(\frac{1}{2}\right)^{n+1}$males of genotype $\binom{i_{s_{1},1},\ldots,i_{s_{n},n}}{q_{1},\ldots,q_{m}}$,
where $s_{1},\ldots,s_{n}\in\left\{ 1,2\right\} $.

In the presence of mutations whose rates depend on the sex of individuals,
when the chromosome $X_{k}$ carries the gene, we note $\mu_{k,ri}$
(resp. $\widetilde{\mu}_{k,ri}$) the mutation rate of the allele
$a_{i}$ to allele $a_{r}$ on the $X_{k}$ chromosome in female (resp.
male), we have $\sum_{r\in I_{k}}\mu_{k,ri}=\sum_{r\in I_{k}}\widetilde{\mu}_{k,ri}=1$
and by convention we put $\mu_{k,0i}=0$. In a similar way, when the
locus of the gene is observed on chromosome $Y_{k}$, we note $\nu_{k,ri}$
the mutation rate of allele $a_{i}$ to allele $a_{r}$ on chromosome
$Y_{k}$ with the convention $\nu_{k,0i}=0$. If the gene locus is
not on gonosome $X_{l}$ (resp. $Y_{l}$) we put $\mu_{l,00}=\widetilde{\mu}_{l,00}=1$
(resp. $\nu_{l,00}=1$) and thus $\mu_{l,k0}=\widetilde{\mu}_{l,k0}=\nu_{l,k0}=0$
for $k\neq0$.

With the previous notations, assuming that mutations between chromosomes
are independent and in the absence of recombination, after crossing
a female of genotype $\binom{i_{1,1},\ldots,i_{1,n}}{i_{2,1},\ldots,i_{2,n}}$
with a male of genotype $\binom{p_{1},\ldots,p_{n}}{q_{1},\ldots,q_{m}}$
we obtain the proportions:

$\quad\left(\frac{1}{2}\right)^{n+1}\prod_{k=1}^{n}\left(\mu_{k,r_{k}i_{1,k}}+\mu_{k,r_{k}i_{2,k}}\right)\widetilde{\mu}_{k,t_{k}p_{k}}$
females of genotype $\binom{r_{1},\ldots,r_{n}}{t_{1},\ldots,t_{n}}$ 

and $\left(\frac{1}{2}\right)^{n+1}\prod_{k=1}^{n}\left(\mu_{k,r_{k}i_{1,k}}+\mu_{k,r_{k}i_{2,k}}\right)\prod_{k=1}^{m}\nu_{k,t_{k}q_{k}}$
males of genotype $\binom{r_{1},\ldots,r_{n}}{t_{1},\ldots,t_{m}}$. 

\smallskip{}

To interpret this model in gonosomal algebra terms, we apply the proposition
\ref{pro:AGetAlgPond}. For $1\leq k\leq n$, we put $A_{k}=\mbox{span}\left\{ a_{i};i\in I_{k}\right\} $
and $A=\bigotimes_{k=1}^{n}A_{k}$. When $m\geq1$, for $1\leq r\leq m$
we note $\widetilde{A}_{r}=\mbox{span}\left\{ a_{i};i\in J_{r}\right\} $
and $\widetilde{A}=\bigotimes_{r=1}^{m}\widetilde{A}_{r}$; when $m=0$
we put $\widetilde{A}=\mbox{span}\left\{ a_{0}\right\} $. First we
provide every space $A_{k}$ with the algebra structure $a_{i}a_{j}=\frac{1}{2}\sum_{r\in I_{k}}\left(\mu_{k,ri}+\mu_{k,rj}\right)a_{r}$
and the weight function $\omega_{k}\left(a_{i}\right)=1$ for all
$i,j\in I_{k}$, and every space $\widetilde{A}_{k}$ with the multiplication
$a_{i}a_{j}=\frac{1}{2}\sum_{r\in J_{k}}\left(\widetilde{\mu}_{k,ri}+\widetilde{\mu}_{k,rj}\right)a_{r}$
and the weight function $\widetilde{\omega}_{k}\left(a_{i}\right)=1$.
Next we equip the spaces $A$ and $\widetilde{A}$ with the algebraic
structure $\left(\bigotimes_{k}x_{k}\right)\left(\bigotimes_{k}y_{k}\right)=\bigotimes_{k}\left(x_{k}y_{k}\right)$,
the weight functions $\omega=\bigotimes_{k=1}^{n}\omega_{k}$ for
$A$ and $\widetilde{\omega}=\bigotimes_{k=1}^{m}\widetilde{\omega}_{k}$
for $\widetilde{A}$. Finally we define the maps $\varphi:A\otimes\widetilde{A}\rightarrow A$
and $\widetilde{\varphi}:A\otimes\widetilde{A}\rightarrow\widetilde{A}$
by $\varphi\left(\bigotimes_{k=1}^{n}a_{i_{k}}\otimes\bigotimes_{k=1}^{m}a_{j_{k}}\right)=\bigotimes_{k=1}^{n}\left(\sum_{r\in I_{k}}\widetilde{\mu}_{k,ri_{k}}a_{r}\right)$
and $\widetilde{\varphi}\left(\bigotimes_{k=1}^{n}a_{i_{k}}\otimes\bigotimes_{k=1}^{m}a_{j_{k}}\right)=\bigotimes_{k=1}^{m}\left(\sum_{r\in J_{k}}\nu_{k,rj_{k}}a_{r}\right)$.
\end{example}
\smallskip{}

\begin{example}
\emph{Transposable elements with sex-dependent transposition rate}.

Since the Barbara McClintock's pioneering works, we know that a more
or less important part of the genome (45\% in humans) consists of
repeated sequences that can move along chromosomes. These sequences
called transposable elements or transposons (also known as mobile
genetic elements or jumping genes). The transposon displacement is
called a transposition. According to the transposon there are two
transposition modes: conservative or replicative, some transposons
use both. Transposition is conservative when the transposon moves
from one site to another without being replicated. It is replicative
when one or more copies of the transposon is transferred to another
site, in this case there is an increase of the number of copies of
the transposon inside the genome.

We consider a bisexual population composed of diploid individuals
with chromosomic number $2n\geq4$. Chromosomes are numbered from
$1$ to $n$, where $n$ is the number reserved for the $X$ and $Y$
sex chromosomes. We study the copies number of a transposon into the
genome of individuals. The notation $c_{i,j}$ is the chromosome number
$i$ carrying $j$ copies of the transposon and we assume that each
chromosome can carry at most $N$ copies.

We note $\tau_{k,j}$ (resp. $\widetilde{\tau}_{k,j}$) the insertion
(in this case $k>0$) or deletion (in this case $k<0$) probability
in a female (resp. male) of $k$ copies on a chromosome carrying $j$
copies. Thus we have $0\leq k+j\leq N$ and we put $\tau_{k,j}=\widetilde{\tau}_{k,j}=0$
as soon as $k+j<0$ or $k+j>N$, all this allows to take $-N\leq k\leq N$
in the definitions of $\tau_{k,j}$ and $\widetilde{\tau}_{k,j}$. 

Let $c_{1,i_{1}},\text{\dots},c_{n,i_{n}}$ the chromosomes transmitted
by a parent, we note $c_{i_{1},\text{\dots},i_{n}}=c_{1,i_{1}}\otimes\text{\dots}\otimes c_{n,i_{n}}$
and the genome of an individual is represented by $c_{i_{1},\text{\dots},i_{n}}\otimes c_{j_{1},\text{\dots},j_{n}}$,
where $c_{i_{1},\text{\dots},i_{n}}$ (resp. $c_{j_{1},\text{\dots},j_{n}}$)
are chromosomes inherited from the mother (resp. the father). Let
$c_{i_{1},\text{\dots},i_{n}}\otimes c_{j_{1},\text{\dots},j_{n}}$
the genome of an individual, after transposition, if the individual
is a female then the egg distribution is
\[
\frac{1}{2^{n}}\bigotimes_{p=1}^{n}\left(\sum_{k=-N}^{N}\tau_{k,i_{p}}c_{p,i_{p}+k}+\sum_{l=-N}^{N}\tau_{l,j_{p}}c_{p,j_{p}+l}\right),
\]
if the individual is a male, the distribution of spermatozoa carrying
$X$gonosome is 

\[
\frac{1}{2^{n-1}}\bigotimes_{p=1}^{n-1}\left(\sum_{k=-N}^{N}\left(\widetilde{\tau}_{k,i_{p}}c_{p,i_{p}+k}+\widetilde{\tau}_{k,j_{p}}c_{p,j_{p}+k}\right)\right)\otimes\sum_{k=-N}^{N}\widetilde{\tau}_{k,i_{n}}c_{n,i_{n}+k}
\]
 and that of the sperm cells carrying the $Y$ gonosome is 
\[
\frac{1}{2^{n-1}}\bigotimes_{p=1}^{n-1}\left(\sum_{k=-N}^{N}\left(\widetilde{\tau}_{k,i_{p}}c_{p,i_{p}+k}+\widetilde{\tau}_{k,j_{p}}c_{p,j_{p}+k}\right)\right)\otimes\sum_{k=-N}^{N}\widetilde{\tau}_{k,i_{n}}c_{n,j_{n}+k}.
\]
The result after crossing a female $c_{i_{1},\text{\dots},i_{n}}\otimes c_{j_{1},\text{\dots},j_{n}}$
with a male $c_{r_{1},\text{\dots},r_{n}}\otimes c_{s_{1},\text{\dots},s_{n}}$,
the distribution of girls is given by
\begin{eqnarray*}
\frac{1}{2^{2n-1}}\bigotimes_{p=1}^{n}\left(\sum_{k=-N}^{N}\left(\tau_{k,i_{p}}c_{p,i_{p}+k}+\tau_{k,j_{p}}c_{p,j_{p}+k}\right)\right)\otimes\qquad\qquad\qquad\quad\\
\bigotimes_{p=1}^{n-1}\left(\sum_{k=-N}^{N}\left(\widetilde{\tau}_{k,r_{p}}c_{p,r_{p}+k}+\widetilde{\tau}_{k,s_{p}}c_{p,s_{p}+k}\right)\right)\otimes\sum_{k=-N}^{N}\widetilde{\tau}_{k,r_{n}}c_{n,r_{n}+k},
\end{eqnarray*}
and that of the boys is 
\begin{eqnarray*}
\frac{1}{2^{2n-1}}\bigotimes_{p=1}^{n}\left(\sum_{k=-N}^{N}\left(\tau_{k,i_{p}}c_{p,i_{p}+k}+\tau_{k,j_{p}}c_{p,j_{p}+k}\right)\right)\otimes\qquad\qquad\qquad\quad\\
\bigotimes_{p=1}^{n-1}\left(\sum_{k=-N}^{N}\left(\widetilde{\tau}_{k,r_{p}}c_{p,r_{p}+k}+\widetilde{\tau}_{k,s_{p}}c_{p,s_{p}+k}\right)\right)\otimes\sum_{k=-N}^{N}\widetilde{\tau}_{k,s_{n}}c_{n,s_{n}+k}.
\end{eqnarray*}
Algebraically, let $A$ be the $\mathbb{R}$-vector space with basis
$\left(c_{i_{1},\text{\dots},i_{n}}\right)_{0\text{\ensuremath{\le}}i_{1},\text{\dots},i_{n}\text{\ensuremath{\le}}N}$
equipped with the algebra structure: 
\[
c_{i_{1},\text{\dots},i_{n}}c_{j_{1},\text{\dots},j_{n}}=\frac{1}{2^{n}}\bigotimes_{p=1}^{n}\left(\sum_{k=-N}^{N}\left(\tau_{k,i_{p}}c_{p,i_{p}+k}+\tau_{k,j_{p}}c_{p,j_{p}+k}\right)\right),
\]
and $\widetilde{A}$ the $\mathbb{R}$-vector space with basis $\left(\widetilde{c}_{i_{1},\text{\dots},i_{n}}\right)_{0\text{\ensuremath{\le}}i_{1},\text{\dots},i_{n}\text{\ensuremath{\le}}N}$
equipped with the algebra structure: 
\[
\widetilde{c}_{i_{1},\text{\dots},i_{n}}\widetilde{c}_{j_{1},\text{\dots},j_{n}}=\frac{1}{2^{n}}\bigotimes_{p=1}^{n}\left(\sum_{k=-N}^{N}\left(\tau_{k,i_{p}}\widetilde{c}_{p,i_{p}+k}+\tau_{k,j_{p}}\widetilde{c}_{p,j_{p}+k}\right)\right).
\]
Algebras $A$ and $\widetilde{A}$ are weighted by $\omega\left(c_{i_{1},\text{\dots},i_{n}}\right)=\widetilde{\omega}\left(\widetilde{c}_{i_{1},\text{\dots},i_{n}}\right)=1$.
By defining the maps $\varphi:A\otimes\widetilde{A}\rightarrow A$,
\[
\varphi\left(c_{i_{1},\text{\dots},i_{n}}\otimes\widetilde{c}_{j_{1},\text{\dots},j_{n}}\right)=\bigotimes_{p=1}^{n-1}\left(\sum_{k=-N}^{N}\left(\widetilde{\tau}_{k,i_{p}}c_{p,i_{p}+k}+\widetilde{\tau}_{k,i_{p}}c_{p,i_{p}+k}\right)\right)\otimes\sum_{k=-N}^{N}\widetilde{\tau}_{k,j_{n}}c_{n,j_{n}+k}
\]
 and $\widetilde{\varphi}:A\otimes\widetilde{A}\rightarrow\widetilde{A}$
by 
\[
\widetilde{\varphi}\left(c_{i_{1},\text{\dots},i_{n}}\otimes\widetilde{c}_{j_{1},\text{\dots},j_{n}}\right)=\bigotimes_{p=1}^{n-1}\left(\sum_{k=-N}^{N}\left(\widetilde{\tau}_{k,i_{p}}\widetilde{c}_{p,i_{p}+k}+\widetilde{\tau}_{k,i_{p}}\widetilde{c}_{p,i_{p}+k}\right)\right)\otimes\sum_{k=-N}^{N}\widetilde{\tau}_{k,j_{n}}\widetilde{c}_{n,j_{n}+k}
\]
we obtain an algebra structure that describes the situation exposed
above, it is gonosomal by proposition \ref{pro:AGetAlgPond}.
\end{example}
\smallskip{}

\subsection{Three constructions from two algebra laws.}
\begin{prop}
\label{prop:AG=0000261AlgPond2Lois} Let $A$ be a finite-dimensional
$K$-vector space equipped with two algebra laws $\circ$ and $\bullet$
to which $A$ is not necessarily commutative and such that $\left(A,\circ\right)$
and $\left(A,\bullet\right)$ have the same weight function $\omega$.
Then for all $\theta\in K$ and for all linear maps $\varphi,\varphi':A\rightarrow A$
verifying $\omega\circ\varphi=\omega\circ\varphi'=\omega$, the $K$-vector
space $A\times A$ with the multiplication
\begin{eqnarray*}
\left(x,y\right)\left(x',y'\right) & = & \left(\theta\varphi\left(x\circ y'+x'\circ y\right),\left(1-\theta\right)\varphi'\left(x\bullet y'+x'\bullet y\right)\right),
\end{eqnarray*}
is a gonosomal algebra.\end{prop}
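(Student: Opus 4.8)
The plan is to exhibit a gonosomal basis of $A\times A$ in the sense of Definition \ref{def:Alggonosom} and to read off the structure constants directly from the given multiplication. Since $\left(A,\circ\right)$ and $\left(A,\bullet\right)$ carry the \emph{same} weight function $\omega$, I would first use the normalization procedure already employed in the proof of Proposition \ref{pro:AGetDupliquee}: starting from any basis of $A$, replace each vector $a$ with $\omega(a)\neq0$ by $\omega(a)^{-1}a$ and each vector $a$ with $\omega(a)=0$ by $a+a_1$, where $a_1$ is some fixed vector of weight $1$, obtaining a basis $\left(a_i\right)_{1\leq i\leq N}$ of $A$ with $\omega\left(a_i\right)=1$ for all $i$. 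The point of insisting that both laws share $\omega$ is precisely that a single such basis simultaneously normalizes both $\circ$ and $\bullet$.

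Next I would set $e_i=\left(a_i,0\right)$ and $\widetilde{e}_p=\left(0,a_p\right)$ for $1\leq i,p\leq N$; these $2N$ vectors plainly form a basis of $A\times A$. Feeding pairs with a vanishing coordinate into the multiplication kills both the $\circ$- and the $\bullet$-products, so $e_ie_j=0$ and $\widetilde{e}_p\widetilde{e}_q=0$ are immediate. For the mixed product one computes
\[
e_i\widetilde{e}_p=\left(\theta\,\varphi\left(a_i\circ a_p\right),\,\left(1-\theta\right)\varphi'\left(a_i\bullet a_p\right)\right),
\]
and the analogous computation for $\widetilde{e}_pe_i$ yields the same pair, since the defining formula is symmetric under exchanging the two factors, so the algebra is commutative on the basis as required.

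Writing $\varphi\left(a_i\circ a_p\right)=\sum_{k}\mu_{ipk}a_k$ and $\varphi'\left(a_i\bullet a_p\right)=\sum_{r}\nu_{ipr}a_r$, the mixed product becomes
\[
e_i\widetilde{e}_p=\sum_{k}\theta\mu_{ipk}\,e_k+\sum_{r}\left(1-\theta\right)\nu_{ipr}\,\widetilde{e}_r,
\]
which is exactly the gonosomal shape with $\gamma_{ipk}=\theta\mu_{ipk}$ and $\widetilde{\gamma}_{ipr}=\left(1-\theta\right)\nu_{ipr}$. It remains to check $\sum_k\gamma_{ipk}+\sum_r\widetilde{\gamma}_{ipr}=1$. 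Here the hypotheses on the weights do all the work: since $\omega\left(a_k\right)=1$, one has $\sum_k\mu_{ipk}=\omega\left(\varphi\left(a_i\circ a_p\right)\right)$, and then $\omega\circ\varphi=\omega$ together with the multiplicativity of $\omega$ for $\circ$ gives $\omega\left(\varphi\left(a_i\circ a_p\right)\right)=\omega\left(a_i\circ a_p\right)=\omega\left(a_i\right)\omega\left(a_p\right)=1$; likewise $\sum_r\nu_{ipr}=1$ using $\omega\circ\varphi'=\omega$. Hence the sum equals $\theta+\left(1-\theta\right)=1$, and $A\times A$ is gonosomal.

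The argument is essentially a direct verification, so there is no genuine obstacle; the only point requiring care is the basis construction, where it is crucial that $\circ$ and $\bullet$ share the single weight $\omega$, so that the \emph{same} unit-weight basis $\left(a_i\right)$ can be used to expand both $\varphi\left(a_i\circ a_p\right)$ and $\varphi'\left(a_i\bullet a_p\right)$ and so that multiplicativity of $\omega$ applies to both products at once.
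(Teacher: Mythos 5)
Your proposal is correct and follows essentially the same route as the paper: pick a basis of weight-one vectors, set $e_{i}=\left(a_{i},0\right)$, $\widetilde{e}_{p}=\left(0,a_{p}\right)$, observe that the mixed products have the gonosomal form, and use $\omega\circ\varphi=\omega\circ\varphi'=\omega$ together with the multiplicativity of $\omega$ to get coefficient sums $\theta+\left(1-\theta\right)=1$. The only cosmetic difference is that the paper formally takes two weight-one bases (one for each copy of $A$) and expands $\varphi$, $\varphi'$ in coordinates, whereas you use a single normalized basis and apply $\omega$ directly to $\varphi\left(a_{i}\circ a_{p}\right)$; this changes nothing of substance.
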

\begin{proof}
It is clear that the multiplicative law defined above is commutative.
According to the hypotheses, the algebra $\left(A,\circ\right)$ admits
a basis $\left(a_{1},\ldots,a_{n}\right)$ and $\left(A,\bullet\right)$
admits a basis $\left(a'_{1},\ldots,a'_{n}\right)$ such that $\omega\left(a_{i}\right)=\omega\left(a'_{i}\right)=1$.
If we put $a_{i}\circ a'_{j}=\sum_{k=1}^{n}\gamma_{ijk}a_{k}$ and
$a_{i}\bullet a'_{j}=\sum_{k=1}^{n}\gamma'_{ijk}a'_{k}$ we have $\sum_{k=1}^{n}\gamma_{ijk}=\sum_{k=1}^{n}\gamma'_{ijk}=1$
for all $1\leq i,j\leq n$. Let $\varphi\left(a_{k}\right)=\sum_{p=1}^{n}\alpha_{pk}a_{p}$
and $\varphi'\left(a'_{k}\right)=\sum_{p=1}^{n}\alpha'_{pk}a'_{p}$,
the assumption $\omega\circ\varphi=\omega\circ\varphi'=\omega$ implies
$\sum_{p=1}^{n}\alpha_{pk}=\sum_{p=1}^{n}\alpha'_{pk}=1$ for all
$1\leq k\leq n$. For each $1\leq i\leq n$, we put $e_{i}=\left(a_{i},0\right)$
and $\widetilde{e}_{i}=\left(0,a'_{i}\right)$, it follows immediately
from the definition of the algebra law defined on $A\times A$ that
$e_{i}e_{j}=\widetilde{e}_{i}\widetilde{e}_{j}=\left(0,0\right)$
and
\begin{eqnarray*}
e_{i}\widetilde{e}_{j} & = & \left(\theta\sum_{k=1}^{n}\gamma_{ijk}\varphi\left(a_{k}\right),\left(1-\theta\right)\sum_{k=1}^{n}\gamma'_{ijk}\varphi'\left(a_{k}\right)\right)\\
 & = & \sum_{p=1}^{n}\sum_{k=1}^{n}\theta\gamma_{ijk}\alpha_{pk}e_{k}+\sum_{p=1}^{n}\sum_{k=1}^{n}\left(1-\theta\right)\gamma'_{ijk}\alpha'_{pk}\widetilde{e}_{k},
\end{eqnarray*}
 with $\sum_{p=1}^{n}\sum_{k=1}^{n}\theta\gamma_{ijk}\alpha_{pk}+\sum_{p=1}^{n}\sum_{k=1}^{n}\left(1-\theta\right)\gamma'_{ijk}\alpha'_{pk}=1$.\end{proof}
\begin{example}
\emph{Heredity of an autosomal gene with sex-dependent segregation
and mutation}.
\end{example}
We consider the alleles $a_{1},\ldots,a_{n}$ an autosomal gene in
a diploid bisexual population. Genotypes $\left(a_{i},a_{j}\right)$
are ordered, i.e. in $\left(a_{i},a_{j}\right)$ allele $a_{i}$ is
transmitted by the mother and $a_{j}$ is received from the father.
It is assumed that the segregation coefficients and the mutation rates
during meiosis are sex-dependent, we note $\eta_{ijk}$ (resp. $\widetilde{\eta}_{ijk}$)
the segregation coefficient of allele $a_{k}$ in a female (resp.
male) $\left(a_{i},a_{j}\right)$ and $\mu_{pk}$ (resp. $\widetilde{\mu}_{pk}$)
the mutation rate of allele $a_{k}$ to $a_{p}$ in females (resp.
males). Finally it is assumed that each generation the female rate
in the population remains constant equal to $\theta$. Then after
crossing a female $\left(a_{i},a_{j}\right)$ with a male $\left(a_{p},a_{q}\right)$,
the proportion of females (resp. males) $\left(a_{k},a_{r}\right)$
in the offspring is $\theta\sum_{p,q=1}^{n}\mu_{pi}\mu_{qj}\eta_{ijk}\eta_{pqr}$
(resp. $\left(1-\theta\right)\sum_{p,q=1}^{n}\widetilde{\mu}_{pi}\widetilde{\mu}_{qj}\widetilde{\eta}_{ijk}\widetilde{\eta}_{pqr}$).

Algebraically, we define on the vector space $V$ with basis $\left(a_{1},\ldots,a_{n}\right)$
the algebra laws: $a_{i}\circ a_{j}=\sum_{k=1}^{n}\eta_{ijk}a_{k}$
and $a_{i}\bullet a_{j}=\sum_{k=1}^{n}\widetilde{\eta}_{ijk}a_{k}$
where $\sum_{k=1}^{n}\eta_{ijk}=\sum_{k=1}^{n}\widetilde{\eta}_{ijk}=1$,
algebras $\left(V,\circ\right)$ and $\left(V,\bullet\right)$ are
weighted by $\omega\left(a_{i}\right)=1$. Then by applying the proposition
\ref{prop:AG=0000261AlgPond2Lois} with $A=V\otimes V$ the non commutative
duplicate of $V$ equipped with laws $\left(x\otimes y\right)\circ\left(x'\otimes y'\right)=\left(x\circ y\right)\otimes\left(x'\circ y'\right)$
and $\left(x\otimes y\right)\bullet\left(x'\otimes y'\right)=\left(x\bullet y\right)\otimes\left(x'\bullet y'\right)$,
the linear maps $\varphi,\varphi':A\rightarrow A$ defined by $\varphi\left(a_{i}\otimes a_{j}\right)=\sum_{p,q=1}^{n}\mu_{pi}\mu_{qj}a_{p}\otimes a_{q}$,
$\varphi'\left(a_{i}\otimes a_{j}\right)=\sum_{p,q=1}^{n}\widetilde{\mu}_{pi}\widetilde{\mu}_{qj}a_{p}\otimes a_{q}$,
we obtain the frequency distributions of genetic types after crossing
a female $e_{ij}=\left(a_{i}\otimes a_{j},0\right)$ with a male $\widetilde{e}_{ij}=\left(0,a_{i}\otimes a_{j}\right)$.
\begin{example}
\emph{Reproduction in ciliates}.
\end{example}
Ciliates (\emph{Ciliophora}) are unicellular organisms. Ciliates can
reproduce in two ways: by fission (asexual reproduction) or by conjugation
(called sexual reproduction). Ciliates have two nuclei: a large nucleus
(macronucleus) polyploid which is involved in the control of metabolism
and a smaller nucleus (micronucleus) diploid which is involved in
reproduction.

Conjugation is a cross-fertilization process, it begins when two ciliates
come together side by side and form a cytoplasmic bridge between them,
this cytoplasmic bridge ensures the exchange of genetic material.
The macronuclei degenerate while each micronucleus undergoes two meiosis
resulting in four haploid nuclei which three are destroyed, the remaining
nucleus divides to produce two haploid gametic nuclei: a resident
and a mobile. The mobile nuclei are exchanged between the two conjugants,
then the conjugants separate, the gametic nuclei fuse in a zygotic
nucleus which divides several times, among the nuclei obtained one
becomes micronucleus and the others form a macronucleus. Finally,
after conjugation we have two genetically identical individuals.

Given a diploid ciliate species with $2n$ chromosomes in the micronucleus.
For each $1\leq k\leq n$, we note $a_{1}^{k},\ldots,a_{r_{k}}^{k}$
the genetic types that can be found on the chromosome $k$ and $A_{k}$
the vector space with basis $\left(a_{1}^{k},\ldots,a_{r_{k}}^{k}\right)$
equipped with the algebra structure $a_{i}^{k}a_{j}^{k}=\frac{1}{2}a_{i}^{k}+\frac{1}{2}a_{j}^{k}$.
We define on the space $\otimes_{k=1}^{n}A_{k}$ the algebra structure:
$\left(\otimes_{k=1}^{n}a_{i_{k}}^{k}\right)*\left(\otimes_{k=1}^{n}a_{j_{k}}^{k}\right)=\otimes_{k=1}^{n}\left(a_{i_{k}}^{k}a_{j_{k}}^{k}\right)$
which gives the distribution of gametic nuclei produced by ciliates
whose micronucleus genotype is $\left(a_{i_{1}}^{1},\ldots,a_{i_{n}}^{n}/a_{j_{1}}^{1},\ldots,a_{j_{n}}^{n}\right)$.
Then by applying the proposition \ref{prop:AG=0000261AlgPond2Lois}
with $A=\left(\otimes_{k=1}^{n}A_{k}\right)\otimes\left(\otimes_{k=1}^{n}A_{k}\right)$,
taking as laws $\circ$ and $\bullet$ the law defined by $\left(x\otimes y\right)\left(x'\otimes y\right)=\left(x*y\right)\otimes\left(x'*y'\right)$.
If there is no mutation, we take the maps $\varphi=\psi=Id$ and $\theta=\frac{1}{2}$,
we obtain the genotypes distribution after conjugation of two ciliates.
\begin{prop}
\label{prop:AG=0000262lois_nonpond} Let $A$ be a $K$-vector space
equipped with two algebra laws $\circ$ and $\bullet$ to which $A$
is not necessarily commutative or baric. The $K$-space $A\otimes A$
is equipped with the multiplication: 
\[
\left(a\otimes b\right)*\left(c\otimes d\right)=\left(a\circ b\right)\otimes\left(c\bullet d\right).
\]
If there is a subalgebra $G$ of $A\otimes A$ such that $G$ is finite-dimensional,
weighted by a map $\omega$, then for all $\theta\in K$ and for all
linear maps $\varphi,\psi:G\rightarrow G$ such that $\omega\circ\varphi=\omega\circ\psi=\omega$,
the $K$-space $G\times G$ equipped with the law: 
\[
\left(x,y\right)\left(x',y'\right)=\left(\theta\varphi\left(x*y'+x'*y\right),\left(1-\theta\right)\psi\left(x*y'+x'*y\right)\right)
\]
is a gonosomal algebra.\end{prop}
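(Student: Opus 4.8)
The plan is to exhibit an explicit gonosomal basis of $G\times G$ and verify the four defining conditions of Definition \ref{def:Alggonosom}, following closely the proof of Proposition \ref{prop:AG=0000261AlgPond2Lois}. Since $G$ is finite-dimensional and weighted by the nontrivial (hence surjective) morphism $\omega$, I would first produce a basis $\left(g_{1},\ldots,g_{N}\right)$ of $G$ with $\omega\left(g_{i}\right)=1$ for all $i$: choose $g_{1}$ with $\omega\left(g_{1}\right)=1$, take any basis $h_{2},\ldots,h_{N}$ of the codimension-one subspace $\ker\omega$, and set $g_{i}=g_{1}+h_{i}$ for $i\geq2$ (equivalently, invoke the characterization of baric algebras from \cite{WB-80}, Lemma 1.10, recalled above). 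With respect to this basis $\omega$ is the coordinate-sum form, so writing $g_{i}*g_{j}=\sum_{k=1}^{N}c_{ijk}g_{k}$ and using that $\omega$ is an algebra morphism gives $\sum_{k}c_{ijk}=\omega\left(g_{i}*g_{j}\right)=\omega\left(g_{i}\right)\omega\left(g_{j}\right)=1$.

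Next I set $e_{i}=\left(g_{i},0\right)$ and $\widetilde{e}_{i}=\left(0,g_{i}\right)$ for $1\leq i\leq N$; these $2N$ vectors form a basis of $G\times G$. Commutativity of the product is immediate, as $x*y'+x'*y$ is symmetric under exchange of $\left(x,y\right)$ and $\left(x',y'\right)$. Feeding the basis vectors into the law, the two same-sex products vanish: for $e_{i}e_{j}=\left(g_{i},0\right)\left(g_{j},0\right)$ and $\widetilde{e}_{i}\widetilde{e}_{j}=\left(0,g_{i}\right)\left(0,g_{j}\right)$ the inner combination $x*y'+x'*y$ reduces to $g_{i}*0+g_{j}*0=0$ (resp. $0*g_{j}+0*g_{i}=0$), so both equal $\left(0,0\right)$. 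The only nonzero products are the mixed ones, where $x*y'+x'*y=g_{i}*g_{j}$, giving
\[
e_{i}\widetilde{e}_{j}=\widetilde{e}_{j}e_{i}=\bigl(\theta\,\varphi\left(g_{i}*g_{j}\right),\ \left(1-\theta\right)\psi\left(g_{i}*g_{j}\right)\bigr).
\]
Here I use that $g_{i}*g_{j}\in G$ because $G$ is a subalgebra of $\left(A\otimes A,*\right)$, so $\varphi$ and $\psi$ may legitimately be applied.

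It remains to rewrite the right-hand side in the basis $\left(e_{p}\right)\cup\left(\widetilde{e}_{p}\right)$ and check the weight-sum condition. Expanding $\varphi\left(g_{i}*g_{j}\right)=\sum_{p}\gamma_{ijp}g_{p}$ and $\psi\left(g_{i}*g_{j}\right)=\sum_{p}\widetilde{\gamma}_{ijp}g_{p}$ yields $e_{i}\widetilde{e}_{j}=\sum_{p}\theta\gamma_{ijp}e_{p}+\sum_{p}\left(1-\theta\right)\widetilde{\gamma}_{ijp}\widetilde{e}_{p}$, which has the shape required by Definition \ref{def:Alggonosom}. The sum of its structure constants is $\theta\sum_{p}\gamma_{ijp}+\left(1-\theta\right)\sum_{p}\widetilde{\gamma}_{ijp}=\theta\,\omega\left(\varphi\left(g_{i}*g_{j}\right)\right)+\left(1-\theta\right)\omega\left(\psi\left(g_{i}*g_{j}\right)\right)$, and the hypotheses $\omega\circ\varphi=\omega\circ\psi=\omega$ together with $\omega\left(g_{i}*g_{j}\right)=1$ collapse this to $\theta+\left(1-\theta\right)=1$. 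Hence $G\times G$ is gonosomal. There is no genuine obstacle: the argument is a direct computation, and the only point demanding slight care is the passage to a weight-one basis of $G$, which is precisely what makes the $\theta/\left(1-\theta\right)$ convex combination produce total weight $1$.
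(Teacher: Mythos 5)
Your proof is correct and follows essentially the same route as the paper's: choose a weight-one basis of $G$, set $e_{i}=\left(g_{i},0\right)$ and $\widetilde{e}_{i}=\left(0,g_{i}\right)$, observe that same-sex products vanish and mixed products reduce to $\bigl(\theta\varphi\left(g_{i}*g_{j}\right),\left(1-\theta\right)\psi\left(g_{i}*g_{j}\right)\bigr)$, and use $\omega\circ\varphi=\omega\circ\psi=\omega$ together with $\omega\left(g_{i}*g_{j}\right)=1$ to get total weight $\theta+\left(1-\theta\right)=1$. The only cosmetic difference is that you keep $\varphi\left(g_{i}*g_{j}\right)$ as a single object rather than composing the structure constants of $*$ with the matrices of $\varphi$ and $\psi$ as the paper does, which makes the final weight computation slightly cleaner.
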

\begin{proof}
The algebra $G$ admits a basis $\left(a_{1},\ldots,a_{n}\right)$
such that $\omega\left(a_{i}\right)=1$ for all $1\leq i\leq n$.
It follows that if $a_{i}*a_{j}=\sum_{k=1}^{n}\gamma_{ijk}a_{k}$
then $\sum_{k=1}^{n}\gamma_{ijk}=1$ for $1\leq i,j\leq n$. Next
if $\varphi\left(a_{i}\right)=\sum_{k=1}^{n}\alpha_{ki}a_{k}$ and
$\psi\left(a_{i}\right)=\sum_{k=1}^{n}\beta_{ki}a_{k}$, from $\omega\circ\varphi=\omega\circ\psi=\omega$
it comes $\sum_{k=1}^{n}\alpha_{ki}=\sum_{k=1}^{n}\beta_{ki}=1$ for
all $1\leq i\leq n$. It is clear that the law defined on $G\times G$
is commutative. If for all $1\leq i\leq n$ we put $e_{i}=\left(a_{i},0\right)$
and $\widetilde{e}_{i}=\left(0,a_{i}\right)$, then we have $e_{i}e_{j}=0$,
$\widetilde{e}_{p}\widetilde{e}_{q}=0$ and
\[
e_{i}\widetilde{e}_{p}=\theta\sum_{k,r=1}^{n}\gamma_{ipk}\alpha_{rk}e_{r}+\left(1-\theta\right)\sum_{k,r=1}^{n}\gamma_{ijk}\beta_{kr}\widetilde{e}_{r},
\]
with $\theta\sum_{k,r=1}^{n}\gamma_{ipk}\alpha_{rk}+\left(1-\theta\right)\sum_{k,r=1}^{n}\gamma_{ijk}\beta_{kr}=1$.
\end{proof}
\smallskip{}

\begin{example}
\emph{Genomic imprinting (or parental imprinting)}
\end{example}
In many diploid placental mammals, we observe a functional asymmetry
for some autosomal genes according to their paternal or maternal origin,
it results in the offspring by the expression of only one allele on
both. This phenomenon is called \emph{genomic (or parental) imprinting}.
The genomic imprinting is submitted to the cycle: deletion, installation,
maintenance. In every generation in each individual at the time of
gamete formation, all parental imprints are erased in the germ cells
and when these cells become mature gametes the genes subject to imprinting
are inactivated or not according to the sex of the individual, next
the imprints are transmitted by fertilization to the next generation
where they are transmitted through cell divisions throughout the life.

For example, the mutation \emph{brachyury} ($T$) of the mouse is
known since 1927, it is a dominant mutation that results in a shortening
of the tail. One of its alleles, called hairpin tail (denoted $T^{hp}$)
has a strange inheritance: the cross $\left(\text{\Female},+/+\right)\times\left(\text{\Male},T^{hp}/+\right)$,
sign $+$ denoting the normal allele, gives a descent made up 50\%
of short-tailed mice and 50\% of normal tail, while the symmetric
cross $\left(\text{\Female},T^{hp}/+\right)\times\left(\text{\text{\Male}},+/+\right)$
gives only normal tail mice. This observation which is contradictory
to the Mendel laws, is explained by maternal imprint which inactivates
allele $T^{hp}$ and thereby silencing it, which means that the embryos
$T^{hp}/+$ die in utero. It follows that we cannot observe homozygote
$T^{hp}/T^{hp}$, so the cross $\left(T^{hp}/+\right)\times\left(T^{hp}/+\right)$
produces only $T^{hp}/+$ and $+/+$ descendants.

Algebraically, we establish coding $e_{1}\leftrightarrow+$, $e_{2}\leftrightarrow T^{hp}$,
we define on the space $A$ with basis $\left(e_{1},e_{2}\right)$
the laws $\circ$ and $\bullet$ :

\hspace{2cm}%
\begin{tabular}{lll}
$e_{1}\circ e_{1}=e_{1}$, & $e_{2}\circ e_{2}=0$,  & $e_{1}\circ e_{2}=e_{2}\circ e_{1}=e_{1}$.\tabularnewline
$e_{1}\bullet e_{1}=e_{1}$, & $e_{2}\bullet e_{2}=0$,  & $e_{1}\bullet e_{2}=e_{2}\bullet e_{1}=\frac{1}{2}e_{1}+\frac{1}{2}e_{2}$.\tabularnewline
\end{tabular}

If we note $e_{ij}=e_{i}\otimes e_{j}$, by applying the proposition
\ref{prop:AG=0000262lois_nonpond} we have 
\begin{eqnarray*}
e_{11}e_{ij}=e_{12}e_{ij}=e_{21}e_{ij} & = & \begin{cases}
e_{11} & \mbox{si }\left(i,j\right)=\left(1,1\right)\\
\frac{1}{2}e_{11}+\frac{1}{2}e_{12} & \mbox{si }\left(i,j\right)=\left(1,2\right),\left(2,1\right)\\
0 & \mbox{si }\left(i,j\right)=\left(2,2\right)
\end{cases}\\
e_{22}e_{ij} & = & 0
\end{eqnarray*}
and taking for $G$ the space with basis $\left(e_{11},e_{12}\right)$
weighted by $\omega\left(e_{11}\right)=\omega\left(e_{12}\right)=1$,
the maps $\varphi=\psi=Id$ and $\theta$ the female proportion by
generation, we obtain an algebraic model for the transmission of \emph{brachyury
}mutation.

\smallskip{}

\begin{prop}
\label{prop:AG=000026EV2lois_sanspond} Let $A$ be a $K$-vector
space, $A_{1},A_{2}\subset A$, $A_{1}\cap A_{2}=\left\{ 0\right\} $
two finite-dimensional subspaces with respective bases $\left(a_{1,i}\right)_{1\leq i\leq n_{1}}$
and $\left(a_{2,i}\right)_{1\leq i\leq n_{2}}$. If it exists on $A_{1}\oplus A_{2}$
two algebra laws $\circ$ and $\bullet$ verifying $A_{1}\circ A_{1}\cup A_{1}\circ A_{2}\subset A_{1}$,
$A_{1}\bullet A_{2}\cup A_{2}\bullet A_{2}\subset A_{2}$ and if the
linear map $\eta:A_{1}+A_{2}\rightarrow K$ defined by $\eta\left(a_{1,i}\right)=\eta\left(a_{2,j}\right)=1$
verify $\eta\left(a_{1,i}\circ a_{2,j}\right)=1$, $\eta\left(a_{1,i}\bullet a_{2,j}\right)=1$,
then for all $\theta\in K$ and for all linear maps $\varphi_{1}:A_{1}\rightarrow A_{1}$
and $\varphi_{2}:A_{2}\rightarrow A_{2}$ such that $\eta\circ\varphi_{1}=\eta\circ\varphi_{2}=\eta$,
the space $A_{1}\times A_{2}$ with multiplication
\[
\left(x_{1},x_{2}\right)\left(y_{1},y_{2}\right)=\left(\theta\varphi_{1}\left(x_{1}\circ y_{2}+y_{1}\circ x_{2}\right),\left(1-\theta\right)\varphi_{2}\left(x_{1}\bullet y_{2}+y_{1}\bullet x_{2}\right)\right)
\]
is a gonosomal algebra.\end{prop}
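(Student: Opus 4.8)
The plan is to follow the same template as the preceding three propositions: produce an explicit gonosomal basis of $A_{1}\times A_{2}$ and verify the defining identities of Definition \ref{def:Alggonosom} by direct computation. Concretely, I would set $e_{i}=\left(a_{1,i},0\right)$ for $1\leq i\leq n_{1}$ and $\widetilde{e}_{p}=\left(0,a_{2,p}\right)$ for $1\leq p\leq n_{2}$; since $A_{1}\cap A_{2}=\left\{0\right\}$ these $n_{1}+n_{2}$ vectors form a basis of $A_{1}\times A_{2}$. Commutativity of the product is immediate, since interchanging the two factors replaces $x_{1}\circ y_{2}+y_{1}\circ x_{2}$ by $y_{1}\circ x_{2}+x_{1}\circ y_{2}$, and likewise for $\bullet$, leaving the product unchanged.

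First I would dispose of the same-sex products. Evaluating the law on $e_{i},e_{j}$ gives $e_{i}e_{j}=\left(\theta\varphi_{1}\left(a_{1,i}\circ 0+a_{1,j}\circ 0\right),\left(1-\theta\right)\varphi_{2}\left(a_{1,i}\bullet 0+a_{1,j}\bullet 0\right)\right)=\left(0,0\right)$, and the same cancellation yields $\widetilde{e}_{p}\widetilde{e}_{q}=\left(0,0\right)$; here every surviving term carries a zero second (resp. first) coordinate, so both vanish regardless of the structure constants.

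The heart of the argument is the mixed product. Evaluating on $e_{i}$ and $\widetilde{e}_{p}$ gives $e_{i}\widetilde{e}_{p}=\left(\theta\varphi_{1}\left(a_{1,i}\circ a_{2,p}\right),\left(1-\theta\right)\varphi_{2}\left(a_{1,i}\bullet a_{2,p}\right)\right)$. This is where the containment hypotheses enter: from $A_{1}\circ A_{2}\subset A_{1}$ the element $a_{1,i}\circ a_{2,p}$ lies in $A_{1}$, so I can write $a_{1,i}\circ a_{2,p}=\sum_{k=1}^{n_{1}}\gamma_{ipk}a_{1,k}$, and applying $\eta$ together with the hypothesis $\eta\left(a_{1,i}\circ a_{2,j}\right)=1$ forces $\sum_{k}\gamma_{ipk}=1$; symmetrically $A_{1}\bullet A_{2}\subset A_{2}$ gives $a_{1,i}\bullet a_{2,p}=\sum_{r=1}^{n_{2}}\widetilde{\gamma}_{ipr}a_{2,r}$ with $\sum_{r}\widetilde{\gamma}_{ipr}=1$. (The two remaining inclusions $A_{1}\circ A_{1}\subset A_{1}$ and $A_{2}\bullet A_{2}\subset A_{2}$ are not actually needed to verify the gonosomal axioms; they serve only to make the setup coherent.) Writing $\varphi_{1}\left(a_{1,k}\right)=\sum_{l}\alpha_{lk}a_{1,l}$ and $\varphi_{2}\left(a_{2,r}\right)=\sum_{s}\beta_{sr}a_{2,s}$, the condition $\eta\circ\varphi_{1}=\eta\circ\varphi_{2}=\eta$ yields $\sum_{l}\alpha_{lk}=\sum_{s}\beta_{sr}=1$, and substituting back produces
\[
e_{i}\widetilde{e}_{p}=\theta\sum_{k,l}\gamma_{ipk}\alpha_{lk}\,e_{l}+\left(1-\theta\right)\sum_{r,s}\widetilde{\gamma}_{ipr}\beta_{sr}\,\widetilde{e}_{s},
\]
which has exactly the shape demanded by Definition \ref{def:Alggonosom}.

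The only point requiring care is the normalization. Summing all structure constants and factoring the inner sums gives $\theta\sum_{k}\gamma_{ipk}\left(\sum_{l}\alpha_{lk}\right)+\left(1-\theta\right)\sum_{r}\widetilde{\gamma}_{ipr}\left(\sum_{s}\beta_{sr}\right)=\theta\cdot 1+\left(1-\theta\right)\cdot 1=1$, so the weight-one condition holds and $A_{1}\times A_{2}$ is gonosomal. I do not expect a serious obstacle here; the proof is a routine verification, and the slightly delicate bookkeeping is simply tracking how the four normalizations (the values of $\eta$ on the $\circ$- and $\bullet$-products, and the $\eta$-invariance of $\varphi_{1}$ and $\varphi_{2}$) recombine through the $\theta,1-\theta$ split to give total weight $1$.
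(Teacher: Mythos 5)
Your proof is correct and follows essentially the same route as the paper's: the same choice of basis $e_{i}=\left(a_{1,i},0\right)$, $\widetilde{e}_{p}=\left(0,a_{2,p}\right)$, the same expansion of the $\circ$- and $\bullet$-products and of $\varphi_{1},\varphi_{2}$ in coordinates, and the same $\theta,1-\theta$ bookkeeping for the weight-one condition. Your side remark that only the inclusions $A_{1}\circ A_{2}\subset A_{1}$ and $A_{1}\bullet A_{2}\subset A_{2}$ are actually used in the verification is accurate.
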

\begin{proof}
This multiplication defined on $A_{1}\times A_{2}$ is commutative.
If we put $a_{1,i}\circ a_{2,j}=\sum_{k=1}^{n_{1}}\gamma_{ijk}a_{1,k}$
and $a_{1,i}\bullet a_{2,j}=\sum_{k=1}^{n_{2}}\gamma'_{ijk}a_{2,k}$,
from $\eta\left(a_{1,i}\circ a_{2,j}\right)=1$, $\eta\left(a_{1,i}\bullet a_{2,j}\right)=1$
it comes $\sum_{k=1}^{n_{1}}\gamma_{ijk}=\sum_{k=1}^{n_{2}}\gamma'_{ijk}=1$.
If $\left(\alpha_{ij}\right)_{1\leq i,j\leq n_{1}}$ and $\bigl(\alpha'_{ij}\bigr)_{1\leq i,j\leq n_{2}}$
are respectively matrices of the maps $\varphi_{1}$ and $\varphi_{2}$
in bases $\left(a_{1,i}\right)_{1\leq i\leq n_{1}}$ and $\left(a_{2,i}\right)_{1\leq i\leq n_{2}}$,
from $\eta\circ\varphi_{1}=\eta\circ\varphi_{2}=\eta$ it comes $\sum_{i=1}^{n_{1}}\alpha_{ij}=1$
and $\sum_{i=1}^{n_{2}}\alpha'_{ij}=1$. Then if we put $e_{i}=\left(a_{1,i},0\right)$
and $\widetilde{e}_{p}=\left(0,a_{2,p}\right)$ we have $e_{i}e_{j}=0$,
$\widetilde{e}_{p}\widetilde{e}_{q}=0$ and
\begin{eqnarray*}
e_{i}\widetilde{e}_{p} & = & \left(\theta\sum_{k=1}^{n_{1}}\gamma_{ipk}\varphi_{1}\left(a_{1,k}\right),\left(1-\theta\right)\sum_{k=1}^{n_{2}}\gamma'_{ipk}\varphi_{2}\left(a_{2,k}\right)\right)\\
 & = & \theta\sum_{k,l=1}^{n_{1}}\gamma_{ipk}\alpha_{lk}e_{l}+\left(1-\theta\right)\sum_{k,l=1}^{n_{2}}\gamma'_{ipk}\alpha'_{lk}\widetilde{e}_{l}
\end{eqnarray*}
with $\theta\sum_{k,l=1}^{n_{1}}\gamma_{ipk}\alpha_{lk}+\left(1-\theta\right)\sum_{k,l=1}^{n_{2}}\gamma'_{ipk}\alpha'_{lk}=1$.
\end{proof}
\smallskip{}

\begin{example}
\emph{X-inactivation (or lyonization)}.
\end{example}
In most placental or marsupial mammals, one of the $X$ chromosome
in the female genome is inactive: genes carried by this chromosome
are not expressed throughout the lifetime. The $X$-inactivation occurs
upon implantation of the egg, it is random in placental mammals and
in marsupials it is always the $X$ chromosome inherited from the
father who is inactived.

We use the proposition \ref{prop:AG=000026EV2lois_sanspond} to give
an algebraic model of the $X$-inactivation. We consider the alleles
$a_{1},\ldots,a_{n}$ of a gonosomal gene, we note $a_{1},\ldots,a_{n}$
(resp. $a_{1}^{*},\ldots,a_{n}^{*}$) when these alleles are active
(resp. silencer). Let $V$ be the vector space with basis $\left(a_{i},a_{i}^{*}\right)_{1\leq i\leq n}$
and $A=V+V\otimes V$. The space $A_{1}$ spanned by $\left\{ a_{i}^{*}\otimes a_{j},a_{i}\otimes a_{j}^{*};1\leq i,j\leq n\right\} $
represents ordered female genotypes, the space $A_{2}=V$ gives male
genotypes. If the $X$ chromosome inactivation rate of maternal origin
is noted by $\tau$, we define on $A_{1}\oplus A_{2}$ the laws $\circ$
and $\bullet$ : 
\begin{eqnarray*}
a_{i}^{*}\otimes a_{j}\circ a_{k}=a_{i}\otimes a_{j}^{*}\circ a_{k} & = & \frac{\tau}{2}\left(a_{i}^{*}\otimes a_{k}+a_{j}^{*}\otimes a_{k}\right)+\frac{1-\tau}{2}\left(a_{i}\otimes a_{k}^{*}+a_{j}\otimes a_{k}^{*}\right)\\
a_{i}\bullet a_{j} & = & \frac{1}{2}\left(a_{i}+a_{j}\right).
\end{eqnarray*}

If $\theta$ means the rate of females in the population, and if we
take $\varphi_{i}=id_{A_{i}}$, $i=1,2$, then the multiplication
defined in the proposition \ref{prop:AG=000026EV2lois_sanspond} gives
the genotype distribution in the absence of mutation of a cross between
a female genotype $\left(a_{i}^{*}\otimes a_{j},0\right)$ or $\left(a_{i}\otimes a_{j}^{*},0\right)$
with a male genotype $\left(0,a_{k}\right)$. In the presence of mutation,
we define $\varphi_{1}\left(a_{i}^{*}\otimes a_{j}\right)=\sum_{p=1}^{n}\mu_{kj}a_{i}^{*}\otimes a_{k}$,
$\varphi_{1}\left(a_{i}\otimes a_{j}^{*}\right)=\sum_{p=1}^{n}\mu_{ki}a_{i}\otimes a_{k}^{*}$
and $\varphi_{2}\left(a_{i}\right)=\sum_{p=1}^{n}\widetilde{\mu}_{ki}a_{k}$
where $\mu_{ki}$ (resp. $\widetilde{\mu}_{ki}$) is the mutation
rate of the allele $a_{i}$ to the allele $a_{k}$ in females (resp.
males).
\begin{example}
\emph{Sex determination by elimination of sex chromosomes}.

In \emph{Sciaridae} sex is determined by the gonosome $X$: females
are $XX$ and males $X0$ with the peculiarity that \emph{Sciaridae}
males are obtained by elimination of all chromosomes coming from the
father. We note respectively $A^{m}$ and $A^{p}$ the set of autosomes
coming from the mother and father, $X^{m}$ and $X^{p}$ a gonosome
$X$ transmitted by the mother and father. In Sciaridae after fertilization
zygotes are of genotype $A^{m}A^{p}X^{m}X^{p}X^{p}$, during the formation
of the somatic lineage, one gonosome $X^{p}$ is eliminated from the
cells of the somatic female lineages while both $X^{p}$ are eliminated
in the somatic male lineages. In male germline during spermatogenesis
cells lose chromosomes $A^{p}$ and $X^{p}$, the $X^{m}$ gonosome
is replicated, finally we obtain a $A^{m}X^{m}X^{m}$ type spermatozoa.
In females, oogenesis proceeds normally and it leads to $AX$ type
ovules.

We must add that according to the composition of female progeny two
types of Sciaridae are distinguished: digenic or monogenic. In digenic
Sciaridae the progeny of a female consists of males and females with
a sex-ratio different from 1. In monogenic Sciaridae, descendants
are all of the same sex, the females are called androgenic when they
produce only males and gynogenic when they produce only females. There
are also Sciaridae which are monogenic and digenic.

\smallskip{}

Sex determination in digenic Sciaridae can be represented algebraically
as follows. Let $V$, $G$, $O$ spaces with respective bases $\left(a_{1},\ldots,a_{n}\right)$,
$\left(g_{1},\ldots,g_{m}\right)$ and $\left(o\right)$. We consider
the space $A=V\otimes V\otimes G\otimes\left(G+O\right)$. An element
$a_{i}\otimes a_{j}\otimes g_{k}\otimes g$ of the basis of $A$,
or $\left(a_{i}a_{j}g_{k}g\right)$ in an abbreviated form, is the
genotype of an adult, where $a_{i}$ (resp. $a_{j}$) represents autosomes
coming from the mother (resp. father), $g_{k}$ a gonosome $X$ transmitted
by the mother and $g$ indicate the sex of the individual with $g\in\left\{ g_{1},\ldots,g_{m}\right\} $
if female and in this case $g$ was transmitted by the father, or
$g=o$ if male. We apply the proposition \ref{prop:AG=000026EV2lois_sanspond}
by taking $A_{1}=V\otimes V\otimes G\otimes G$, $A_{2}=V\otimes V\otimes G\otimes O$
equipped with multiplications:
\begin{eqnarray*}
\left(a_{i}a_{j}g_{k}g\right)\circ\left(a{}_{p}a{}_{q}g{}_{r}g'\right) & = & \begin{cases}
\frac{1}{4}\left(a_{i}+a_{j}\right)\otimes a{}_{p}\otimes\left(g_{k}+g\right)\otimes g{}_{r} & \mbox{if }\left(g,g'\right)\in G\times O,\\
0 & \mbox{otherwise,}
\end{cases}\\
\left(a_{i}a_{j}g_{k}g\right)\bullet\left(a{}_{p}a{}_{q}g{}_{r}g'\right) & = & \begin{cases}
\frac{1}{4}\left(a_{i}+a_{j}\right)\otimes a{}_{p}\otimes\left(g_{k}+g\right)\otimes o & \mbox{if }\left(g,g'\right)\in G\times O,\\
0 & \mbox{otherwise,}
\end{cases}
\end{eqnarray*}
with $\eta\left(a_{i}a_{j}g_{k}g_{l}\right)=\eta\left(a_{i}a_{j}g_{k}o\right)=1$,
$\varphi=\psi=Id$ and $\theta=\frac{1}{1+\sigma}$ where $\sigma$
is the sex-ratio of the population.

\smallskip{}

The Sciaridae monogeny depends on a particular gonosome $X$ noted
$X^{*}$: $X^{*}X$ females are gynogenic while $XX$ females are
androgenic. Algebraically, let $V$, $G^{*}$, $G$, $O$ spaces with
respective bases $\left(a_{1},\ldots,a_{n}\right)$, $\left(g_{1}^{*},\ldots,g_{p}^{*}\right)$,
$\left(g_{1},\ldots,g_{m}\right)$ and $\left(o\right)$, where $a_{k}$
represents an autosomal type, $g_{k}^{*}$ a gonosome determining
gynogeny, $g_{k}$ a $X$ gonosome type and $o$ is associated with
the male sex. Applying the proposition \ref{prop:AG=000026EV2lois_sanspond}
with the spaces $A=V\otimes V\otimes\left(G^{*}+G\right)\otimes\left(G+O\right)$,
$A_{1}=V\otimes V\otimes\left(G^{*}+G\right)\otimes G$, $A_{2}=V\otimes V\otimes G\otimes O$,
we define the laws $\circ$ and $\bullet$ by :
\begin{eqnarray*}
\left(a_{i}a_{j}g_{k}^{*}g\right)\circ\left(a{}_{p}a{}_{q}g{}_{r}g'\right) & = & \frac{1}{4}\left(a_{i}+a_{j}\right)\otimes a{}_{p}\otimes\left(g_{k}^{*}+g\right)\otimes g{}_{r}\mbox{ if }\left(g,g'\right)\in G\times O,\\
\left(a_{i}a_{j}g_{k}g_{l}\right)\circ\left(a{}_{p}a{}_{q}g{}_{r}g'\right) & = & 0\mbox{ if }g'\in G+O,\\
\left(a_{i}a_{j}g_{k}^{*}g_{l}\right)\bullet\left(a{}_{p}a{}_{q}g{}_{r}g'\right) & = & 0\mbox{ if }g'\in G+O,\\
\left(a_{i}a_{j}g_{k}g\right)\bullet\left(a{}_{p}a{}_{q}g{}_{r}g'\right) & = & \frac{1}{4}\left(a_{i}+a_{j}\right)\otimes a{}_{p}\otimes\left(g_{k}+g\right)\otimes o\mbox{ if }\left(g,g'\right)\in G\times O,
\end{eqnarray*}
with $\eta\left(a_{i}a_{j}g_{k}g_{l}\right)=\eta\left(a_{i}a_{j}g_{k}^{*}g_{l}\right)=\eta\left(a_{i}a_{j}g_{k}o\right)=1$;
$\varphi=\psi=Id$ and $0\leq\theta\leq1$ corresponds to the proportion
of gynogeny ($\theta=1$ in the case of gynogenic \emph{Sciaridae},
$\theta=0$ for monogenic).
\end{example}
\smallskip{}

\subsection{Construction from three linear forms and three linear maps.}

\textcompwordmark{}\smallskip{}

This construction applies to the case of ordered genotypes where take
into account of maternal and paternal origin of genes.
\begin{prop}
\label{prop:AGet3FormLin} Let $A$, $A'$, $\widetilde{A}$ be finite-dimensional
$K$-vector spaces; $\omega:A\rightarrow K$, $\omega':A'\rightarrow K$,
$\widetilde{\omega}:\widetilde{A}\rightarrow K$ nonzero linear forms
and $\varphi:A\otimes A'\rightarrow A$, $\varphi':A\otimes\widetilde{A}\rightarrow A'$,
$\widetilde{\varphi}:A\otimes\widetilde{A}\rightarrow\widetilde{A}$
three linear maps such that $\omega\circ\varphi=\omega\otimes\omega'$
and $\omega'\circ\varphi'+\widetilde{\omega}\circ\widetilde{\varphi}=\omega\otimes\widetilde{\omega}$,
then the $K$-space $A\otimes A'\oplus A\otimes\widetilde{A}$ equipped
with the algebra structure:
\[
\left(x\oplus y\right)\left(x'\oplus y'\right)=\left[\varphi\left(x\right)\otimes\varphi'\left(y'\right)+\varphi\left(x'\right)\otimes\varphi'\left(y\right)\right]\oplus\left[\varphi\left(x\right)\otimes\widetilde{\varphi}\left(y'\right)+\varphi\left(x'\right)\otimes\widetilde{\varphi}\left(y\right)\right]
\]
 is a agonosomal algebra.\end{prop}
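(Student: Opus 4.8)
The plan is to exhibit an explicit gonosomal basis of the space $B = A \otimes A' \oplus A \otimes \widetilde{A}$ and to check the four defining relations of Definition \ref{def:Alggonosom}, reusing the weight-normalization device already employed in the proof of Proposition \ref{pro:AGetDupliquee}.

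First I would record commutativity: interchanging $x \oplus y$ and $x' \oplus y'$ only swaps the two summands inside each bracket of the product, so the law is commutative and it suffices to compute products of basis vectors in one order. Next I would set up the bases. Since $\omega$ and $\omega'$ are nonzero, $W = \omega \otimes \omega'$ is a nonzero form on $A \otimes A'$, and likewise $\widetilde{W} = \omega \otimes \widetilde{\omega}$ is nonzero on $A \otimes \widetilde{A}$; exactly as in Proposition \ref{pro:AGetDupliquee} I would pick bases $(x_1, \ldots, x_n)$ of $A \otimes A'$ and $(y_1, \ldots, y_m)$ of $A \otimes \widetilde{A}$ normalized so that $W(x_i) = 1$ and $\widetilde{W}(y_p) = 1$ for all $i, p$ (start from arbitrary bases, rescale the vectors of nonzero weight and add a fixed weight-one vector to those of zero weight). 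Setting $e_i = x_i \oplus 0$ and $\widetilde{e}_p = 0 \oplus y_p$ then gives a basis $\mathcal{B}$ of $B$.

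I would then evaluate the three products on $\mathcal{B}$. Putting $y = y' = 0$ makes both brackets vanish through $\varphi'(0) = \widetilde{\varphi}(0) = 0$, so $e_i e_j = 0$; putting $x = x' = 0$ makes them vanish through $\varphi(0) = 0$, so $\widetilde{e}_p \widetilde{e}_q = 0$. In the mixed product only the terms with $x = x_i$ and $y' = y_p$ survive, giving
\[
e_i \widetilde{e}_p = \bigl( \varphi(x_i) \otimes \varphi'(y_p) \bigr) \oplus \bigl( \varphi(x_i) \otimes \widetilde{\varphi}(y_p) \bigr),
\]
which lies in $A \otimes A' \oplus A \otimes \widetilde{A}$ and so expands as $\sum_k \gamma_{ipk} e_k + \sum_r \widetilde{\gamma}_{ipr} \widetilde{e}_r$ on $\mathcal{B}$.

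The only place the hypotheses are needed is the normalization $\sum_k \gamma_{ipk} + \sum_r \widetilde{\gamma}_{ipr} = 1$, and this is the crux of the argument. Because $W(x_k) = \widetilde{W}(y_r) = 1$, the left-hand side equals $W\bigl(\varphi(x_i) \otimes \varphi'(y_p)\bigr) + \widetilde{W}\bigl(\varphi(x_i) \otimes \widetilde{\varphi}(y_p)\bigr)$, which I would factor as
\[
\omega\bigl(\varphi(x_i)\bigr) \bigl[ \omega'\bigl(\varphi'(y_p)\bigr) + \widetilde{\omega}\bigl(\widetilde{\varphi}(y_p)\bigr) \bigr].
\]
Here the first hypothesis $\omega \circ \varphi = \omega \otimes \omega'$ forces $\omega(\varphi(x_i)) = W(x_i) = 1$, while the second hypothesis $\omega' \circ \varphi' + \widetilde{\omega} \circ \widetilde{\varphi} = \omega \otimes \widetilde{\omega}$ forces the bracket to equal $\widetilde{W}(y_p) = 1$; hence the product is $1$. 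Recognizing that the total weight of $e_i \widetilde{e}_p$ splits precisely into these two factors---one controlled by each compatibility condition---is the main (and essentially the only) thing to see; everything else is the routine bookkeeping already carried out for Proposition \ref{pro:AGetDupliquee}.
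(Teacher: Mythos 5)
Your proof is correct and follows essentially the same route as the paper's: normalize a basis so every basis vector has weight one, observe that the mixed product $e_i\widetilde{e}_p$ equals $\varphi(x_i)\otimes\varphi'(y_p)\oplus\varphi(x_i)\otimes\widetilde{\varphi}(y_p)$, and obtain the normalization of the structure constants by factoring the total weight as $\omega(\varphi(x_i))\bigl[\omega'(\varphi'(y_p))+\widetilde{\omega}(\widetilde{\varphi}(y_p))\bigr]$ and applying the two hypotheses. The only (cosmetic) difference is that you work with arbitrary bases of $A\otimes A'$ and $A\otimes\widetilde{A}$ normalized against $\omega\otimes\omega'$ and $\omega\otimes\widetilde{\omega}$, whereas the paper uses the product bases $a_i\otimes a'_j$, $a_l\otimes\widetilde{a}_k$ together with indexing bijections $\sigma,\tau$; this spares you the double-index bookkeeping but changes nothing essential.
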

\begin{proof}
Note that for all $x\in A\otimes A'$ and $y\in A\otimes\widetilde{A}$,
identifying $x\oplus0$ to $x$ and $0\oplus y'$ to $y'$, the multiplication
$A\otimes A'\oplus A\otimes\widetilde{A}$ given in the statement
becomes: 
\[
xy'=\varphi\left(x\right)\otimes\varphi'\left(y'\right)\oplus\varphi\left(x\right)\otimes\widetilde{\varphi}\left(y'\right),\qquad\left(*\right)
\]
and furthermore we have $y'x=xy'$, $xx'=yy'=0$ for all $x,x'\in A\otimes A'$
and $y,y'\in A\otimes\widetilde{A}$. 

Linear forms $\omega$, $\omega'$, $\widetilde{\omega}$ being nonzero
and spaces $A$, $A'$, $\widetilde{A}$ being finite-dimensional,
they admit bases $\left(a_{i}\right)_{1\leq i\leq n}$, $\bigl(a'_{j}\bigr)_{1\leq j\leq m}$
and $\left(\widetilde{a}_{k}\right)_{1\leq k\leq p}$ such that $\omega\left(a_{i}\right)=1$,
$\omega\bigl(a'_{j}\bigr)=1$ and $\omega\left(\widetilde{a}_{k}\right)=1$
(cf. proof of the proposition \ref{pro:AGetDupliquee}). Let $\sigma:\left[\!\left[1,n\right]\!\right]\times\left[\!\left[1,m\right]\!\right]\rightarrow\left[\!\left[1,nm\right]\!\right]$
and $\tau:\left[\!\left[1,n\right]\!\right]\times\left[\!\left[1,p\right]\!\right]\rightarrow\left[\!\left[1,np\right]\!\right]$
bijections, we define the maps $\varphi\bigl(a_{i}\otimes a'_{j}\bigr)=\sum_{r=1}^{n}\lambda_{r,\sigma\left(i,j\right)}a_{r}$,
$\varphi'\left(a_{i}\otimes\widetilde{a}{}_{k}\right)=\sum_{s=1}^{m}\mu_{s,\tau\left(i,k\right)}a'_{s}$
and $\widetilde{\varphi}\left(a_{i}\otimes\widetilde{a}{}_{k}\right)=\sum_{t=1}^{p}\nu_{t,\tau\left(i,k\right)}\widetilde{a}{}_{t}$.
With this, according to $\left(*\right)$ we have:
\[
\left(a_{i}\otimes a'_{j}\right)\left(a_{l}\otimes\widetilde{a}{}_{k}\right)=\sum_{r=1}^{n}\sum_{s=1}^{m}\lambda_{r,\sigma\left(i,j\right)}\mu_{s,\tau\left(l,k\right)}a_{r}\otimes a'_{s}\oplus\sum_{r=1}^{n}\sum_{t=1}^{p}\lambda_{r,\sigma\left(i,j\right)}\nu_{t,\tau\left(l,k\right)}a_{r}\otimes\widetilde{a}{}_{t},\quad\left(**\right)
\]
but we have
\[
\sum_{r=1}^{n}\lambda_{r,\sigma\left(i,j\right)}=\omega\left(\varphi\bigl(a_{i}\otimes a'_{j}\bigr)\right)=\omega\otimes\omega'\bigl(a_{i}\otimes a'_{j}\bigr)=\omega\left(a_{i}\right)\omega'\left(a'_{j}\right)=1,
\]
and also
\[
\sum_{s=1}^{n}\mu_{s,\tau\left(l,j\right)}+\sum_{t=1}^{p}\nu_{t,\tau\left(l,k\right)}=\omega'\left(\varphi'\left(a_{i}\otimes\widetilde{a}{}_{k}\right)\right)+\widetilde{\omega}\left(\widetilde{\varphi}\left(a_{i}\otimes\widetilde{a}{}_{k}\right)\right)=\omega\otimes\omega'\left(a_{i}\otimes\widetilde{a}{}_{k}\right)=1,
\]
hence 
\[
\sum_{r=1}^{n}\sum_{s=1}^{m}\lambda_{r,\sigma\left(i,j\right)}\mu_{s,\tau\left(l,k\right)}+\sum_{r=1}^{n}\sum_{t=1}^{p}\lambda_{r,\sigma\left(i,j\right)}\nu_{t,\tau\left(l,k\right)}=1.
\]
 Finally, by putting $e_{\sigma\left(i,j\right)}=a_{i}\otimes a'_{j}$,
$\widetilde{e}_{\tau\left(l,k\right)}=a_{l}\otimes\widetilde{a}{}_{k}$,
$\gamma_{\sigma\left(i,j\right),\tau\left(l,k\right),\sigma\left(r,s\right)}=\lambda_{r,\sigma\left(i,j\right)}\mu_{s,\tau\left(l,k\right)}$
and $\widetilde{\gamma}_{\sigma\left(i,j\right),\tau\left(l,k\right),\tau\left(r,t\right)}=\lambda_{r,\sigma\left(i,j\right)}\nu_{t,\tau\left(l,k\right)}$,
the relation $\left(**\right)$ is written in the form: 
\[
e_{\sigma\left(i,j\right)}\widetilde{e}_{\tau\left(l,k\right)}=\sum_{\sigma\left(r,s\right)=1}^{nm}\gamma_{\sigma\left(i,j\right),\tau\left(l,k\right),\sigma\left(r,s\right)}e_{\sigma\left(r,s\right)}+\sum_{\tau\left(r,t\right)=1}^{np}\widetilde{\gamma}_{\sigma\left(i,j\right),\tau\left(l,k\right),\tau\left(r,t\right)}\widetilde{e}_{\tau\left(r,t\right)}
\]
with $\sum_{\sigma\left(r,s\right)=1}^{nm}\gamma_{\sigma\left(i,j\right),\tau\left(l,k\right),\sigma\left(r,s\right)}+\sum_{\tau\left(r,t\right)=1}^{np}\widetilde{\gamma}_{\sigma\left(i,j\right),\tau\left(l,k\right),\tau\left(r,t\right)}=1$,
which establishes that the algebra $A\otimes A'\oplus A\otimes\widetilde{A}$
is gonosomal. 
\end{proof}
We are going to apply this result to to an exceptional mode of reproduction
in the living world.
\begin{example}
\emph{Reproduction of a triploid} (\emph{Bufo baturae}).

In nature triploid individuals appear from a cross between a tetraploid
($4n$) and a diploid ($2n$) as the result of the union between a
diploid ($2n$) and a haploid ($n$) gamete which gives a triploid
zygote ($3n$). Triploids are sterile, this is due to the mechanism
of gamete formation. Indeed, during meiosis chromosomes are associated
by homologous pairs before being divided equally for each group in
different gametes and this independently for each group. As a result,
in triploids for each homologous pair, a gamete receives two chromosomes
and an other only one. Since the distribution of chromosomes is independent
of a homologous pair to another, the probability of having a haploid
or diploid gamete in a $3n$ triploid is $\left(\nicefrac{1}{2}\right)^{n-1}$,
for example for $3n=33$ this probability is $\nicefrac{1}{1024}$.
In theory there exists a possibility that two haploid or diploid gametes
unite to give an euploid zygote, but the probability of this event
is very low. The triploid sterility is used in agriculture to produce
seedless fruits (banana, watermelon, grapes \dots ). 

From the above we can understand the surprise of biologists discovering
that the toad \emph{Bufo baturae}, who lives in the desert mountains
of northern Pakistan, is triploid ($3n=33$) and reproduces. The explanation
of this particularism was found in male and female gametogenesis.
The \emph{B. baturae} genome contains two sets of chromosomes carrying
a nucleolar organizer (nucleolus-organizing region or $NOR$) and
one set without $NOR$, noted $NOR^{-}$ in the following. In males,
the cells at the origin of the sperm undergo at first elimination
of the set of $NOR^{-}$ chromosomes, so we obtain diploid spermatocytes
which, after a normal meiosis with possibilities of recombination,
give haploid sperm of $NOR^{+}$ type. In females, there are two kinds
of ovogenesis. In one, as in males, there is at first elimination
of one genome $NOR^{+}$, followed by the duplication of the remaining
chromosomes which leads to the formation of tetraploid cells, followed
by a meiosis without recombinations. In the other, there is at first
duplication of the set $NOR^{-}$ followed by a normal meiosis with
recombination between $NOR^{+}$ chromosomes. In both cases it leads
to diploid eggs of $NOR^{+}/NOR^{-}$ type whose fusion with the sperm
gives triploid zygotes containing two sets of $NOR^{+}$ and one of
$NOR^{-}$.\smallskip{}

We can apply the proposition \ref{prop:AGet3FormLin} to this situation.
Note $\left(a_{i}\right)_{1\leq i\leq n}$ (resp. $\left(\widetilde{a}_{j}\right)_{1\leq j\leq m}$)
the sets of chromosomes $NOR^{+}$ containing the chromosome $X$
(resp. $Y$) and $\left(a_{k}^{-}\right)_{1\leq k\leq p}$ the sets
$NOR^{-}$. We consider the spaces $A$ with basis $\left(a_{i}\otimes a_{k}^{-}\right)_{i,k}$,
$A'$ with basis $\left(a_{i}\right)_{i}$ and $\widetilde{A}$ with
basis $\left(\widetilde{a}_{j}\right)_{j}$, we give the linear forms
$\omega\left(a_{i}\otimes a_{k}^{-}\right)=1$, $\omega'\left(a_{i}\right)=1$
and $\widetilde{\omega}\left(\widetilde{a}_{j}\right)=1$. In this
model an element $a_{i}\otimes a_{k}^{-}\otimes a_{j}$ (resp. $a_{i}\otimes a_{k}^{-}\otimes\widetilde{a}_{j}$)
in the basis of the space $A\otimes A'$ (resp. $A\otimes\widetilde{A}$)
represents a female (resp male) karyotype in which sets $a_{i}$,
$a_{k}^{-}$ were transmitted by the mother and sets $a_{j}$ and
$\widetilde{a}_{j}$ come from father. 

Let $\theta$ the rate of oocytes losing a set $NOR^{+}$ and $\rho_{ijq}$
the recombination rate between the sets $a_{i}$ and $a_{j}$ resulting
in the set $a_{q}$, thus we have $\sum_{q=1}^{n}\rho_{ijq}=1$, then
the map $\varphi:A\otimes A'\rightarrow A$, $\varphi\left(a_{i}\otimes a_{k}^{-}\otimes a_{j}\right)=\frac{\theta}{2}\left(a_{i}a_{k}^{-}+a_{j}a_{k}^{-}\right)+\left(1-\theta\right)\sum_{q=1}^{n}\rho_{ijq}a_{q}\otimes a_{k}^{-}$
gives the distribution of egg types produced by a female of karyotype
$a_{i}\otimes a_{k}^{-}\otimes a_{j}$. If we note $\rho'_{ijq}$
and $\widetilde{\rho}_{ijq}$ the recombination rates between chromosome
sets $a_{i}$ and $a_{j}$ in spermatocytes of karyotype $a_{i}\otimes a_{k}^{-}\otimes\widetilde{a}_{j}$
that give sperm of $a'_{q}$ and $\widetilde{a}_{q}$ type, we have
$\sum_{q=1}^{n}\left(\rho'_{ijq}+\widetilde{\rho}_{ijq}\right)=1$.
Then $\varphi':A\otimes\widetilde{A}\rightarrow A'$, $\widetilde{\varphi}:A\otimes\widetilde{A}\rightarrow\widetilde{A}$
where $\varphi'\left(a_{i}\otimes a_{k}^{-}\otimes\widetilde{a}_{j}\right)=\sum_{q=1}^{n}\rho'_{ijq}a_{q}$
and $\widetilde{\varphi}\left(a_{i}\otimes a_{k}^{-}\otimes\widetilde{a}_{j}\right)=\sum_{q=1}^{m}\widetilde{\rho}_{ijq}\widetilde{a}_{q}$
respectively give the distributions of sperm carrying the $X$ and
$Y$ gonosome produced by a male of karyotype $a_{i}\otimes a_{k}^{-}\otimes\widetilde{a}_{j}$.\smallskip{}

\end{example}

\subsection{Construction from a baric algebra and two linear maps.}
\begin{prop}
\label{prop:Apond=0000262AL} Let $\left(A,\omega\right)$ be a baric
(not necessarily commutative) $K$-algebra and $\varphi,\widetilde{\varphi}:A\rightarrow A$
two linear maps such that $\omega\circ\left(\varphi+\widetilde{\varphi}\right)=\omega$.
Then the $K$-vector space $A\times A$ equipped with multiplication:
\[
\left(x,y\right)\left(x',y'\right)=\left(\varphi\left(xy'+x'y\right),\widetilde{\varphi}\left(xy'+x'y\right)\right)
\]
is a gonosomal algebra.\end{prop}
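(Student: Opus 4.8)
The plan is to exhibit an explicit gonosomal basis of $A\times A$ and then verify the four defining relations of Definition \ref{def:Alggonosom} one by one. Commutativity of the product comes for free: interchanging $(x,y)$ and $(x',y')$ in the stated law merely swaps the two summands of $xy'+x'y$, leaving the argument of $\varphi$ and $\widetilde{\varphi}$ unchanged, so $(x,y)(x',y')=(x',y')(x,y)$.

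First I would fix a convenient basis of $A$. Since $(A,\omega)$ is baric, I invoke the W\"orz-Busekros lemma recalled above, together with the rescaling argument already used in the proof of Proposition \ref{pro:AGetDupliquee} (rescale each basis vector of nonzero weight, and replace each weight-zero vector $a_i$ by $a_i+a_1$), to obtain a basis $\left(a_1,\ldots,a_n\right)$ of $A$ with $\omega\left(a_i\right)=1$ for every $i$. I then set $e_i=\left(a_i,0\right)$ and $\widetilde{e}_i=\left(0,a_i\right)$ for $1\leq i\leq n$; these $2n$ vectors clearly form a basis of $A\times A$, and they are the candidate gonosomal basis.

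Next I would read off the products on this basis. Because the law only feeds the combination $xy'+x'y$ into $\varphi$ and $\widetilde{\varphi}$, the same-sex products vanish: for $e_ie_j=\left(a_i,0\right)\left(a_j,0\right)$ one has $xy'+x'y=a_i\cdot 0+a_j\cdot 0=0$, hence $e_ie_j=0$, and likewise $\widetilde{e}_p\widetilde{e}_q=0$. For the mixed product $e_i\widetilde{e}_p=\left(a_i,0\right)\left(0,a_p\right)$ one gets $xy'+x'y=a_ia_p$, so, writing $\varphi\left(a_ia_p\right)=\sum_{k}\gamma_{ipk}a_k$ and $\widetilde{\varphi}\left(a_ia_p\right)=\sum_{r}\widetilde{\gamma}_{ipr}a_r$ in the chosen basis,
\[
e_i\widetilde{e}_p=\left(\varphi\left(a_ia_p\right),\widetilde{\varphi}\left(a_ia_p\right)\right)=\sum_{k=1}^{n}\gamma_{ipk}e_k+\sum_{r=1}^{n}\widetilde{\gamma}_{ipr}\widetilde{e}_r .
\]

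The one genuinely substantive point, though still routine, is the normalization $\sum_{k}\gamma_{ipk}+\sum_{r}\widetilde{\gamma}_{ipr}=1$. Here I would use $\omega\left(a_k\right)=1$ to identify $\sum_{k}\gamma_{ipk}=\omega\left(\varphi\left(a_ia_p\right)\right)$ and $\sum_{r}\widetilde{\gamma}_{ipr}=\omega\left(\widetilde{\varphi}\left(a_ia_p\right)\right)$, so the total equals $\omega\bigl(\left(\varphi+\widetilde{\varphi}\right)\left(a_ia_p\right)\bigr)=\omega\left(a_ia_p\right)$ by the hypothesis $\omega\circ\left(\varphi+\widetilde{\varphi}\right)=\omega$; and since $\omega$ is an algebra morphism, $\omega\left(a_ia_p\right)=\omega\left(a_i\right)\omega\left(a_p\right)=1$, which closes the verification. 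The only step that requires any care is the initial choice of a weight-one basis, which implicitly relies on $A$ being finite-dimensional; everything else is forced by the shape of the product.
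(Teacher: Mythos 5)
Your proof is correct and follows essentially the same route as the paper's: choose a weight-one basis $\left(a_{i}\right)$ of $A$, take $e_{i}=\left(a_{i},0\right)$, $\widetilde{e}_{i}=\left(0,a_{i}\right)$, observe the same-sex products vanish, and derive the normalization from $\omega\circ\left(\varphi+\widetilde{\varphi}\right)=\omega$ together with $\omega\left(a_{i}a_{p}\right)=1$. The only cosmetic difference is that you apply $\omega$ directly to $\varphi\left(a_{i}a_{p}\right)$ and $\widetilde{\varphi}\left(a_{i}a_{p}\right)$ instead of expanding through the structure constants of $A$ and the matrices of $\varphi,\widetilde{\varphi}$ as the paper does, and your aside that the argument implicitly needs $A$ finite-dimensional is a fair reading of the hypotheses.
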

\begin{proof}
The algebra $A$ being weighted there is a basis $\left(a_{i}\right)_{1\leq i\leq n}$
in $A$ such that $a_{i}a_{j}=\sum_{k=1}^{n}\gamma_{ijk}a_{k}$ with
$\sum_{k=1}^{n}\gamma_{ijk}=1$ thus $\omega\left(a_{i}\right)=1$.
We put $e_{i}=\left(a_{i},0\right)$ and $\widetilde{e}_{i}=\left(0,a_{i}\right)$,
it is clear that $e_{i}e_{j}=\widetilde{e}_{i}\widetilde{e}_{j}=0$
for all $1\leq i,j\leq n$. Next with $\varphi\left(a_{k}\right)=\sum_{p=1}^{n}\alpha_{pk}a_{k}$
and $\widetilde{\varphi}\left(a_{k}\right)=\sum_{p=1}^{n}\widetilde{\alpha}_{pk}a_{k}$,
the assumption $\omega\circ\left(\varphi+\widetilde{\varphi}\right)=\omega$
is translated by $\sum_{p=1}^{n}\left(\alpha_{pk}+\widetilde{\alpha}_{pk}\right)=1$.
Then we have:
\begin{eqnarray*}
e_{i}\widetilde{e}_{j} & = & \left(\varphi\left(a_{i}a_{j}\right),\widetilde{\varphi}\left(a_{i}a_{j}\right)\right)\\
 & = & \left(\sum_{k=1}^{n}\gamma_{ijk}\varphi\left(a_{k}\right),\sum_{k=1}^{n}\gamma_{ijk}\widetilde{\varphi}\left(a_{k}\right)\right)\\
 & = & \sum_{k,p=1}^{n}\gamma_{ijk}\alpha_{pk}\left(a_{k},0\right)+\sum_{k,p=1}^{n}\gamma_{ijk}\widetilde{\alpha}_{pk}\left(0,a_{k}\right)
\end{eqnarray*}
with $\sum_{k,p=1}^{n}\gamma_{ijk}\left(\alpha_{pk}+\widetilde{\alpha}_{pk}\right)=1$,
thus the algebra $A\times A$ so defined is gonosomal.
\end{proof}
As illustrated in the following example, this construction is very
useful when sex determination is polygenic that is to say that sex
determination factors are distributed on several chromosomes.
\begin{example}
\emph{Sex determination in Musca domestica}.

In \emph{Musca domestica} $\left(2n=12\right)$, sex determination
follows the $XY$-system. On the $Y$ chromosome is a diallelic locus:
$M,+$. The allele $M$ determine male sex, allele $+$ is neutral.
The female sex is determined by an allele $F$ present on the autosome
IV, allele $F$ is recessive compared to $M$. Therefore male genotypes
are $\left(M+,FF\right)$ or $\left(MM,FF\right)$ and female $\left(++,FF\right)$.
However in some fly populations locus $M,+$ is also located on one
of the autosomes or even on the $X$ chromosome, in these populations
there is a mutation of the factor $F$ denoted by $F^{D}$ which is
dominant compared to $M$, in these cases the sex is determined by
18 genotypes including 10 for the sex female (see table below).

\medskip{}

\hspace{2cm}%
\begin{tabular}{ccccc}
\hline 
\multicolumn{2}{c}{Autosomes} & \multicolumn{3}{c}{Gonosomes}\tabularnewline
\hline 
IV & I-V & XX & XY & YY\tabularnewline
\hline 
$FF$ & $++$ & $\mbox{\Female}$ & $\mbox{\Male}$ & $\mbox{\Male}$\tabularnewline
$FF$ & $M+$, $\;MM$ & $\mbox{\Male}$ & $\mbox{\Male}$ & $\mbox{\Male}$\tabularnewline
$FF^{D}$ & $++$, $\;M+$, $\;MM$ & $\mbox{\Female}$ & $\mbox{\Female}$ & $\mbox{\Female}$\tabularnewline
\hline 
\end{tabular}

\medskip{}

Algebraically, given three spaces $A$, $B$ and $C$ with respective
bases $\left(a_{1},a_{2}\right)$, $\left(b_{1},b_{2}\right)$ and
$\left(c_{1},c_{2}\right)$ and with the algebra law: $xy=\frac{1}{2}x+\frac{1}{2}y$,
the space $A\otimes B\otimes C$ is equipped with the algebraic structure
$\left(a_{i}\otimes b_{j}\otimes c_{k}\right)\left(a_{p}\otimes b_{q}\otimes c_{r}\right)=\left(a_{i}a_{p}\right)\otimes\left(b_{j}b_{q}\right)\otimes\left(c_{k}c_{r}\right)$
and the weight function $\omega\left(a_{i}\otimes b_{j}\otimes c_{k}\right)=1$.
For $i,j,k\in\left\{ 1,2\right\} $ we note $e_{\left(i,j,k\right)}=a_{i}\otimes b_{j}\otimes c_{k}$,
and we put: 
\begin{eqnarray*}
\mbox{\Female} & = & \left\{ \begin{array}{c}
e_{\left(1,1,1\right)}\otimes e_{\left(1,1,1\right)},e_{\left(1,1,1\right)}\otimes e_{\left(2,1,1\right)},e_{\left(1,1,1\right)}\otimes e_{\left(2,2,1\right)},e_{\left(1,2,1\right)}\otimes e_{\left(2,2,1\right)},\\
e_{\left(1,1,1\right)}\otimes e_{\left(2,1,2\right)},e_{\left(1,1,1\right)}\otimes e_{\left(2,2,2\right)},e_{\left(1,2,1\right)}\otimes e_{\left(2,2,2\right)},e_{\left(1,1,2\right)}\otimes e_{\left(2,1,2\right)},\\
e_{\left(1,1,2\right)}\otimes e_{\left(2,2,2\right)},e_{\left(1,2,2\right)}\otimes e_{\left(2,2,2\right)}\hspace{4.8cm}
\end{array}\right\} \\
\mbox{\Male} & = & \left\{ \begin{array}{c}
e_{\left(1,1,1\right)}\otimes e_{\left(1,1,2\right)},e_{\left(1,1,2\right)}\otimes e_{\left(1,1,2\right)},e_{\left(1,1,1\right)}\otimes e_{\left(1,2,1\right)},e_{\left(1,2,1\right)}\otimes e_{\left(1,2,1\right)},\\
e_{\left(1,1,1\right)}\otimes e_{\left(1,2,2\right)},e_{\left(1,2,1\right)}\otimes e_{\left(1,2,2\right)},e_{\left(1,1,2\right)}\otimes e_{\left(1,2,2\right)},e_{\left(1,2,2\right)}\otimes e_{\left(1,2,2\right)}
\end{array}\right\} 
\end{eqnarray*}
then applying the proposition \ref{prop:Apond=0000262AL} with $\varphi,\widetilde{\varphi}$
defined on $A\otimes B\otimes C$ by $\varphi_{\left|\mbox{\Female}\right.}=id$,
$\varphi_{\left|\text{\mbox{\Male} }\right.}=0$ and $\widetilde{\varphi}_{\left|\mbox{\Female}\right.}=0$,
$\widetilde{\varphi}_{\left|\text{\mbox{\Male} }\right.}=id$, the
algebra $A\otimes B\otimes C$ is gonosomal. Using the coding $a_{1}\leftrightarrow F$,
$a_{2}\leftrightarrow F^{D}$, $b_{1}\leftrightarrow+$, $b_{2}\leftrightarrow M$,
$c_{1}\leftrightarrow X$, $c_{2}\leftrightarrow Y$, we obtain the
frequency distribution of crosses.
\end{example}
\smallskip{}

\begin{example}
\emph{Cytoplasmic heredity}.

Cytoplasmic heredity is the inheritance of genes that are not carried
by chromosomes in the nucleus cells, that is why it is called as extrachromosomal
or extranuclear heredity. These genes are located on the genome of
mitochondria (or chloroplasts in plants). Mitochondria are organelles
present in the cytoplasm, they play a key role in the production and
energy storage in eukaryotic cells. Mitochondria have their own DNA,
they divide continuously regardless of the division of the cell that
contains them, and they are distributed randomly between the daughter
cells during cell division. The mitochondrial genome mutates, this
translates into a cell by a genetically heterogeneous mitochondria
population and during oogenesis by a heterogeneous oocyte population:
some oocytes contain only one type of mitochondria (homoplasmy) others
contain several types (heteroplasmy). At the time of the sexual reproduction,
only the head of a sperm cell enters into the oocyte, it is possible
that some mitochondria of the flagellum are transferred at the same
time, however the number of mitochondria of paternal origin being
very low compared to those present in the oocyte, it is considered
that the mode of transmission of the mitochondrial genome is maternal.

Classify oocytes of a diploid sexual species in categories $c_{1},\ldots,c_{n}$
according to their mitochondrial populations. During fertilization,
the fusion of a $c_{i}$ type oocyte with a sperm gives a $c_{k}$
type egg with a frequency $\nu_{ki}$ (thus $\sum_{k=1}^{n}\nu_{ki}=1$),
this egg becomes female with a frequency $\sigma_{i}$ or male in
a proportion $1-\sigma_{i}$.

Algebraically, we apply the proposition \ref{prop:Apond=0000262AL}
on the algebra $A$ with basis $\left(c_{1},\ldots,c_{n}\right)$
defined by $c_{i}c_{j}=c_{i}$ for $1\leq i,j\leq n$, weighted by
$\omega\left(c_{i}\right)=1$ and the maps $\varphi\left(c_{i}\right)=\sigma_{i}\sum_{k=1}^{n}\nu_{ki}c_{k}$
and $\widetilde{\varphi}\left(c_{i}\right)=\left(1-\sigma_{i}\right)\sum_{k=1}^{n}\nu_{ki}c_{k}$.
\end{example}
\smallskip{}

\begin{example}
\emph{Sex determination by deuterotokous parthenogenesis}.

This type of parthenogenesis is known as cyclical because it alternates
a parthenogenetic phase and a sexual phase. In this case a diploid
female gives birth to diploid males or females.

In some species of aphids (Aphidoidea), the homogametic sex female
2A-2X are divided into two types: gynoparous females only lay eggs
that develop into females and androparous females only lay eggs that
develop into males. In gyniparous females a mitosis replaces meiosis
and produces 2A-2X genotype eggs that develop into females carrying
A-X type oocytes. In androparous female meiosis is abnormal: one of
the X chromosome is lost, the resulting spermatocytes 2A-X give males
with A-X type sperm (those of type A-0 are eliminated). The crosses
give only 2A-2X females.

Algebraically, we consider the algebra $V$ with basis $\left(a_{1},\ldots,a_{n}\right)$,
multiplication $a_{i}a_{j}=\frac{1}{2}\left(a_{i}+a_{j}\right)$ and
weight function $\omega\left(a_{i}\right)=1$. Let $A=D\left(V\right)$
be the (commutative or non commutative) duplicate of $V$, we apply
the proposition \ref{prop:Apond=0000262AL} with $\varphi\left(x\right)=\theta x$,
$\widetilde{\varphi}\left(x\right)=\left(1-\theta\right)x$. Let $e_{ij}=\left(a_{i}\otimes a_{j},0\right)$
and $\widetilde{e}_{ij}=\left(0,a_{i}\otimes a_{j}\right)$ we obtain:
\[
e_{ij}\widetilde{e}_{pq}=\tfrac{\theta}{8}\left(\bigl(e_{i}+e_{j}\bigr)\otimes\bigl(e_{p}+e_{q}\bigr),0\right)+\tfrac{1-\theta}{8}\left(0,\bigl(e_{i}+e_{j}\bigr)\otimes\bigl(e_{p}+e_{q}\bigr)\right).
\]

\end{example}
\smallskip{}

\subsection{Construction from a baric algebra and a gonosomal algebra.}
\begin{prop}
\label{prop:AG=000026PondGono} Let $\left(A,\omega\right)$ be a
finite-dimensional (not necessarily commutative) baric $K$-algebra
and $G$ a gonosomal $K$-algebra. The space $A\otimes G$ equipped
with the multiplication
\[
\left(x\otimes y\right)\left(x'\otimes y'\right)=\frac{1}{2}\left(xx'+x'x\right)\otimes yy',
\]
is a gonosomal algebra.\end{prop}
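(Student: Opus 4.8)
The plan is to exhibit an explicit gonosomal basis of $A\otimes G$ obtained by tensoring a weighted basis of $A$ with a gonosomal basis of $G$. First I would fix a gonosomal basis $\left\{e_{j};1\leq j\leq n\right\}\cup\left\{\widetilde{e}_{p};1\leq p\leq m\right\}$ of $G$, so that $e_{j}\widetilde{e}_{p}=\sum_{k=1}^{n}\gamma_{jpk}e_{k}+\sum_{r=1}^{m}\widetilde{\gamma}_{jpr}\widetilde{e}_{r}$, $e_{j}e_{j'}=\widetilde{e}_{p}\widetilde{e}_{q}=0$ and $\sum_{k}\gamma_{jpk}+\sum_{r}\widetilde{\gamma}_{jpr}=1$. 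Since $\left(A,\omega\right)$ is baric and finite-dimensional, the lemma of W\"orz-Busekros recalled above provides a basis $\left(a_{i}\right)_{1\leq i\leq N}$ of $A$ with $\omega\left(a_{i}\right)=1$; writing $a_{i}a_{i'}=\sum_{l}\mu_{ii'l}a_{l}$, this amounts to $\sum_{l}\mu_{ii'l}=1$ for all $i,i'$. I then propose as gonosomal basis of $A\otimes G$ the female vectors $a_{i}\otimes e_{j}$ and the male vectors $a_{i}\otimes\widetilde{e}_{p}$, which together clearly form a basis of $A\otimes G$.

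Next I would dispatch the easy clauses. Because the product touches the $G$-factor only through $yy'$, and $e_{j}e_{j'}=0=\widetilde{e}_{p}\widetilde{e}_{q}$ in $G$, both $\left(a_{i}\otimes e_{j}\right)\left(a_{i'}\otimes e_{j'}\right)$ and $\left(a_{i}\otimes\widetilde{e}_{p}\right)\left(a_{i'}\otimes\widetilde{e}_{q}\right)$ vanish, giving the two required families of zero products. Commutativity of the new law is immediate: $G$ is commutative by definition, so $yy'=y'y$, while the $A$-factor $\frac{1}{2}\left(xx'+x'x\right)$ is manifestly symmetric in $x,x'$; hence $\left(x\otimes y\right)\left(x'\otimes y'\right)=\left(x'\otimes y'\right)\left(x\otimes y\right)$.

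The only genuine computation is the cross product. Setting $s_{ii'l}=\frac{1}{2}\left(\mu_{ii'l}+\mu_{i'il}\right)$ one has $\frac{1}{2}\left(a_{i}a_{i'}+a_{i'}a_{i}\right)=\sum_{l}s_{ii'l}a_{l}$, and the key observation (the one point requiring a moment's care) is that symmetrization preserves the weight-one rows, i.e.\ $\sum_{l}s_{ii'l}=\frac{1}{2}\bigl(\sum_{l}\mu_{ii'l}+\sum_{l}\mu_{i'il}\bigr)=1$. Substituting $e_{j}\widetilde{e}_{p}=\sum_{k}\gamma_{jpk}e_{k}+\sum_{r}\widetilde{\gamma}_{jpr}\widetilde{e}_{r}$ and expanding, the product factors cleanly as
\[
\left(a_{i}\otimes e_{j}\right)\left(a_{i'}\otimes\widetilde{e}_{p}\right)=\sum_{l,k}s_{ii'l}\gamma_{jpk}\,a_{l}\otimes e_{k}+\sum_{l,r}s_{ii'l}\widetilde{\gamma}_{jpr}\,a_{l}\otimes\widetilde{e}_{r},
\]
which lands in the span of the female vectors followed by that of the male vectors, exactly as Definition \ref{def:Alggonosom} requires.

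Finally I would check the normalization. The total coefficient factors as $\bigl(\sum_{l}s_{ii'l}\bigr)\bigl(\sum_{k}\gamma_{jpk}+\sum_{r}\widetilde{\gamma}_{jpr}\bigr)=1\cdot1=1$, combining the row-sum condition coming from the baric structure of $A$ with the gonosomal normalization of $G$. This verifies every clause of the definition for the declared basis, so $A\otimes G$ is a gonosomal algebra. I do not anticipate any real obstacle here: once the symmetrized structure constants are seen to sum to one, the verification is a direct factorization, the whole argument being essentially a tensor-product bookkeeping with the $A$-factor and the $G$-factor decoupling in the coefficient sum.
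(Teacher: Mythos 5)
Your proof is correct and follows essentially the same route as the paper's: both take the tensor basis $a_{i}\otimes e_{j}$, $a_{i}\otimes\widetilde{e}_{p}$ built from a weight-one basis of $A$ and a gonosomal basis of $G$, observe that the symmetrized structure constants of $A$ still have row sums equal to $1$, and conclude that the total coefficient of the cross product factors as $1\cdot1=1$. The only (harmless) difference is that you make the commutativity check explicit, which the paper leaves implicit.
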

\begin{proof}
The algebra $A$ admits a basis $\left(a_{1},\ldots,a_{N}\right)$
such that $a_{p}a_{q}=\sum_{r=1}^{N}\lambda_{pqr}a_{r}$ with $\sum_{r=1}^{N}\lambda_{pqr}=1$,
and $G$ has a gonosomal basis $\left\{ e_{i};1\leq i\leq n\right\} \cup\left\{ \widetilde{e}_{j};1\leq j\leq m\right\} $.
We put $a_{p,i}=a_{p}\otimes e_{i}$ and $\widetilde{a}_{p,j}=a_{p}\otimes\widetilde{e}_{j}$
for $1\leq p\leq N$, $1\leq i\leq n$ and $1\leq j\leq m$. We have
$a_{p,i}a_{q,j}=0$, $\widetilde{a}_{p,i}\widetilde{a}_{q,j}=0$ and
\[
a_{p,i}\widetilde{a}_{q,j}=\frac{1}{2}\left(\sum_{r,k}\left(\lambda_{pqr}+\lambda_{qpr}\right)\gamma_{ijk}\right)a_{r,k}+\frac{1}{2}\left(\sum_{k,r}\left(\lambda_{pqr}+\lambda_{qpr}\right)\widetilde{\gamma}_{ijk}\right)\widetilde{a}_{r,k}
\]
with
\begin{eqnarray*}
\tfrac{1}{2}\left(\sum_{r,k}\left(\lambda_{pqr}+\lambda_{qpr}\right)\gamma_{ijk}\right)+\tfrac{1}{2}\left(\sum_{k,r}\left(\lambda_{pqr}+\lambda_{qpr}\right)\widetilde{\gamma}_{ijk}\right) & = & 1.
\end{eqnarray*}
This proves that $A\otimes G$ is a gonosomal algebra.\smallskip{}
\end{proof}
\begin{example}
\emph{Crosses with distribution of autosomal types according to sexes}.
\end{example}
In a bisexual population we consider autosomal genetic types of autosomal
$a_{1},\ldots,a_{n}$ and we assume that over the generations the
proportion of females in the population has a constant constant value
$0\leq\sigma\leq1$. This frequency $\sigma$ is connected to the
population sex-ratio $\rho$ (proportion of males to females per generation),
indeed we have $\rho=\frac{1-\sigma}{\sigma}$ thus $\sigma=\frac{1}{1+\rho}$.

Let $\gamma_{ijk}$ the frequency of type $a_{k}$ in the progeny
of two individuals of types $a_{i}$ and $a_{j}$. Let $A$ be the
algebra with basis $\left(a_{1},\ldots,a_{n}\right)$, the commutative
law $a_{i}a_{j}=\sum_{k=1}^{n}\gamma_{ijk}a_{k}$, $\sum_{k=1}^{n}\gamma_{ijk}=1$
and $G$ the gonosomal algebra with basis $\left(f,m\right)$ defined
by $f^{2}=m^{2}=0$ and $fm=mf=\sigma f+\left(1-\sigma\right)m$.
If we apply the proposition \ref{prop:AG=000026PondGono} to $A\otimes G$
we find $\left(a_{i}\otimes f\right)\left(a_{j}\otimes m\right)=\sigma\sum_{k=1}^{n}\gamma_{ijk}a_{k}\otimes f+\left(1-\sigma\right)\sum_{k=1}^{n}\gamma_{ijk}a_{k}\otimes m$,
therefore after the cross of types $a_{i}$ and $a_{j}$, the frequency
of type $a_{k}$ in the female (resp. male) offspring is $\sigma\gamma_{ijk}$
(resp. $\left(1-\sigma\right)\gamma_{ijk}$).

\smallskip{}

\begin{example}
\emph{Transmission law of a couple of autosomal and gonosomal genes}.
\end{example}
We consider in a bisexual population the autosomal types $a_{1},\ldots,a_{N}$;
the female $e_{1},\ldots,e_{n}$ and male $\widetilde{e}_{1},\ldots,\widetilde{e}_{m}$
gonosomal types. Let $A$ be algebra with basis $\left(a_{i}\right)_{1\leq i\leq N}$
with commutative law $a_{i}a_{j}=\sum_{k=1}^{N}\lambda_{ijk}a_{k}$,
$\sum_{k=1}^{N}\lambda_{ijk}=1$ and $G$ the gonosomal algebra with
basis $\left(e_{1},\ldots,e_{n},\widetilde{e}_{1},\ldots,\widetilde{e}_{m}\right)$
defined by $e_{i}\widetilde{e}_{j}=\sum_{k=1}^{n}\gamma_{ijk}e_{i}+\sum_{k=1}^{m}\widetilde{\gamma}_{ijk}\widetilde{e}_{k}$.
If we apply the proposition \ref{prop:AG=000026PondGono} to $A\otimes G$
we find:
\[
\left(a_{i}\otimes e_{p}\right)\left(a_{j}\otimes\widetilde{e}_{q}\right)=\sum_{k=1}^{N}\left(\sum_{r=1}^{n}\lambda_{ijk}\gamma_{pqr}a_{k}\otimes e_{r}+\sum_{r=1}^{m}\lambda_{ijk}\widetilde{\gamma}_{pqr}a_{k}\otimes\widetilde{e}_{r}\right),
\]
in other words, after crossing a $\left(a_{i},e_{p}\right)$ female
type with a $\left(a_{j},\widetilde{e}_{q}\right)$ male the frequency
of type $\left(a_{k},e_{r}\right)$ in females is $\lambda_{ijk}\gamma_{pqr}$
and type $\left(a_{k},\widetilde{e}_{r}\right)$ in males is $\lambda_{ijk}\widetilde{\gamma}_{pqr}$.

\section{On baricity and dibaricity of gonosomal algebras}

The notion of baric algebra (see definition in section 4) was introduced
by Etherington \cite{Ether-39}, it plays a fundamental role in the
study of genetic algebras. But as zygotic algebras for sex-linked
inheritance admit bases whose elements are nilpotent this means they
are not baric. In order to remedy this, Holgate \cite{Holg-70} introduced
the notion of dibaric algebra. 
\begin{defn}
A $K$-algebra $A$ is dibaric if there is a surjective morphism $\chi:A\rightarrow\mathcal{S}$
where $\mathcal{S}$ is the $K$-algebra with basis $\left\{ f,m\right\} $
such that $f^{2}=m^{2}=0$ and $fm=mf=\frac{1}{2}\left(f+m\right)$.\end{defn}
\begin{rem}
A $K$-algebra with basis $\left\{ a,b\right\} $ such that $a^{2}=b^{2}=0$,
$ab=ba=\theta a+\left(1-\theta\right)b$ where $\theta\in K$, $\theta\neq0,1$,
is isomorphic to the dibaric algebra $\mathcal{S}$ with basis $\left\{ f,m\right\} $
by $a\mapsto2\left(1-\theta\right)f$ and $b\mapsto2\theta m$.
\end{rem}
If $A$ is a dibaric algebra then $A^{2}$ is baric. There are other
results concerning dibaric algebras in \cite{CF-00} and \cite{LOR-11}.
\begin{rem}
In \cite{LR-10} the authors showed that the evolution algebra of
the bisexual population are dibaric. The example \ref{exa:hemophilia}
shows that gonosomal algebras are not in general dibaric. Indeed,
let us assume that the algebra $A$ defined in example \ref{exa:hemophilia}
is dibaric, let $\chi:A\rightarrow\mathcal{S}$ a dibaric function
such that $\chi\left(e_{i}\right)=\lambda_{i}f+\mu_{i}m$ and $\chi\left(\widetilde{e}_{i}\right)=\widetilde{\lambda}_{i}f+\widetilde{\mu}_{i}m$
for all $i=1,2$ . 

From $\chi\left(e_{i}\right)\chi\left(e_{j}\right)=\chi\left(e_{i}e_{j}\right)=0$
and $\chi\left(\widetilde{e}_{p}\right)\chi\left(\widetilde{e}_{q}\right)=\chi\left(\widetilde{e}_{p}\widetilde{e}_{q}\right)=0$
it follows 6 identities: 
\begin{eqnarray}
\lambda_{i}\mu_{j}+\mu_{i}\lambda_{j} & = & 0,\quad1\leq i\leq j\leq2,\label{eq:E1-1}\\
\widetilde{\lambda}_{p}\widetilde{\mu}_{q}+\widetilde{\mu}_{p}\widetilde{\lambda}_{q} & = & 0,\quad1\leq p\leq q\leq2.\label{eq:E2-1}
\end{eqnarray}
 From $\eta\left(e_{i}\widetilde{e}_{p}\right)=\eta\left(e_{i}\right)\eta\left(\widetilde{e}_{p}\right)$
it follows 8 identities:

\begin{flushleft}
\begin{tabular}{ll}
(1) : $\lambda_{1}+\widetilde{\lambda}_{1}=\lambda_{1}\widetilde{\mu}_{1}+\mu_{1}\widetilde{\lambda}_{1}$, & (5) : $\mu_{1}+\widetilde{\mu}_{1}=\lambda_{1}\widetilde{\mu}_{1}+\mu_{1}\widetilde{\lambda}_{1}$,\tabularnewline
(2) : $\lambda_{2}+\widetilde{\lambda}_{1}=\lambda_{1}\widetilde{\mu}_{2}+\mu_{1}\widetilde{\lambda}_{2}$, & (6) : $\mu_{2}+\widetilde{\mu}_{1}=\lambda_{1}\widetilde{\mu}_{2}+\mu_{1}\widetilde{\lambda}_{2}$,\tabularnewline
(3) : $\lambda_{1}+\lambda_{2}+\widetilde{\lambda}_{1}+\widetilde{\lambda}_{2}=2\left(\lambda_{2}\widetilde{\mu}_{1}+\mu_{2}\widetilde{\lambda}_{1}\right)$, & (7) : $\mu_{1}+\mu_{2}+\widetilde{\mu}_{1}+\widetilde{\mu}_{2}=2\left(\lambda_{2}\widetilde{\mu}_{1}+\mu_{2}\widetilde{\lambda}_{1}\right)$,\tabularnewline
(4) : $\lambda_{2}+\widetilde{\lambda}_{1}+\widetilde{\lambda}_{2}=\frac{3}{2}\left(\lambda_{2}\widetilde{\mu}_{2}+\mu_{2}\widetilde{\lambda}_{2}\right)$, & (8) : $\mu_{2}+\widetilde{\mu}_{1}+\widetilde{\mu}_{2}=\frac{3}{2}\left(\lambda_{2}\widetilde{\mu}_{2}+\mu_{2}\widetilde{\lambda}_{2}\right)$.\tabularnewline
\end{tabular} 
\par\end{flushleft}

With (\ref{eq:E1-1}) we have $\lambda_{1}\mu_{1}=0$. If $\lambda_{1}\neq0$
then $\mu_{1}=0$ what implies in (\ref{eq:E1-1}) $\lambda_{1}\mu_{2}=0$
thus $\mu_{2}=0$, this combined with relations (5) and (6) gives
$\widetilde{\mu}_{1}=\widetilde{\mu}_{2}$, but from (3) and (4) it
comes $\lambda_{1}=2\lambda_{2}\widetilde{\mu}_{1}-\frac{3}{2}\lambda_{2}\widetilde{\mu}_{2}=\frac{1}{2}\lambda_{2}\widetilde{\mu}_{1}$
 and between the relations (7) and (8) we deduct $\lambda_{2}\widetilde{\mu}_{1}=0$,
hence a contradiction. 

Thus we have $\lambda_{1}=0$. If $\lambda_{2}\neq0$, by (\ref{eq:E1-1})
it comes $\mu_{1}=\mu_{2}=0$ which results in (1) that $\widetilde{\lambda}_{1}=0$
hence by (2) we have $\lambda_{2}=0$, contradiction. 

Thus we have $\lambda_{1}=\lambda_{2}=0$, this implies by difference
between (1) and (2) that $\mu_{1}(\widetilde{\lambda}_{1}-\widetilde{\lambda}_{2})=0$,
and if we have $\mu_{1}\neq0$ then $\widetilde{\lambda}_{1}=\widetilde{\lambda}_{2}$,
by difference between (3) and (4) it comes $\mu_{2}\widetilde{\lambda}_{1}=0$,
and by difference between (7) and (8) we get to $\mu_{1}=0$, contradiction.
Therefore we have $\mu_{1}=0$, this results in (1) that $\widetilde{\lambda}_{1}=0$,
we deduct from (3) that $\widetilde{\lambda}_{2}=0$, by (5) that
$\widetilde{\mu}_{1}=0$, in (6) we have $\mu_{2}=0$ and in (7) we
get $\widetilde{\mu}_{2}=0$, finally we have $\chi=0$, contradiction.

In conclusion we can not define a dibaric function on $A$.\end{rem}
\begin{prop}
Let $A$ be a gonosomal $K$-algebra with a gonosomal basis $\left(e_{i}\right)_{1\leq i\leq n}\cup\bigl(\widetilde{e}_{p}\bigr)_{1\leq p\leq m}$
such that $e_{i}\widetilde{e}_{j}=\sum_{k=1}^{n}\gamma_{ijk}e_{k}+\sum_{p=1}^{m}\widetilde{\gamma}_{ijp}\widetilde{e}_{p}$,
if it exists $\sigma\in K$, $\sigma\neq0,1$ verifying $\sum_{k=1}^{n}\gamma_{ipk}=\sigma$
for all $1\leq i\leq n$ and $1\leq p\leq m$ then $A$ is dibaric.\end{prop}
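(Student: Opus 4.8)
The plan is to construct a surjective algebra morphism $\chi$ from $A$ onto a two-dimensional algebra that the remark following the definition of dibaricity identifies with $\mathcal{S}$. The structure of the gonosomal basis dictates the natural guess: collapse every female type to one generator and every male type to the other. Concretely, I would introduce the $K$-algebra $\mathcal{S}'$ with basis $\{a,b\}$, relations $a^2=b^2=0$ and $ab=ba=\sigma a+(1-\sigma)b$, and define $\chi:A\to\mathcal{S}'$ as the linear extension of $\chi(e_i)=a$ for all $1\leq i\leq n$ and $\chi(\widetilde{e}_p)=b$ for all $1\leq p\leq m$.

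First I would check that $\chi$ respects the products of basis vectors. The relations $e_ie_j=0$ and $\widetilde{e}_p\widetilde{e}_q=0$ are preserved, since $\chi(e_i)\chi(e_j)=a^2=0$ and $\chi(\widetilde{e}_p)\chi(\widetilde{e}_q)=b^2=0$ by construction of $\mathcal{S}'$. The substantive point is the mixed product: on one side $\chi(e_i)\chi(\widetilde{e}_p)=ab=\sigma a+(1-\sigma)b$, while on the other, applying $\chi$ to the expansion $e_i\widetilde{e}_p=\sum_{k}\gamma_{ipk}e_k+\sum_{r}\widetilde{\gamma}_{ipr}\widetilde{e}_r$ yields $\bigl(\sum_{k}\gamma_{ipk}\bigr)a+\bigl(\sum_{r}\widetilde{\gamma}_{ipr}\bigr)b$.

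This is where the hypotheses enter, and it is the only delicate step. The standing gonosomal relation $\sum_{k}\gamma_{ipk}+\sum_{r}\widetilde{\gamma}_{ipr}=1$ combined with the assumption $\sum_{k}\gamma_{ipk}=\sigma$ forces $\sum_{r}\widetilde{\gamma}_{ipr}=1-\sigma$, so the image of the expansion is again $\sigma a+(1-\sigma)b$. Crucially, this is the \emph{same} element of $\mathcal{S}'$ for every pair $(i,p)$ precisely because $\sigma$ is independent of $i$ and $p$; the constancy of $\sigma$ is exactly what makes $\chi$ multiplicative on the mixed products, and I expect no obstacle beyond tracking this. Hence $\chi$ is a well-defined algebra morphism, and it is surjective because $a=\chi(e_1)$ and $b=\chi(\widetilde{e}_1)$ both lie in its image. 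Finally, since $\sigma\neq 0,1$, the remark (with $\theta=\sigma$) provides an isomorphism $\mathcal{S}'\simeq\mathcal{S}$ via $a\mapsto 2(1-\sigma)f$ and $b\mapsto 2\sigma m$; composing $\chi$ with this isomorphism produces a surjective morphism $A\to\mathcal{S}$, which shows that $A$ is dibaric.
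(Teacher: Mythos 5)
Your proof is correct and takes essentially the same route as the paper: a dibaric function that is constant on the $e_{i}$ and constant on the $\widetilde{e}_{p}$, with multiplicativity on the mixed products secured by $\sum_{k}\gamma_{ipk}=\sigma$ and hence $\sum_{r}\widetilde{\gamma}_{ipr}=1-\sigma$. Your detour through $\mathcal{S}'$ and the remark is harmless, and in fact your composite map $e_{i}\mapsto2\left(1-\sigma\right)f$, $\widetilde{e}_{p}\mapsto2\sigma m$ gives the correct scalars, whereas the coefficients $\tfrac{1}{2\sigma}$ and $\tfrac{1}{2\left(1-\sigma\right)}$ in the paper's one-line proof satisfy $\chi\bigl(e_{i}\widetilde{e}_{p}\bigr)=\chi\left(e_{i}\right)\chi\bigl(\widetilde{e}_{p}\bigr)$ only when $\sigma=\tfrac{1}{2}$, so your version actually repairs a slip.
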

\begin{proof}
Indeed, the algebra $A$ is diweighted by $\chi\left(e_{i}\right)=\frac{1}{2\sigma}f$
and $\chi\left(\widetilde{e}_{p}\right)=\frac{1}{2\left(1-\sigma\right)}m$.\end{proof}
\begin{prop}
Let $A$ be a gonosomal algebra with gonosomal basis $\left(e_{i}\right)_{1\leq i\leq n}\cup\bigl(\widetilde{e}_{p}\bigr)_{1\leq p\leq m}$
and multiplication $e_{i}\widetilde{e}_{j}=\sum_{k=1}^{n}\gamma_{ijk}e_{k}+\sum_{p=1}^{m}\widetilde{\gamma}_{ijp}\widetilde{e}_{p}$.
The algebra $A$ is dibaric if and only if the system of $2nm$ quadratic
equations in unknowns $x_{1},\ldots,x_{n}$, $y_{1},\ldots,y_{m}$
:
\[
\begin{cases}
x_{i}y_{j}-2\sum_{p=1}^{n}\gamma_{ijp}x_{p} & =0\medskip\\
x_{i}y_{j}-2\sum_{q=1}^{m}\widetilde{\gamma}_{ijq}y_{p} & =0,\quad\left(1\leq i\leq n,1\leq j\leq m\right)
\end{cases}
\]
admits a non-zero solution.\end{prop}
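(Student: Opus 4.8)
The plan is to translate the single condition that $\chi:A\to\mathcal{S}$ be a surjective algebra morphism into the stated quadratic system by working in coordinates on $\mathcal{S}$. Write $\chi\left(e_i\right)=\lambda_i f+\mu_i m$ and $\chi\left(\widetilde{e}_p\right)=\widetilde{\lambda}_p f+\widetilde{\mu}_p m$. Since in $\mathcal{S}$ one has $\left(\alpha f+\beta m\right)\left(\alpha' f+\beta' m\right)=\left(\alpha\beta'+\beta\alpha'\right)\tfrac{1}{2}\left(f+m\right)$, the relations $e_ie_j=0$ and $\widetilde{e}_p\widetilde{e}_q=0$ become $\lambda_i\mu_j+\mu_i\lambda_j=0$ and $\widetilde{\lambda}_p\widetilde{\mu}_q+\widetilde{\mu}_p\widetilde{\lambda}_q=0$ for all indices. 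First I would observe, exactly as in the hemophilia computation above, that over a field of characteristic $\neq2$ these force a dichotomy: taking $i=j$ gives $\lambda_i\mu_i=0$, and if some $\lambda_{i_0}\neq0$ then $\mu_{i_0}=0$ and the relation for that $i_0$ annihilates every $\mu_j$; hence either all $\lambda_i=0$ or all $\mu_i=0$, and likewise either all $\widetilde{\lambda}_p=0$ or all $\widetilde{\mu}_p=0$.

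For the direct implication I would then use surjectivity to rule out both families landing on the same coordinate axis, since that would make the image equal to $Kf$ or to $Km$. Up to the swap automorphism $f\leftrightarrow m$ of $\mathcal{S}$ one may therefore assume $\chi\left(e_i\right)=\lambda_i f$ and $\chi\left(\widetilde{e}_j\right)=\widetilde{\mu}_j m$. Expanding $\chi\left(e_i\widetilde{e}_j\right)=\chi\left(e_i\right)\chi\left(\widetilde{e}_j\right)$ in the two ways and matching the coefficients of $f$ and of $m$ separately yields
\[
\lambda_i\widetilde{\mu}_j=2\sum_{k=1}^{n}\gamma_{ijk}\lambda_k,\qquad\lambda_i\widetilde{\mu}_j=2\sum_{p=1}^{m}\widetilde{\gamma}_{ijp}\widetilde{\mu}_p,
\]
which is exactly the system under the substitution $x_i=\lambda_i$, $y_j=\widetilde{\mu}_j$. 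Surjectivity forces the image to be two-dimensional, i.e.\ some $\lambda_i\neq0$ and some $\widetilde{\mu}_j\neq0$, so the solution so obtained is non-zero.

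For the converse I would reverse this construction: given a solution $\left(x,y\right)$, set $\chi\left(e_i\right)=x_i f$ and $\chi\left(\widetilde{e}_j\right)=y_j m$ and extend linearly, then verify the three families of products. The identities $\chi\left(e_i\right)\chi\left(e_j\right)=0$ and $\chi\left(\widetilde{e}_p\right)\chi\left(\widetilde{e}_q\right)=0$ hold because $f^2=m^2=0$, while $\chi\left(e_i\right)\chi\left(\widetilde{e}_j\right)=x_iy_j\,fm=\tfrac{1}{2}x_iy_j\left(f+m\right)$ agrees with $\chi\left(e_i\widetilde{e}_j\right)=\bigl(\sum_k\gamma_{ijk}x_k\bigr)f+\bigl(\sum_p\widetilde{\gamma}_{ijp}y_p\bigr)m$ precisely because the two halves of the system equate both sums to $\tfrac{1}{2}x_iy_j$. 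Hence $\chi$ is an algebra morphism.

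The step that needs genuine care, and where I expect the only real obstacle, is surjectivity in the converse, which is not formally automatic from $\left(x,y\right)\neq0$ alone: the image of $\chi$ is spanned by $\left\{x_i f\right\}\cup\left\{y_j m\right\}$, so $\chi$ is onto $\mathcal{S}$ exactly when some $x_i\neq0$ and some $y_j\neq0$. A solution with, say, all $y_j=0$ (which would force $\sum_k\gamma_{ijk}x_k=0$ for all $i,j$) produces only a morphism onto the line $Kf$. I would therefore make the equivalence sharp by reading the required non-zero solution as one whose two coordinate blocks $\left(x_i\right)$ and $\left(y_j\right)$ are both non-trivial, noting that the direct implication always delivers such a solution; the core of the argument is thus the characteristic-$\neq2$ dichotomy forcing the diagonal shape of $\chi$, combined with this surjectivity bookkeeping.
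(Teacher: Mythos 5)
Your proof is correct and follows essentially the same route as the paper's: derive the dichotomy from $f^{2}=m^{2}=0$, use surjectivity to put $\chi$ in the diagonal form $\chi\left(e_{i}\right)=x_{i}f$, $\chi\bigl(\widetilde{e}_{j}\bigr)=y_{j}m$, and match the coefficients of $f$ and $m$ in $\chi\bigl(e_{i}\widetilde{e}_{j}\bigr)=\chi\left(e_{i}\right)\chi\bigl(\widetilde{e}_{j}\bigr)$. Your closing remark is also well taken: the paper's converse (``we verify without difficulty'') silently assumes that a non-zero solution has both blocks $\left(x_{i}\right)$ and $\bigl(y_{j}\bigr)$ non-trivial, which is exactly what surjectivity requires and what the forward direction actually produces.
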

\begin{proof}
Suppose that $A$ is diweighted by $\chi$ with $\chi\left(e_{i}\right)=\alpha_{i}f+\beta_{i}m$
and $\chi\left(\widetilde{e}_{i}\right)=\widetilde{\alpha}_{i}f+\widetilde{\beta}_{i}m$.
From $\chi\left(e_{i}e_{j}\right)=\chi\left(e_{i}\right)\chi\left(e_{j}\right)=0$
and $\chi\left(\widetilde{e}_{p}\widetilde{e}_{q}\right)=\chi\left(\widetilde{e}_{p}\right)\chi\left(\widetilde{e}_{q}\right)=0$
it comes: 
\begin{eqnarray}
\alpha_{i}\beta_{j}+\beta_{i}\alpha_{j} & = & 0,\quad\left(1\leq i,j\leq n\right)\label{eq:eiej}\\
\widetilde{\alpha}_{p}\widetilde{\beta}_{q}+\widetilde{\beta}_{p}\widetilde{\alpha}_{q} & = & 0,\quad\left(1\leq p,q\leq m\right).\label{eq:etietj}
\end{eqnarray}
From $\chi\left(e_{i}\widetilde{e}_{j}\right)=\chi\left(e_{i}\right)\chi\left(\widetilde{e}_{j}\right)$
it results:
\begin{equation}
\begin{cases}
\sum_{p=1}^{n}\gamma_{ijp}\alpha_{p}+\sum_{q=1}^{m}\widetilde{\gamma}_{ijq}\widetilde{\alpha}_{p} & =\tfrac{1}{2}\left(\alpha_{i}\widetilde{\beta}_{j}+\beta_{i}\widetilde{\alpha}_{j}\right),\medskip\\
\sum_{p=1}^{n}\gamma_{ijp}\beta_{p}+\sum_{q=1}^{m}\widetilde{\gamma}_{ijq}\widetilde{\beta}_{p} & =\tfrac{1}{2}\left(\alpha_{i}\widetilde{\beta}_{j}+\beta_{i}\widetilde{\alpha}_{j}\right),\quad\left(1\leq i\leq n,1\leq j\leq m\right).
\end{cases}\label{eq:eietj}
\end{equation}
 There is $1\leq i_{0}\leq n$ or $1\leq j_{0}\leq m$ such that $\alpha_{i_{0}}\neq0$
or $\widetilde{\alpha}_{j_{0}}\neq0$ because otherwise it would $\chi\left(A\right)=K\bigl\langle m\bigr\rangle$
and $\chi$ would not surjective. Suppose for example that $\alpha_{i_{0}}\neq0$
for an integer $1\leq i_{0}\leq n$, then from the equations (\ref{eq:eiej})
it comes $\alpha_{i_{0}}\beta_{i_{0}}=0$ from where $\beta_{i_{0}}=0$,
it follows from (\ref{eq:eiej}) that for all $j\neq i_{0}$ we have
$\alpha_{i_{0}}\beta_{j}=0$ thus $\beta_{j}=0$ for all $1\leq j\leq m$,
we deduce that there is $1\leq j_{0}\leq m$ such that $\widetilde{\beta}_{j_{0}}\neq0$
otherwise we would have $\chi\left(A\right)=K\bigl\langle f\bigr\rangle$
and $\chi$ would not surjective and by the same way as above this
leads to $\widetilde{\alpha}_{j}=0$ for all $1\leq j\leq m$. It
follows that the system (\ref{eq:eietj}) becomes:
\begin{equation}
\begin{cases}
\sum_{p=1}^{n}\gamma_{ijp}\alpha_{p} & =\tfrac{1}{2}\alpha_{i}\widetilde{\beta}_{j},\medskip\\
\sum_{q=1}^{m}\widetilde{\gamma}_{ijq}\widetilde{\beta}_{p} & =\tfrac{1}{2}\alpha_{i}\widetilde{\beta}_{j},\quad\left(1\leq i\leq n,1\leq j\leq m\right)
\end{cases}\label{eq:eietj-1}
\end{equation}
Conversely, if the system (\ref{eq:eietj-1}) admits a non-zero solution,
then we verifie without difficulty the map $\chi$ defined on $A$
by $\chi\left(e_{i}\right)=\alpha_{i}f$ and $\chi\left(\widetilde{e}_{i}\right)=\widetilde{\beta}_{i}m$
is a dibaric function.
\end{proof}

\section{Idempotents in gonosomal algebras}

In order to study the idempotents of a gonosomal algebra we introduce
the following definition:
\begin{defn}
Let $A$ be a gonosomal $K$-algebra with gonosomal basis $\left(e_{i}\right)_{1\leq i\leq n}\cup\left(\widetilde{e}_{j}\right)_{1\leq j\leq m}$.
For $x=\sum_{i=1}^{n}\alpha_{i}e_{i}+\sum_{i=1}^{m}\widetilde{\alpha}_{i}\widetilde{e}_{i}$
in $A$, the scalar $\mu\left(x\right)=\sum_{i=1}^{n}\alpha_{i}+\sum_{i=1}^{m}\widetilde{\alpha}_{i}$
is called mass of $x$.\end{defn}
\begin{prop}
For a gonosomal $K$-algebra admits an idempotent of mass $\mu\neq0$
it is necessary that the field $K$ contains the roots of $X^{2}-\left(\mu-1\right)^{2}+1$.\end{prop}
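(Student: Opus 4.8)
The plan is to track how the mass function interacts with multiplication by refining it into two partial masses. For $x=\sum_{i=1}^{n}\alpha_{i}e_{i}+\sum_{p=1}^{m}\widetilde{\alpha}_{p}\widetilde{e}_{p}$ I would set $a=\sum_{i=1}^{n}\alpha_{i}$ and $b=\sum_{p=1}^{m}\widetilde{\alpha}_{p}$, so that $\mu\left(x\right)=a+b$. The crucial input is the normalization built into the definition of a gonosomal algebra: each generator product satisfies $\mu\bigl(e_{i}\widetilde{e}_{p}\bigr)=\sum_{k=1}^{n}\gamma_{ipk}+\sum_{r=1}^{m}\widetilde{\gamma}_{ipr}=1$. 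Since $e_{i}e_{j}=\widetilde{e}_{p}\widetilde{e}_{q}=0$, only the mixed terms survive when squaring, and commutativity gives $x^{2}=2\sum_{i,p}\alpha_{i}\widetilde{\alpha}_{p}\,e_{i}\widetilde{e}_{p}$. Applying $\mu$ and using $\mu\bigl(e_{i}\widetilde{e}_{p}\bigr)=1$ then yields the clean identity $\mu\bigl(x^{2}\bigr)=2\bigl(\sum_{i}\alpha_{i}\bigr)\bigl(\sum_{p}\widetilde{\alpha}_{p}\bigr)=2ab$.

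Next I would impose idempotency. If $x$ is an idempotent of mass $\mu\neq0$, then $x^{2}=x$ forces $\mu\bigl(x^{2}\bigr)=\mu\left(x\right)$, which together with the previous computation produces the pair of equations
\[
a+b=\mu,\qquad 2ab=\mu.
\]
Thus $a$ and $b$ are two scalars of $K$ whose sum is $\mu$ and whose product is $\tfrac{\mu}{2}$; equivalently, $a$ and $b$ are the two roots in $K$ of the quadratic $t^{2}-\mu t+\tfrac{\mu}{2}=0$.

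Finally I would extract the arithmetic constraint on $K$. From $a,b\in K$ one computes $\left(a-b\right)^{2}=\left(a+b\right)^{2}-4ab=\mu^{2}-2\mu=\left(\mu-1\right)^{2}-1$, so the scalar $\left(\mu-1\right)^{2}-1$ is a square in $K$, namely the square of the element $a-b$. Hence $K$ contains $\pm\sqrt{\left(\mu-1\right)^{2}-1}$, which are precisely the roots of $X^{2}-\left(\mu-1\right)^{2}+1$, establishing the asserted necessary condition. There is no genuine obstacle here beyond getting the mass-of-a-square computation right; the only point deserving care is the first step, where the whole argument hinges on $\mu\bigl(e_{i}\widetilde{e}_{p}\bigr)=1$, after which everything reduces to the elementary symmetric functions of $a$ and $b$. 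The characteristic causes no trouble, since $\mu\neq0$ together with $2ab=\mu$ already forces $2$ to be invertible (in characteristic $2$ one would get $\mu=0$, contrary to hypothesis).
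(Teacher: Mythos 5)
Your proof is correct and follows essentially the same route as the paper: the paper sums the component equations of the idempotency system to obtain $2\bigl(\sum_{i}\alpha_{i}\bigr)\bigl(\sum_{j}\widetilde{\alpha}_{j}\bigr)=\mu$, which is exactly your identity $\mu\bigl(x^{2}\bigr)=2ab$, and then observes that the two partial sums are roots of a quadratic over $K$ with discriminant $\left(\mu-1\right)^{2}-1$. Your explicit computation of $\left(a-b\right)^{2}$ is just a slightly more spelled-out version of the paper's discriminant remark, and your aside on characteristic $2$ is a harmless bonus.
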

\begin{proof}
Let $A$ be a gonosomal $K$-algebra with gonosomal basis $\left(e_{i}\right)_{1\leq i\leq n}\cup\left(\widetilde{e}_{j}\right)_{1\leq j\leq m}$.
Let us suppose that there exists in $A$ an idempotent $e=\sum_{i=1}^{n}\alpha_{i}e_{i}+\sum_{i=1}^{m}\widetilde{\alpha}_{i}\widetilde{e}_{i}$
of mass $\mu\left(e\right)=\mu$. We have $e^{2}=e$ if and only if
we have $2\sum_{i,j}\alpha_{i}\widetilde{\alpha}_{j}e_{i}\widetilde{e}_{j}=e$,
which results in the system of quadratic equations with $n+m$ unknowns
$\alpha_{1},\ldots,\alpha_{n}$ and $\widetilde{\alpha}_{1},\ldots,\widetilde{\alpha}_{m}$
:
\begin{equation}
\begin{cases}
2\sum_{i,j}\gamma_{ijp}\alpha_{i}\widetilde{\alpha}_{j}=\alpha_{p}, & 1\leq p\leq n,\\
2\sum_{i,j}\widetilde{\gamma}_{ijq}\alpha_{i}\widetilde{\alpha}_{j}=\widetilde{\alpha}_{q}, & 1\leq q\leq m.
\end{cases}\label{eq:syst-idempotent}
\end{equation}
If we make the sum of all equations of this system we obtain: 
\[
2\left(\sum_{i=1}^{n}\alpha_{i}\right)\left(\sum_{j=1}^{m}\widetilde{\alpha}_{j}\right)=\sum_{p=1}^{n}\alpha_{p}+\sum_{q=1}^{m}\widetilde{\alpha}_{q}=\mu,
\]
in other words $\sum_{p=1}^{n}\alpha_{p}$ and $\sum_{q=1}^{m}\widetilde{\alpha}_{q}$
are roots of $2X^{2}-2\mu X+\mu$ whose discriminant is $\mu^{2}-2\mu=\left(\mu-1\right)^{2}-1$.
\end{proof}
In the particular case of a population with a dominant male where
a single male involved in reproduction, we can specify the set of
idempotents. This case was studied in \cite{LLR-14} and \cite{LR-arx}
for evolution algebras of the bisexual population, for gonosomal algebras
we have:
\begin{prop}
Let $A$ be a gonosomal algebra with gonosomal basis $\left(e_{i}\right)_{1\leq i\leq n}\cup\left(\widetilde{e}\right)$
where $e_{i}\widetilde{e}=\sum_{p=1}^{n}\gamma_{ip}e_{p}+\widetilde{\gamma}_{i}\widetilde{e}$.
Noting $\Gamma=\left(\gamma_{ip}\right)_{1\leq i,p\leq n}\in\mathbf{M}_{n}\left(K\right)$
and $\widetilde{\Gamma}=\left(\widetilde{\gamma}_{i}\right)_{1\leq i\leq n}\in\mathbf{M}_{1,n}\left(K\right)$,
the set of nonzero idempotents of $A$ is $\left\{ \left(\alpha,\widetilde{\alpha}\right)\in K^{n}\times K\right\} ,$
where 

\qquad{}$\widetilde{\alpha}\neq0$ is such that $\left(2\widetilde{\alpha}\right)^{-1}\in\mbox{Spec}\left(\Gamma\right)$, 

\qquad{}$\alpha\in K^{n}$ verify $\left(\Gamma-\left(2\widetilde{\alpha}\right)^{-1}I_{n}\right)\alpha^{T}=0$
and $\widetilde{\Gamma}\alpha^{T}=\tfrac{1}{2}$. \end{prop}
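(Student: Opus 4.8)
The plan is to write a general element $x=\sum_{i=1}^{n}\alpha_{i}e_{i}+\widetilde{\alpha}\widetilde{e}$, expand $x^{2}$ using the gonosomal relations, and read off $x^{2}=x$ coefficientwise. Since the algebra is commutative with $e_{i}e_{j}=0$ and $\widetilde{e}\,\widetilde{e}=0$, the only surviving products are the cross terms, so
\[
x^{2}=2\widetilde{\alpha}\sum_{i=1}^{n}\alpha_{i}\left(e_{i}\widetilde{e}\right)=2\widetilde{\alpha}\sum_{p=1}^{n}\Bigl(\sum_{i=1}^{n}\gamma_{ip}\alpha_{i}\Bigr)e_{p}+2\widetilde{\alpha}\Bigl(\sum_{i=1}^{n}\widetilde{\gamma}_{i}\alpha_{i}\Bigr)\widetilde{e}.
\]
Comparing with $x=\sum_{p}\alpha_{p}e_{p}+\widetilde{\alpha}\widetilde{e}$ gives the scalar system $2\widetilde{\alpha}\sum_{i}\gamma_{ip}\alpha_{i}=\alpha_{p}$ for $1\leq p\leq n$ and $2\widetilde{\alpha}\sum_{i}\widetilde{\gamma}_{i}\alpha_{i}=\widetilde{\alpha}$, which in matrix notation read $2\widetilde{\alpha}\,\Gamma\alpha^{T}=\alpha^{T}$ and $2\widetilde{\alpha}\,\widetilde{\Gamma}\alpha^{T}=\widetilde{\alpha}$.

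For the necessity direction I would suppose $x$ is a nonzero idempotent. First I would observe that $\widetilde{\alpha}\neq0$: the whole of $x^{2}$ carries the factor $\widetilde{\alpha}$, so if $\widetilde{\alpha}=0$ then $x=x^{2}=0$, a contradiction. Dividing the last equation by $\widetilde{\alpha}$ yields the normalization $\widetilde{\Gamma}\alpha^{T}=\tfrac{1}{2}$; in particular $\alpha\neq0$, since otherwise this would read $0=\tfrac{1}{2}$. With $\widetilde{\alpha}\neq0$ the first $n$ equations become $\Gamma\alpha^{T}=(2\widetilde{\alpha})^{-1}\alpha^{T}$, i.e. $\bigl(\Gamma-(2\widetilde{\alpha})^{-1}I_{n}\bigr)\alpha^{T}=0$. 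As $\alpha^{T}\neq0$, this exhibits a genuine eigenvector of $\Gamma$ for the eigenvalue $(2\widetilde{\alpha})^{-1}$, whence $(2\widetilde{\alpha})^{-1}\in\mbox{Spec}(\Gamma)$. These are exactly the three stated conditions.

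The converse is a direct verification: if $\widetilde{\alpha}\neq0$, $\bigl(\Gamma-(2\widetilde{\alpha})^{-1}I_{n}\bigr)\alpha^{T}=0$ and $\widetilde{\Gamma}\alpha^{T}=\tfrac{1}{2}$ all hold, then reading the two displayed systems backwards gives $x^{2}=x$, and $\widetilde{\alpha}\neq0$ ensures $x\neq0$. I expect no serious obstacle here, the computation being routine; the only points that genuinely require care are the justification that $\widetilde{\alpha}\neq0$ and $\alpha\neq0$ (so that the spectral condition is actually forced and not vacuous), and keeping the index/transpose convention for $\Gamma$ consistent between the coefficient computation $\sum_{i}\gamma_{ip}\alpha_{i}$ and the matrix statement $\Gamma\alpha^{T}$.
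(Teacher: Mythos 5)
Your proposal is correct and follows essentially the same route as the paper: expand $x^{2}$ using the gonosomal relations to obtain the quadratic system $2\widetilde{\alpha}\sum_{i}\gamma_{ip}\alpha_{i}=\alpha_{p}$, $2\widetilde{\alpha}\sum_{i}\widetilde{\gamma}_{i}\alpha_{i}=\widetilde{\alpha}$, dispose of the case $\widetilde{\alpha}=0$, and convert the remaining equations into the eigenvalue condition together with the normalization $\widetilde{\Gamma}\alpha^{T}=\tfrac{1}{2}$. Your derivation of $\alpha\neq0$ directly from $\widetilde{\Gamma}\alpha^{T}=\tfrac{1}{2}$ is in fact a slightly cleaner justification of the spectral condition than the paper's "otherwise the system has no solution", and your caution about the $\Gamma$ versus $\Gamma^{T}$ convention is well placed, since the coefficient computation naturally yields $\alpha\Gamma=(2\widetilde{\alpha})^{-1}\alpha$ rather than $\Gamma\alpha^{T}=(2\widetilde{\alpha})^{-1}\alpha^{T}$, a discrepancy that affects only the eigenvector equation and not the spectrum.
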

\begin{proof}
We have $n\geq2$ and $m=1$, let $x=\sum_{i=1}^{n}\alpha_{i}e_{i}+\widetilde{\alpha}\widetilde{e}$
be an idempotent of $A$, with this the system (\ref{eq:syst-idempotent})
is written: 
\[
\begin{cases}
2\widetilde{\alpha}\sum_{i=1}^{m}\gamma_{ip}\alpha_{i}=\alpha_{p}, & 1\leq p\leq n,\\
2\widetilde{\alpha}\sum_{i=1}^{m}\widetilde{\gamma}_{i}\alpha_{i}=\widetilde{\alpha}.
\end{cases}\mbox{\qquad\ensuremath{\left(*\right)}}
\]

If $\widetilde{\alpha}=0$ we have $\alpha_{1}=\ldots=\alpha_{n}=0$.

If $\widetilde{\alpha}\neq0$, puting $\Gamma=\left(\gamma_{ip}\right)_{1\leq i,p\leq n}\in\mathbf{M}_{n}\left(K\right)$,
$\widetilde{\Gamma}=\left(\widetilde{\gamma}_{i}\right)_{1\leq i\leq n}\in\mathbf{M}_{1,n}\left(K\right)$
and $\alpha=\left(\alpha_{1},\ldots,\alpha_{n}\right)$, the above
linear system becomes:
\begin{eqnarray*}
\left(\Gamma-\left(2\widetilde{\alpha}\right)^{-1}I_{n}\right)\alpha^{T}=0, &  & \widetilde{\Gamma}\alpha^{T}=\tfrac{1}{2}\quad\left(*\right).
\end{eqnarray*}
We have $\mbox{det}\left(\Gamma-\left(2\widetilde{\alpha}\right)^{-1}I_{n}\right)=0$,
otherwise the system $\left(*\right)$ has no solution, therefore
$\left(2\widetilde{\alpha}\right)^{-1}\in Spec\left(\Gamma\right)$
and $\alpha^{T}\in\ker\left(\Gamma-\left(2\widetilde{\alpha}\right)^{-1}I_{n}\right)$. 
\end{proof}
\bigskip{}

A presentation of a part of this work was given to \textquotedblleft VI
International Conference on non associative algebra and its applications\textquotedblright{}
(Zaragoza, november 2011). \bigskip{}

\end{document}